\newtheorem{problem}{Problem}
\newtheorem{theorem}{Theorem}
\newtheorem{lemma}{Lemma}
\newtheorem{example}{Example}
\newcommand{\cheng}{}
\newcommand{\chengB}{}
\newcommand{\Yui}{}
\newcommand{\chengC}{}
\newcommand{\chengD}{\color{red}}
\newcommand{\YuiR}{\color{cyan}}
\newcommand{\YuiRR}{\color{cyan}}
\newcommand{\kaixin}{\color{brown}}
\newcommand{\laks}[1]{\textcolor{magenta}{#1}}
\newcommand{\eat}[1]{}
\newcommand{\revision}{}
\newcommand{\chengC}{}
\newcommand{\chengD}{}
\newcommand{\chengE}{}
\newcommand{\YuiR}{}
\newcommand{\kaixin}{}
\newcommand{\laks}{}
\newcommand{\YuiRR}{}
  \providecommand\BibTeX{{%
    \normalfont B\kern-0.5em{\scshape i\kern-0.25em b}\kern-0.8em\TeX}}}
\begin{document}
\sloppy
\title{Fast Maximum Common Subgraph Search: A Redundancy-Reduced Backtracking Approach}

\author{Kaiqiang Yu}
\orcid{0000-0003-1153-2902}
\affiliation{%
  \institution{Nanyang Technological University}
  \country{Singapore}
}
\email{kaiqiang002@e.ntu.edu.sg}

\author{Kaixin Wang}
\orcid{0000-0002-6650-2850}
\affiliation{%
  \institution{Beijing University of Technology}
  \city{Beijing}
  \country{China}
}
\email{kaixin001@e.ntu.edu.sg}

\author{Cheng Long}
\orcid{0000-0001-6806-8405}
\authornote{Corresponding author (c.long@ntu.edu.sg)}
\affiliation{%
  \institution{Nanyang Technological University}
  \country{Singapore}
}
\email{c.long@ntu.edu.sg}

\author{Laks Lakshmanan}
\orcid{0000-0002-9775-4241}
\affiliation{%
  \institution{The University of British Columbia}
  \city{Vancouver}
  \country{Canada}
}
\email{laks@cs.ubc.ca}

\author{Reynold Cheng}
\orcid{0000-0002-9480-9809}
\affiliation{%
  \institution{The University of Hong Kong}
  \city{Hong Kong}
  \country{China}
}
\email{ckcheng@cs.hku.hk}


\begin{abstract} 
    \laks{Given two input graphs, finding the largest subgraph that occurs in both, i.e., finding the maximum common subgraph,} is a fundamental operator for evaluating the similarity between two graphs in graph data analysis. Existing works for solving the problem are of either theoretical or practical interest, but not both. {\chengB Specifically}, the algorithms with a theoretical guarantee on the running time are known to be not practically efficient; algorithms following the recently proposed backtracking framework called \texttt{McSplit}, run fast in practice but do not have any theoretical guarantees. In this paper, we propose a new backtracking algorithm called \texttt{RRSplit}, which at once achieves better practical efficiency and provides a non-trivial theoretical guarantee on the worst-case running time.  {\chengB To achieve} the former, we develop a series of reductions and upper bounds for reducing redundant computations, i.e., the time  for exploring some unpromising branches \laks{of exploration} that hold no maximum common subgraph. {\chengB To achieve} the latter, we formally prove that \texttt{RRSplit} incurs a worst-case time complexity which matches {\chengB the best-known complexity for the problem.} Finally, we conduct extensive experiments on {\chengE several} benchmark graph collections, and the results demonstrate that our algorithm outperforms the practical state-of-the-art by several orders of {\chengB magnitude}.
\end{abstract}


\begin{CCSXML}
<ccs2012>
   <concept>
       <concept_id>10002950.10003624.10003633.10010917</concept_id>
       <concept_desc>Mathematics of computing~Graph algorithms</concept_desc>
       <concept_significance>500</concept_significance>
       </concept>
   <concept>
       <concept_id>10002951.10003317.10003325</concept_id>
       <concept_desc>Information systems~Information retrieval query processing</concept_desc>
       <concept_significance>300</concept_significance>
       </concept>
 </ccs2012>
\end{CCSXML}

\ccsdesc[500]{Mathematics of computing~Graph algorithms}
\ccsdesc[500]{Information systems~Information retrieval query processing}

\keywords{Graph similarity search; maximum common subgraph; subgraph matching; subgraph isomorphism}

\maketitle

\section{Introduction}
\label{sec:intro}

Graphs have been increasingly adopted to capture the relationships among entities in various domains, including social media, biological networks, communication networks, collaboration networks  etc. As a result, graph data analysis has gained great attention in the recent years. 
One of the most fundamental problems in graph analysis is \emph{maximum common subgraph search}, which is widely used to measure the similarity of two graphs~\cite{mcgregor1982backtrack,mccreesh2016clique,vismara2008finding,zhoustrengthened,liu2020learning,liu2023hybrid,mccreesh2017partitioning,choi2012efficient,rutgers2010approximate,xiao2009generative,zanfir2018deep,bai2021glsearch}. 
More precisely, a common subgraph between two graphs $Q$ and $G$ refers to a subgraph  that  appears in both $Q$ and $G$. 
\laks{Conceptually, it is defined by {\chengB a pair of subgraphs, $q = (V_q, E_q)$ of $Q$ and  $g = (V_g, E_g)$ of $G$, and a bijection mapping $\phi: V_q\rightarrow V_{g}$} such that $q$ and $g$ are \emph{isomorphic}  under $\phi$, which we denote by $\langle q,g,\phi \rangle$.
Given two graphs $Q$ and $G$, the problem asks for the common subgraph $\langle q,g,\phi \rangle$ with the maximum number of vertices in $q$ (equiv. $g$).} 

The maximum common subgraph search problem 
{\cheng has many applications} 
across various disciplines, 
{\chengB 
{\cheng and} 
has been widely studied~\cite{abu2014maximum,levi1973note,krissinel2004common,suters2005new,mcgregor1982backtrack,mccreesh2016clique,vismara2008finding,zhoustrengthened,liu2020learning,liu2023hybrid,mccreesh2017partitioning,choi2012efficient,rutgers2010approximate,xiao2009generative,zanfir2018deep,bai2021glsearch}}. 
To be specific, it offers an operator for evaluating the similarity between graphs in graph database systems~\cite{yan2005substructure} and thus 
has found a wide range of real applications, including cheminformatics~\cite{antelo2020maximum,schmidt2020disconnected}, communication networks~\cite{nirmala2016vertex}, software analysis~\cite{park2011deriving,sun2021effective},  biochemistry~\cite{bonnici2013subgraph,larsen2017cytomcs,ehrlich2011maximum}, and image segmentation~\cite{hati2016image}. 
{\YuiRR For example, the similarity between two molecules 
is calculated based on the maximum common subgraph between 
{\chengD them}
~\cite{ehrlich2011maximum}.}
{\chengC Therefore, in drug discovery and analysis, it is used}
to quickly identify a small group of compounds with similar substructures (which tend to {\YuiR preserve} similar properties) for further analysis, 
{\chengC so as to reduce}
the manual labor and shorten the cycle of discovery~\cite{ehrlich2011maximum}.
Besides, the maximum common subgraph search problem generalizes the well-studied \emph{subgraph matching} problem~\cite{bhattarai2019ceci,ullmann1976algorithm,sun2020rapidmatch,sun2020subgraph,shang2008taming,kim2023fast,han2013turboiso,han2019efficient,cordella2004sub,bi2016efficient,arai2023gup,jin2023circinus,sun2023efficient}. 
\eat{ 
In specific, given two graphs $Q$ and $G$, subgraph matching aims to find from one graph (say, data graph $G$) all embeddings of the other (say, query graph $Q$), i.e., subgraphs of $G$ that are isomorphic to a query graph $Q$. 
Here, an embedding of $Q$ in $G$ is a common subgraph (i.e., a pair of subgraph $p$ and $q$) satisfying the constraint $q=Q$.
{\cheng Note that the embedding of $Q$ in $G$ (if it exists) corresponds to the maximum common subgraph between $G$ and $Q$.}
We note that subgraph matching is too restrictive in some real applications due to the data quality issues and/or potential requirements of the fuzzy search (e.g., no result will be returned if such an embedding does not exist). 
%
{\cheng Therefore,}
we focus on finding the maximum common subgraph between two graphs in this paper.} 
\laks{More specifically, given a data graph $G$ and a query graph $Q$, subgraph matching seeks to find if there is an embedding\footnote{Not to be confused with graph embeddings.} from $Q$ to $G$, where an embedding means a 1-1 function from the nodes of $Q$ to those of $G$ which preserves edges, i.e., the embedding is a witness that $Q$ is isomorphic to a subgraph of $G$. Embedding or subgraph matching is a strong requirement. In real applications data quality is a real concern \cite{chiang2007coverage,balasundaram2011clique} and thus an embedding from $Q$ to $G$ may fail to exist, with subgraph matching yielding no results. In such circumstances, finding a maximum common subgraph is a graceful relaxation of the problem which may still yield useful results. {\YuiRR If the largest common subgraph is very similar to $Q$, it can serve as an approximation to the embedding of $Q$.} 
Given this, in this paper, we focus on finding the maximum common subgraph between two given graphs. }

\smallskip
\noindent\textbf{Challenges and {\chengC existing} methods}. Detecting the maximum common subgraph is quite  challenging as noted in the literature. Specifically, it is NP-hard~\cite{lewis1983michael} and is shown to be {\YuiR at least as hard to approximate as the maximum clique problem: it does not admit any $r$-approximate algorithm that runs in polynomial time (unless $P=NP$), for any $r\geq 1$~\cite{kann1992approximability}.}
%
%
{\Yui Existing works for solving the problem are of either theoretical or practical interest. On the one hand, some algorithms are designed 
{\chengB to improve theoretical time complexity}
~\cite{abu2014maximum,levi1973note,krissinel2004common,suters2005new}. {\revision Assume that $Q$ has the number of vertices no larger than $G$, i.e., $|V_Q|\leq |V_G|$.} They have gradually improved the worst-case time complexity from $O^*(1.19^{|V_Q||V_G|})$~\cite{levi1973note} to $O^*(|V_Q|^{(|V_G|+1)})$~\cite{krissinel2004common}, and to $O^*((|V_Q|+1)^{|V_G|})$~\cite{suters2005new}, 
{\chengB which is \laks{the} best-known worst-case time complexity for the problem.} Here, $O^*$ suppresses polynomials.
However, these algorithms are of theoretical interest only and not efficient in practice. {\YuiRR This is mainly because their theoretical results rely on some sophisticated data structures while maintaining them introduces a huge amount of time and/or memory overhead in practice, e.g., the intermediate graph structure built from $Q$ and $G$ has $|V_Q|\times |V_G|$ vertices and could be very large if $G$ is large (e.g., with million nodes).}
On the other hand, quite a few algorithms have been developed {\YuiR towards improving the practical performance}~\cite{levi1973note,mcgregor1982backtrack,mccreesh2016clique,vismara2008finding,zhoustrengthened,liu2020learning,liu2023hybrid,mccreesh2017partitioning}. They are all backtracking (also known as branch-and-bound) algorithms, among which the recent works~\cite{zhoustrengthened,liu2020learning,liu2023hybrid} are based on a newly proposed backtracking framework called \texttt{McSplit}~\cite{mccreesh2017partitioning}. \texttt{McSplit} recursively partitions the search space (i.e., the set of possible common subgraphs) to multiple sub-spaces via a process of branching. Each sub-space corresponds to a branch. 
%
%
Furthermore, \texttt{McSplit} uses the 
{\chengB upper bound on the size of common subgraphs that could be found within a branch} for reducing  redundant computations \laks{associated with exploring those branches that do not \laks{lead to finding}  the maximum common subgraph.} 
{\chengB Specifically, it prunes} those branches that have upper bounds no larger than the size of the largest common subgraph seen so far. However, \textit{these algorithms provide no {\chengE non-trivial} theoretical guarantee on the worst-case time complexity.}
}

{
We note that maximum common subgraph is closely related to a well-known graph similarity measure, namely graph edit distance (GED)~\cite{bunke1997relation}. The GED between two graphs $Q$ and $G$ is the  cost of the  least-cost edit path, i.e., a sequence of edit operations that transform $Q$ to $G$. Under a special cost function  which does not charge for edge insertion/deletion and charges an infinite cost for node/edge substitution, the GED computation has been shown to be equivalent to finding the maximum common subgraph~\cite{bunke1997relation}. However, recent state-of-the-art approaches~\cite{chen2019efficient,gouda2016csi_ged,piao2023computing,chang2020speeding,kim2023efficient} for computing GED \eat{cannot be used for providing an efficient solution to our problem since they study computing GED under} assume  different cost functions where the above equivalence does not hold. \eat{In specific, the cost function used in~\cite{bunke1997relation} ignores the costs for edge deletion/insertion and takes the costs for vertex/edge substitution as infinity, while those used in recent algorithms} Specifically, these cost functions do not ignore edge deletion/insertion costs and usually assign the same positive finite costs to various edit operators. Consequently, state of the art algorithms for GED cannot lead to efficient solutions for computing the maximum common subgraph. 

}

\smallskip
\noindent\textbf{Our method}. In this paper, we develop an efficient backtracking algorithm called \texttt{RRSplit}, which \laks{leverages} newly-designed reductions and new upper bounds for decreasing the redundant computations. 
{\revision {\chengE With} $|V_Q|\leq |V_G|$,} \texttt{RRSplit} achieves a worst-case time complexity of $O^*((|V_G|+1)^{|V_Q|})$, \laks{matching} 
{\chengB the best-known worst-case time complexity for the problem}~\cite{suters2005new}. 
We note that this theoretical result is remarkable since (1) the algorithm {\chengD with the best-known time complexity  in~\cite{suters2005new}} is of theoretical interest only and is not {\cheng practically} efficient and (2) {\chengD the algorithms with the best-known practical performance, i.e., \texttt{McSplit}~\cite{mccreesh2017partitioning}  and its variants,} do not have any theoretical guarantee on the worst-case time complexity.
{\chengB Specifically}, \texttt{RRSplit} combines the following two kinds of reductions: (i) {\chengB vertex-equivalence-based and (ii) maximality-based}, and a  vertex-equivalence-based upper bound \laks{that we establish}. 

Vertex-equivalence-based reductions reduce the redundant computations induced by \emph{common subgraph isomorphism (cs-isomorphism)}. 
{\YuiR
{\chengD Given} two common subgraphs $\langle q,g,\phi \rangle$ and $\langle q',g',\phi' \rangle$ of graphs $G$ and $Q$, they are said to be \textit{common subgraph isomorphic} (\textit{cs-isomorphic} for short) if and only if $q$ is isomorphic to $q'$ (or equivalently, $g$ is isomorphic to $g'$). All cs-isomorphic common subgraphs evidently share the same structural information, and exploring all of them is clearly redundant.
%
{\kaixin To reduce this redundancy, we take two sufficient conditions into consideration {\chengD when deciding whether we can prune a branch}. Specifically, } for {\chengC any} common subgraph $\langle q,g,\phi \rangle$ to be found in a branch, if there exists another that is cs-isomorphic to $\langle q,g,\phi \rangle$ (Condition 1) and has been found before (Condition 2), we can safely {\chengC prune the branch.}
To facilitate the reduction, 
we will leverage the \emph{vertex equivalence} property \cite{nguyen2019applications} and an \emph{auxiliary data structure} for verifying Condition 1 and Condition 2, respectively {\chengC (details in Section~\ref{subsec:VE-reduction})}.}

Maximality-based reductions capture the redundant computations induced by \emph{maximality}. Specifically, a common subgraph $\langle q,g,\phi\rangle$ is maximal if and only if there does not exist any other common subgraph $\langle q',g',\phi' \rangle$ such that $q$ and $g$ are subgraphs of $q'$ and $g'$, respectively. Therefore, the maximum common subgraph is a maximal common subgraph, and those branches that hold only non-maximal ones {\chengD would} incur redundant computations. To reduce them,  we observe one necessary condition for a branch to hold the largest common subgraphs {\chengC (details  in Section~\ref{subsec:maximality-reduction})}. As a result, we can safely prune those branches that \laks{violate} the condition.

{\YuiR
Furthermore, we leverage the vertex equivalence property to derive a new vertex-equivalence-based upper bound, which is tighter than the existing one~\cite{mccreesh2017partitioning} and thus can help to prune more branches 
{\chengC (details in Section~\ref{subsec:upper-bound}).}
}

\smallskip
\noindent\textbf{Contributions}. We make the following  contributions.
\begin{itemize}[leftmargin=*]
    \item We introduce vertex-equivalence-based reductions for reducing the redundant computation induced by cs-isomorphism (Section~\ref{subsec:VE-reduction}). We further propose  maximality-based reductions for pruning those branches that hold non-maximal common {\chengE subgraphs} only  (Section~\ref{subsec:maximality-reduction}). {\YuiR Finally, we develop a new vertex-equivalence-based upper bound for pruning more branches  (Section~\ref{subsec:upper-bound}}).
    
    \item We propose a new backtracking algorithm called \texttt{RRSplit}, which is based on the newly-designed reductions. It has a worst-case time complexity\eat{of $O^*((|V_G|+1)^{|V_Q|})$, {\chengB which is the same as}} \laks{that matches} the best-known time complexity (of the algorithms that are of theoretical interest only) (Section~\ref{subsec:summary}). 
    
    \item We conduct an extensive empirical evaluation on {\chengE several} benchmark graph collections. Our experiments reveal that our algorithm \texttt{RRSplit} runs several orders of {\cheng magnitude} faster than the state-of-the-art \texttt{McSplitDAL} on the majority of the tested input instances (Section~\ref{sec:exp}). 
\end{itemize}

Section~\ref{sec:preli} provides a formal statement of the problem studied. Section~\ref{sec:sota} reviews the existing framework \texttt{McSplit} and its state-of-the-art variant \texttt{McSplitDAL}. In Section~\ref{sec:related} we discuss related work and conclude the paper in Section~\ref{sec:conclusion}. {
\YuiR Due to space limits, we omit some proofs and sketch others, and all proofs can be found in the 
\ifx \CR\undefined
Appendix. 
\else
technical report~\cite{TR}. 
\fi
}

\section{Preliminaries}
\label{sec:preli}
In this paper, we focus on undirected and unweighted simple {\cheng graphs} without self-loops and parallel edges. {\cheng For ease of presentation, we focus on  graphs without vertex labels, but} 
our methods \eat{{\cheng to be introduced}} can be easily adapted to vertex-labeled graphs. Consider two graphs $Q=(V_Q,E_Q)$ and $G=(V_G,E_G)$\eat{, with vertex sets $V_Q$ and $V_G$ and edge sets $E_Q$ and $E_G$}. For simplicity, we let $u$ and $v$ (and their primed or index variants) denote a vertex in $Q$ and $G$ respectively. Given a vertex set $X\subseteq V_Q$, we use $Q[X]$ to denote the subgraph of $Q$ induced by $X$, i.e., $Q[X]=(X,\{(u,u')\in E_Q \mid u, u'\in X\})$. All subgraphs in this paper are induced subgraphs. We let $q = (V_q, E_q)$ denote an arbitrary induced subgraph of $Q$. 
Given $u\in V_Q$, we denote by $N(u,V_Q)$ (resp. $\overline{N}(u,V_Q)$) the set of neighbours (resp. non-neighbours) of $u$ in $Q[V_Q]$.  \eat{We have the symmetric definitions for a vertex set $Y$ and each vertex $v$ in $G$.} \laks{We use a similar notation for neighbours and non-neighbours of vertices in $G$.}  

We  review the definition of graph isomorphism for  simple graphs without labels.

\begin{definition}[Graph isomorphism~\cite{mcgregor1982backtrack}]
    \label{def:GS}
    $Q$ is said to be isomorphic to $G$ if and only if there exists a \textbf{bijection} $\phi: V_Q\rightarrow V_G$ such that
    \begin{equation}
    \label{eq:isomorphic}
        \forall u, u' \in V_Q: (u,u')\in E_Q \iff (\phi(u),\phi(u'))\in E_G.
    \end{equation}
\end{definition}

\laks{Note that} two isomorphic graphs are structurally equivalent, {\chengC and thus we have} $|V_Q|=|V_G|$ and $|E_Q|=|E_G|$. We next review induced subgraph isomorphism for  graphs. 

\begin{definition}[Induced subgraph isomorphism~\cite{mcgregor1982backtrack}]
    $Q$ is said to be (induced) subgraph isomorphic to $G$ if and only if there exists an \textbf{injection} $\phi: V_Q\rightarrow V_G$ such that
    \begin{equation}
        \forall u, u' \in V_Q: (u,u')\in E_Q {\YuiR \iff} (\phi(u),\phi(u'))\in E_G. 
    \end{equation}
\end{definition}

Notice that  induced subgraph isomorphism is a special case of  graph isomorphism. The injection mapping $\phi: V_Q\rightarrow V_G$ is also known as \emph{embedding} of $Q$ into $G$, {\chengC and thus we have} $|V_Q|\leq |V_G|$ and $|E_Q|\leq |E_G|$. The subgraph matching problem aims to find all embeddings of a small query graph $Q$ in a large data graph $G$. The common induced subgraph is defined as follows. 

\begin{definition}[Common induced subgraph~\cite{mcgregor1982backtrack}]
    \label{def:CIS}
    A common subgraph of $Q$ and $G$, {\cheng denoted by $\langle q,g,\phi \rangle$, is defined as} a \eat{set of vertex pairs} \laks{triple   consisting of an induced subgraph $q$ of $Q$, an induced subgraph $g$ of $G$, and a bijection $\phi: V_q\rightarrow V_g$, such that $q$ is isomorphic to $g$ under   $\phi$.} 
\end{definition}

By the size of a common subgraph $\langle q, g,\phi\rangle$ we mean the number of {\YuiR vertices in $q$ or $g$}. 
Clearly, the size of a common subgraph is at most $\min\{|V_Q|,|V_G|\}$. 
{\YuiR Sometimes, for  ease of presentation, we represent a common subgraph $\langle q,g,\phi \rangle$ by a set of vertex pairs $\{\langle u,\phi(u) \rangle \mid  u\in V_q\}$.}

\begin{example}
Consider the input graphs in Figure~\ref{fig:Input_graph}. The graphs $q := Q[u_1,u_2,u_3,u_4,u_7]$ and  $g := G[v_3,v_4,v_5,v_6,v_7]$ form a common subgraph with size 5 under the bijection $\phi:=\{u_1\!\rightarrow\! v_5,u_2\!\rightarrow\! v_6, u_3\!\rightarrow\! v_4, u_4\!\rightarrow\! v_3, u_7\!\rightarrow\! v_7\}$.
\end{example}
We are ready to formulate the problem studied in this paper.

\begin{problem}[Maximum Common Subgraph~\cite{lewis1983michael}]
    Given two graphs $Q$ and $G$, the Maximum Common Subgraph (MCS) problem aims to find the maximum common subgraph {\cheng of $Q$ and $G$}, i.e., a common subgraph  with the largest number of vertices.
\end{problem}

\eat{If {\chengB we further require} the size of found maximum common subgraph {\cheng to be} at least $|V_Q|$, then the MCS problem would reduce to  subgraph matching (i.e., {\cheng it finds} one embedding of $Q$ in $G$). Therefore, the MCS problem is a natural generalization of the subgraph matching problem. } 
\laks{Note that the MCS problem is a generalization of subgraph matching: there is a MCS between $Q$ and $G$ whose size is $|Q|$ iff $Q$ is isomorphic to a subgraph of $G$. It is well known that  MCS  is NP-hard~\cite{lewis1983michael} and is hard to approximate, i.e., {\YuiR there is no $r$-approximate PTIME algorithm for the problem for any $r\geq 1$~\cite{kann1992approximability}, \laks{unless P=NP}.} 
} 

\eat{ 
\smallskip
\noindent\textbf{Hardness.} We remark that the problem of finding the maximum common subgraph is NP-hard~\cite{lewis1983michael}. Besides, 
{\cheng the problem is}
hard to approximate, {\cheng e.g.,} it does not admit any $O(n^{\epsilon})$-approximate algorithm that runs in polynomial time (unless P=NP), where $\epsilon>0$ and $n$ is the size of an input instance~\cite{kann1992approximability}.  
} 

\begin{figure}[]
		\subfigure[\textsf{Input graph $Q$}]{
			\includegraphics[width=2.7cm]{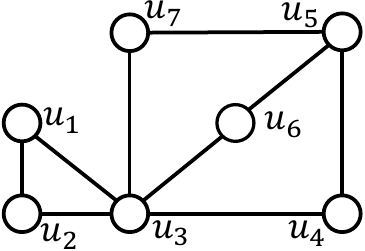}
		}
        \hspace{0.4in}
		\subfigure[\textsf{Input graph $G$}]{
			\includegraphics[width=2.7cm]{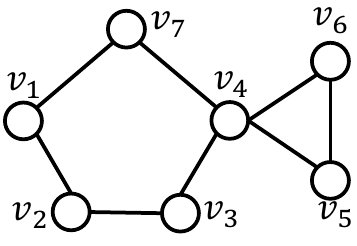}
		}
  \vspace{-0.2in}
	\caption{Input graphs used throughout the paper}
	\label{fig:Input_graph}
 \vspace{-0.2in}
\end{figure}
\section{The Basic Framework: \texttt{McSplit}}
\label{sec:sota}

{\cheng \noindent\textbf{Overview.} \texttt{McSplit},  a \emph{backtracking} (aka \textit{branch-and-bound}) algorithm, \eat{also known as \emph{branch-and-bound} algorithm,} has been widely adopted for the MCS problem in recent years and has achieved the state-of-the-art performance in practice~\cite{bai2021glsearch,liu2020learning,zhoustrengthened,liu2023hybrid}.}
The \laks{key idea} of \texttt{McSplit} is to recursively expand a partial solution $S$ (which is {\cheng the largest} common subgraph {\cheng seen so far}) via a process of \emph{branching}. Specifically, the branching process partitions the current problem instance of finding the maximum common subgraph into several subproblem instances. Each problem instance {\cheng corresponds} to a \emph{branch} {\cheng and} is denoted by $(S,C)$. Here, $S$ is a \emph{partial solution} {\Yui (i.e., set of vertex pairs)} and $C$ is the \emph{candidate set} consisting of \emph{candidate pairs} (i.e., $\langle u, v\rangle$) used to expand the partial solution $S$. Solving the instance or branch $(S,C)$ means finding the largest common subgraph $S^*$ in the branch; a common subgraph is said to be in a branch $(S,C)$ if and only if \emph{it contains $S$ and is within the set $S\cup C$, {\Yui i.e., $S\subseteq S^*\subseteq S\cup C$}}.
{\Yui Given two vertex subsets $V_q\subseteq V_Q$ and $V_g\subseteq V_G$, we consider all pairs of   vertices from $V_q$ and $V_g$, i.e.,
$V_q\times V_g = \{\langle u,v \rangle\mid u\in V_q, v\in V_g\}$.
%
%
Note that solving the initial branch $(\emptyset,V_Q\times V_G)$ finds the largest common subgraph of $Q$ and $G$.}

To solve a branch $(S,C)$, the branching process selects a vertex $u$ {\cheng appearing} in $C$ as a \emph{branching vertex},  and then creates two groups of new sub-branches by either including $u$ into the solution \eat{(this corresponds to the first one)} or discarding $u$ from the candidate set {\Yui and thus also from the solution}. \eat{(this corresponds to the second one).} Specifically, \textbf{in the first group}, each formed branch includes into $S$ one candidate pair containing $u$ and excludes from $C$ all \laks{pairs} containing $u$ (note that a common subgraph has each vertex {\cheng appearing} in at most one pair); consequently, for each candidate pair that contains $u$, i.e., $\langle u,v \rangle$, \laks{we form a new branch corresponding to} $(S\cup\{\langle u,v \rangle\}, C\backslash u\backslash v)$,  
%
%
where $C\backslash u\backslash v$ denotes the set obtained by removing from $C$ all  candidate pairs  containing $u$ or $v$, formally,
\begin{equation}
    C\backslash u\backslash v:= C\backslash ((\{u\}\times V_g) \cup ( V_q\times\{v\})).
\end{equation}
\textbf{In the second group}, we form only one branch by excluding from $C$ all candidate pairs containing $u$, {\cheng i.e.,} $(S,C\backslash u)$. Clearly, \eat{solving all the formed branches solves} the solution to  $(S,C)$ \eat{(since its solution} \laks{is the largest one among those found {\cheng from} the branches above.} We illustrate this next. 
\laks{
\begin{example}
\label{ex:split} 
Consider \eat{an example of {\cheng the} {\chengB branching} process on} the given pair of  input graphs in  Figure~\ref{fig:Input_graph}.  The splitting process is partially depicted in Figure~\ref{fig:example_branching} (\laks{ignore the ``$D$'' terms in the figure for now)}). For the initial branch $B_0=(\emptyset,\{u_1,u_2,...,u_7\}\times \{v_1,v_2,...,v_7\} )$, McSplit  selects the branching vertex $u_1$, and then creates the first group of branches  $B_i=(\{\langle u_1,v_i \rangle\},\{u_2,...,u_7\}\times (\{v_1,v_2,...,v_7\}\backslash \{v_i\}))$ for $1\leq i\leq 7$, each of which includes one candidate pair $\langle u_1,v_i \rangle$ into the solution, and the second group of a single branch, namely $B_8=(\emptyset,\{u_2,...,u_7\}\times \{v_1,v_2,...,v_7\})$, which excludes $u_1$ from the solution.
\end{example} 
}

To improve the efficiency, \laks{McSplit}  further applies a \emph{reduction rule} and an \emph{upper-bound-based pruning} rule for a newly formed branch $(S,C)$. Specifically, the reduction rule {\cheng removes} from the candidate set $C$ those candidate pairs $\langle u,v\rangle$ that cannot form a common subgraph with $S$, i.e., $S\cup \{\langle u,v \rangle$\} cannot be a common subgraph, thus narrowing the search space: note that any supergraph of a non-common subgraph cannot be a common subgraph and thus we can remove them safely. The upper-bound-based pruning rule {\cheng computes} an upper bound on the \laks{size of the} largest common subgraph in the branch and {\cheng prunes} the branch if the upper bound is no larger than the \laks{size of the largest common subgraph found so far} {\YuiR (details will be discussed in Section~\ref{subsec:upper-bound})}. 
{\chengB Below, \laks{we formally state the}  reduction rule.}
\begin{figure}[]
		\includegraphics[width=0.45\textwidth]{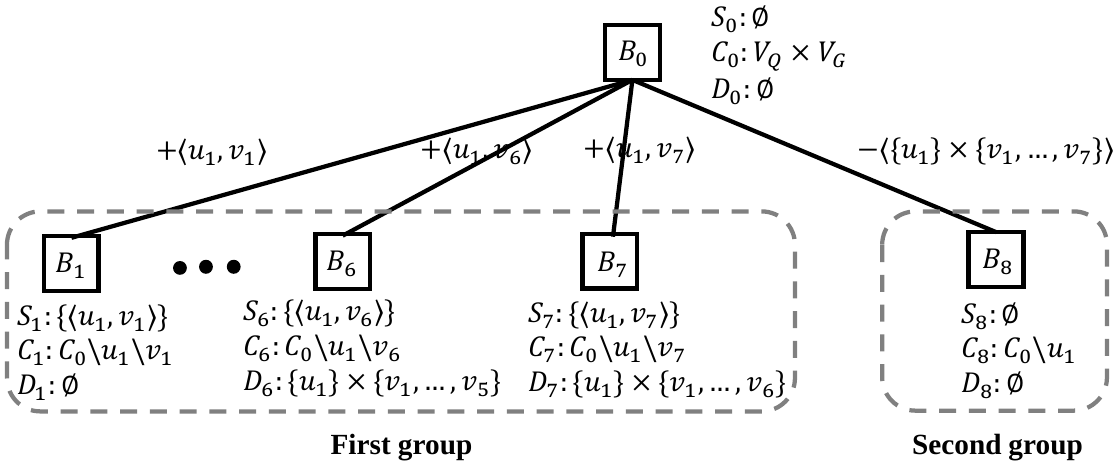}
  \vspace{-0.15in}
	\caption{Illustrating the backtracking process (``+'' means to {\chengB move} vertex pairs from $C$ to $S$ and ``-'' means to remove vertex pairs from $C$)}
 \vspace{-0.2in}
	\label{fig:example_branching}
\end{figure}

{\Yui
\noindent\textbf{Reduction {\chengC rule}.} \emph{Consider a branch $(S,C)$. A candidate pair $\langle u,v \rangle$ in $C$ cannot form any common subgraph with $S$ if there exists a vertex pair $\langle u',v' \rangle$ in $S$ such that $u$ and $v$ are not simultaneously adjacent or non-adjacent to $u'$ and $v'$, respectively.} 

\laks{The soundness of this rule} \eat{can be verified based on} \laks{immediately follows from}  Definition~\ref{def:CIS}. 
{
%
We note that the above reduction rule can be applied in a recursive way and \emph{the refined candidate set $C$ can be split as several subsets, i.e., $C=X_1\times Y_1 \cup ... \cup X_c \times Y_c$ where $c$ is a positive integer and $X_i$ and $X_j$ (resp. $Y_i$ and $Y_j$) are disjoint}. We show this as follows.
More precisely, starting from the basis case of the initial branch $B_0$ with $S_0=\emptyset$ and $C_0= V_Q\times V_G $, none of the candidate pairs in $C_0$ can be pruned by the reduction rule since $S_0$ is empty. Then, consider the recursive case of a branch $B=(S,C)$ with $C=X_1\times Y_1 \cup ... \cup X_c \cup Y_c $ where $c$ is a positive integer and $X_i$ and $X_j$ (resp. $Y_i$ and $Y_j$) are disjoint for $1\leq i\leq c$. For one sub-branch $(S\cup\{\langle u,v\rangle\},C\backslash u\backslash v)$ formed in the first group by including $\langle u,v\rangle$ to $S$, the candidate set $C\backslash u \backslash v$ can be refined as
\begin{eqnarray}
    \label{eq:update_candidate_set}
    \bigcup_{i=1}^c  N(u,X_i)\times N(v,Y_i)  \cup  \overline{N}(u,X_i\backslash \{u\})\times \overline{N}(v, Y_i\backslash\{v\}). 
\end{eqnarray}
since those vertex pairs in $N(u,X_i)\times \overline{N}(v, Y_i\backslash\{v\})  \cup  \overline{N}(u,X_i\backslash \{u\})\times N(v,Y_i)$ can be pruned by the above reduction. Clearly, $N(u,X_i)$ and $\overline{N}(u,X_i\backslash \{u\})$ (resp. $N(v,Y_i)$ and $\overline{N}(v, Y_i\backslash\{v\}$) are disjoint for $1\leq i\leq c$. For one sub-branch $(S,C\backslash u)$ formed in the second group, none of the candidate pairs in $C\backslash u$ can be pruned by the reduction rule since $S$ remains unchanged. Suppose $u\in X_i$, then the refined candidate set $C\backslash u$ can be represented by $X_1\times Y_1\cup ... \cup (X_i\backslash\{u\})\times Y_i \cup ...\cup X_c\times Y_c$ where any two subsets are disjoint as well.

We remark that all candidate sets $C$ mentioned in the following sections refer to the ones refined by the reduction rule and thus can be represented by $C=X_1\times Y_1\cup ...\cup X_c\times Y_c$, \laks{for some $c$.} Given this, we define
\begin{equation}
    \mathcal{P}(C)=\{X_i\times Y_i \mid 1\leq i \leq c\}.
\end{equation}

}

\if 0
We note that the above reduction rule can be applied in a \emph{recursive} way. More precisely, for the initial branch $B_0$ with $S_0=\emptyset$ and $C_0= V_Q\times V_G $, \laks{none of the candidate pairs in $C_0$ can  be pruned} by the reduction rule since $S_0$ is empty. Consider an immediate sub-branch of $B_0$ which is formed by including candidate pair $\langle u,v \rangle$ into the partial solution. Those candidate pairs in $ N(u,V_Q)\times \overline{N}(v,V_G)$ and \laks{those in} $\overline{N}(u,V_Q)\times N(v,V_G)$ can be pruned by the reduction rule. As a result, the refined candidate set is $ N(u,V_Q)\times N(v,V_G) \cup  \overline{N}(u,V_Q\backslash\{u\})\times \overline{N}(v,V_G\backslash\{v\}) $, which is \emph{split} as two subsets. 
In general, for a branch $(S,C)$, the refined candidate set $C$ consists of at most $2^{|S|}$ disjoint subsets, i.e., $C=X_1\times X_2\cup\cdots\cup X_c\times Y_c$ where $1\leq c\leq 2^{|S|}$. For a sub-branch of $(S,C)$ which is formed by {\chengB moving} a candidate pair $\langle u,v \rangle$ from $C$ to $S$, i.e., $(S\cup\{\langle u,v \rangle\},C\backslash u \backslash v,D)$, the candidate set $C\backslash u \backslash v$ can be refined as
\begin{eqnarray}
    \label{eq:update_candidate_set}
    \bigcup_{i=1}^c  N(u,X_i)\times N(v,Y_i)  \cup  \overline{N}(u,X_i\backslash \{u\})\times \overline{N}(v, Y_i\backslash\{v\}). 
\end{eqnarray}
\fi
\begin{example}
    \label{exp:branching}
   Consider branch $B_6$ in the first group in Figure~\ref{fig:example_branching}. We have $X_1=N(u_1,V_Q)=\{u_2,u_3\}$, $X_2=\overline{N}(u_1,V_Q\backslash\{u_1\})=\{u_4,u_5,u_6,u_7\}$, $Y_1=\{v_4,v_5\}$ and $Y_2=\{v_1,v_2,v_3,v_7\}$. Therefore, the candidate set becomes $\{u_2,u_3\}\times \{v_4,v_5\}\cup\{u_4,u_5,u_6,u_7\}\times \{v_1,v_2,v_3,v_7\}$. Then, consider a sub-branch of $B_6$ formed by further including $\langle u_7,v_7 \rangle$. We can deduce the candidate set by splitting $X_1\times Y_1 $ to $\{u_3\}\times \{v_4\} \cup \{u_2\}\times \{v_5\} $ and splitting $ X_2\times Y_2$ to $\{u_5\}\times \{v_1\}\cup  \{u_4,u_6\}\times \{v_2,v_3\}$.
\end{example}
}

\noindent\textbf{\laks{Algorithm Outline.}} We summarize the details of \texttt{McSplit} in Algorithm~\ref{alg:mcsplit}. It maintains the currently found largest common subgraph $S^*$ (Line 4) and terminates the branch by upper-bound-based pruning (Line 5). Besides, it branches by selecting (from $\mathcal{P}(C)$) a vertex $u$ and the corresponding subset $X\times Y$, called \emph{branching subset}, that $u$ belongs to (Line 6, note that all candidate pairs {\cheng containing} $u$ are within $X\times Y$), and creates two groups of branches as discussed before (Lines 8-12). 
For the first group, the ordering of formed branches depends on that of the candidate pairs to be included into $S$, which is specified by a policy (Line 9).
We note that \texttt{McSplit} \textit{adopts heuristic policies for selecting $X\times Y$, $u$, and $v$}. 

\begin{algorithm}[t]
\small
\caption{An existing framework: \texttt{McSplit}~\cite{mccreesh2017partitioning}}
\label{alg:mcsplit}
\KwIn{Two graphs $Q=(V_Q,E_Q)$ and $G=(V_G,E_G)$}
\KwOut{The maximum common subgraph}
$S^*\leftarrow \emptyset$; \tcp{Global variable}
\texttt{McSplit-Rec}$(\emptyset,V_Q\times V_G)$; \textbf{Return} $S^*$;\\
\SetKwBlock{Enum}{Procedure \texttt{McSplit-Rec}$(S,C)$}{}
\Enum{
    \lIf{$|S|>|S^*|$}{$S^*\leftarrow S$}
    \tcc{Termination}
    \lIf{$C=\emptyset$ or the upper bound is no larger than $|S^*|$}{\textbf{return}}
    \tcc{Branching}
    {\Yui Select a branching vertex $u$ and a branching subset $X\times Y$ from $\mathcal{P}(C)$  based on a policy\;}
    $Y_{temp}\leftarrow Y$\;
    \tcc{First group: branches formed by including $u$}
    \For{$i=1,2,...,|Y|$}{
        Select and {\chengB move} a vertex $v$ from $Y_{temp}$ based on a policy\;
        Create a candidate set $C_i$ based on $\langle u,v\rangle$ and Equation~(\ref{eq:update_candidate_set})\;
        \texttt{McSplit-Rec}($S\cup\{\langle u,v\rangle\},C_i$)\;
    }
    \tcc{Second group: one branch formed by excluding $u$}
    \texttt{McSplit-Rec}($S,C\backslash u$)\;
}
\end{algorithm}


{\cheng Existing algorithms that are based on \texttt{McSplit} differ in the strategies of optimizing} the policies of selecting vertices in  line 6 and line 9 (e.g., via some learning-based techniques) to find the largest  \laks{common}  subgraph as soon as possible during the recursion~\cite{zhoustrengthened,liu2020learning,liu2023hybrid}. 
{\Yui However, these algorithms (1) provide no theoretical guarantee on the worst-case time complexity and (2) still suffer from efficiency issues in practice due to  significant redundant computations.}
%
%

\section{Redundancy-Reduced Splitting: \texttt{RRSplit}}
\label{sec:RRSplit}

In this part, we present our backtracking algorithm called \texttt{RRSplit}. 
First, we propose a vertex-equivalence-based reduction for pruning those redundant branches that hold all common subgraphs cs-isomorphic to {\chengC one that has been already found} (Section~\ref{subsec:VE-reduction}).
Second, we introduce a newly-designed maximality-based reduction for pruning those redundant branches that hold only non-maximal common subgraphs (Section~\ref{subsec:maximality-reduction}). 
{\YuiR Third, we develop a new vertex-equivalence-based upper bound on the size of common subgraphs that can be found in a branch for further pruning those branches that hold only small common subgraphs (Section~\ref{subsec:upper-bound}).}
\laks{Finally, we summarize the \texttt{RRSplit} algorithm, which is based on the above \eat{carefully-designed} reductions, and analyze its worst-case time complexity (Section~\ref{subsec:summary}). In particular, we show \texttt{RRSplit} has a worst-case time complexity of $O^*((|V_G|+1)^{|V_Q|})$, matching the best-known worst-case time complexity of the state of the art. We will later show (Section~\ref{sec:exp}) that unlike the state of the art,  \texttt{RRSplit} is very efficient in practice.}

\subsection{Vertex-Equivalence-based Reduction}
\label{subsec:VE-reduction}
We first introduce the concept of \emph{common subgraph isomorphism (cs-isomorphism)}.
\begin{definition}[Common subgraph isomorphism] \laks{Consider two common subgraphs $\langle q,g,\phi \rangle$ and $\langle q',g',\phi' \rangle$ of graphs $G$ and $Q$. They are} 
    \eat{$\langle q,g,\phi \rangle$ is } said to be common subgraph isomorphic (cs-isomorphic) \eat{to $\langle q',g',\phi' \rangle$} if and only if $q$ is {\chengC isomorphic} to $q'$ (or equiv., $g$ is {\chengC isomorphic} to $g'$).
\end{definition}

\laks{All cs-isomorphic common subgraphs evidently  share the same structural information, and exploring all of them is clearly redundant.}  
\eat{{\chengC thus it would introduce redundancy if we explore both of them for}
finding the largest common subgraph.} \laks{We reduce this redundancy as follows.}  
For a common subgraph $\langle q,g,\phi \rangle$ to be found in a branch, we can safely {\chengC ignore the common subgraph $\langle q,g,\phi \rangle$}, {\revision if there exists another one $\langle q',g',\phi' \rangle$ that satisfies the following two conditions:
\begin{itemize}
    \item \textbf{Condition 1}: $\langle q',g',\phi' \rangle$ is cs-isomorphic to $\langle q,g,\phi \rangle$;
    \item  \textbf{Condition 2}: $\langle q',g',\phi' \rangle$ has been found before.
\end{itemize}}

\laks{To facilitate the reduction, we make use of Condition 1 and Condition 2, for which we will leverage the \emph{vertex equivalence} {\chengC property} and an \emph{auxiliary data structure} {\chengC respectively}.} 

\smallskip
\noindent\textbf{Vertex equivalence}. 
{\chengC The \emph{structural equivalence} property}
has been widely used to speed up subgraph matching tasks~\cite{nguyen2019applications,yang2023structural,kim2021versatile}. Conceptually, two vertices are  \emph{structurally equivalent}  if and only if they have the \emph{same} set of neighbours. 
{\chengC Formally, }

\begin{definition}[Structural equivalence~\cite{nguyen2019applications}]
    Two vertices $u$ and $v$ in $Q$ are structurally equivalent, denoted  $u\sim v$, if and only if  
    \begin{equation}
        \forall u'\in V_Q, (u,u')\in E_Q \Leftrightarrow (v,u')\in E_Q. 
    \end{equation}
\end{definition}

Clearly, structural equivalence is an equivalence relation. Therefore, we can partition the vertices of graph $Q$ into equivalence classes, with the equivalence class of vertex $u\in V_Q$ defined as 
\begin{equation}
    \Psi(u)  := \{u'\in V_Q \mid u'\sim u\},
\end{equation}
where $u\in V_Q$ is a representative of class $\Psi(u)$.
We remark that this process can be done in $O(|V_Q|\delta_Q d_Q)$ {\chengB time} where $\delta_Q$ and $d_Q$ are the degeneracy and the maximum degree of the graph $Q$, respectively~\cite{nguyen2019applications,yang2023structural,kim2021versatile}.

\begin{figure}[]
		\includegraphics[width=0.45\textwidth]{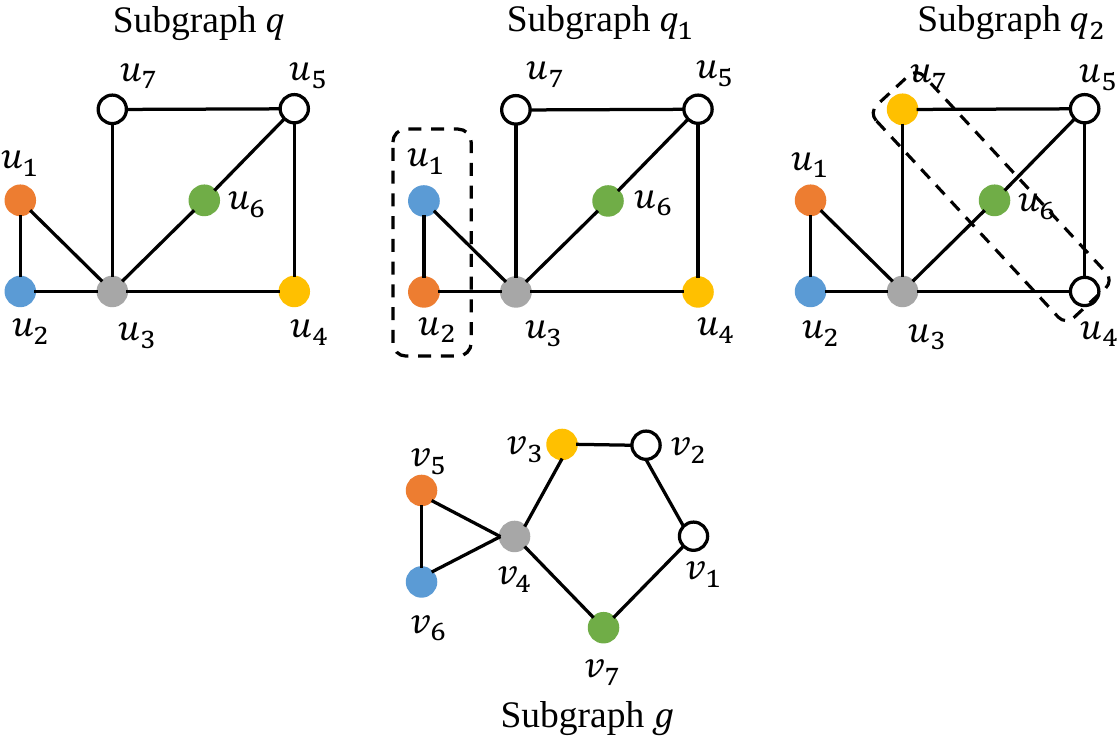}
	\caption{Illustrating cs-isomorphism and vertex equivalence (vertices, denoted by colored bullet circles, induce subgraphs $q$, $q_1$, $q_2$ and $g$; vertices in $\{u_1,u_2\}$ and $\{u_4,u_6,u_7\}$ are structurally equivalent, respectively; $\langle q,g,\phi \rangle$ is cs-isomorphic to $\langle q_1,g,\phi_1 \rangle$ (Case 1, say, exchange the mapped vertices of $u_1$ and $u_2$) and $\langle q_2,g,\phi_2 \rangle$  (Case 2, say, replace $u_4$ with $u_7$) where vertices with the same color indicate the bijection)}
 \label{fig:example_self_iso}
\end{figure}

Based on vertex equivalence, we can \laks{identify} several common subgraphs \laks{that are} cs-isomorphic to a given one $\langle q,g,\phi \rangle$ by swapping one vertex in $V_q$ with its structurally equivalent counterpart, which falls into  two cases.
{\YuiR In specific, consider a vertex $u$ in $V_q$ and one of its structurally equivalent counterparts  $u_{equ}$ in $\Psi(u)$. {\cheng We can obtain a cs-isomorphic common subgraph in two cases.} If $u_{equ}$ is also in $V_q$, {\cheng we can exchange} the mapped vertices of $u_{equ}$ and $u$, i.e., {\cheng we replace} $\langle u,\phi(u) \rangle$ and $\langle u_{equ},\phi(u_{equ}) \rangle$ with $\langle u,\phi(u_{equ}) \rangle$ and $\langle u_{equ},\phi(u) \rangle$; Otherwise, we replace $u$ with $u_{equ}$, i.e., replacing $\langle u,\phi(u)\rangle$ with $\langle u_{equ},\phi(u) \rangle$.}
Formally, we have the following lemma, {\cheng which can be easily verified} (see the examples in Figure~\ref{fig:example_self_iso} for a visual illustration of the lemma).
\begin{lemma}
    \label{lemma:cs}
    Let $S=\langle q,g\eat{p},\phi \rangle$ be a common subgraph of given graphs $Q$ and $G$, $u$ be a vertex in $V_q$ and $u'$ be a vertex in $\Psi(u)$. Then {\kaixin one of the following cases holds}. 
    \begin{itemize}[leftmargin=*]
        \item[]\textbf{Case 1: $u'\in V_q$}. $S'=S\backslash \{\langle u,\phi(u) \rangle,\langle u',\phi(u') \rangle \}\cup \{\langle u,\phi(u') \rangle,\langle u',\\\phi(u) \rangle\}$ is a common subgraph cs-isomorphic to $S$. \;\; {\kaixin\eat{OR}} 
        \item[]\textbf{Case 2: $u'\notin V_q$}. $S'=S\backslash \{\langle u,\phi(u) \rangle\}\cup \{\langle u',\phi(u) \rangle\}$ is a common subgraph cs-isomorphic to $S$.
    \end{itemize}
\end{lemma}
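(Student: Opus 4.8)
The plan is to treat the two cases as an exhaustive and mutually exclusive dichotomy on whether $u'\in V_q$, and in each case carry out two verifications: (i) that the displayed set $S'$ is a genuine common subgraph, i.e., the induced bijection is a graph isomorphism in the sense of Definition~\ref{def:CIS}, and (ii) that $S'$ is cs-isomorphic to $S$. The workhorse throughout is the defining property of $u\sim u'$, namely $\forall w\in V_Q,\ (u,w)\in E_Q \Leftrightarrow (u',w)\in E_Q$; restricting $w$ to $V_q$ shows $u$ and $u'$ induce identical adjacencies to every other vertex of $q$. I would also record at the outset the consequence that structurally equivalent vertices are non-adjacent: taking $w=u'$ gives $(u,u')\in E_Q \Leftrightarrow (u',u')\in E_Q$, and the absence of self-loops forces $(u,u')\notin E_Q$.

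For \textbf{Case 1} ($u'\in V_q$), I would define $\phi'$ to coincide with $\phi$ except that it swaps the images, $\phi'(u)=\phi(u')$ and $\phi'(u')=\phi(u)$; this is manifestly a bijection $V_q\to V_g$. Since $V_{q'}=V_q$ and $V_{g'}=V_g$, we have $q'=q$ and $g'=g$ literally, so cs-isomorphism of $S'$ to $S$ is immediate via the identity. The only real content is checking the isomorphism condition for $\phi'$, and by examining which pairs change under the swap this reduces to three subcases: pairs of vertices disjoint from $\{u,u'\}$ (unaffected, so inherited from $\phi$); a pair $\langle u,w\rangle$ (resp. $\langle u',w\rangle$) with $w\in V_q\setminus\{u,u'\}$, handled by the chain $(u,w)\in E_q \Leftrightarrow (u',w)\in E_q \Leftrightarrow (\phi(u'),\phi(w))\in E_g$ using structural equivalence and the original isomorphism; and the single pair $\langle u,u'\rangle$, where I would invoke the non-adjacency remark on the $q$-side together with the symmetry of $E_g$ on the image side.

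For \textbf{Case 2} ($u'\notin V_q$), I would set $V_{q'}=(V_q\setminus\{u\})\cup\{u'\}$ and let $\phi'$ agree with $\phi$ on $V_q\setminus\{u\}$ and send $u'\mapsto\phi(u)$; because $u'\notin V_q$ the substitution is well-defined and $\phi'$ is a bijection onto the unchanged image set $V_g$, so $g'=g$. The key sublemma is that the relabeling $u\mapsto u'$ is an isomorphism $Q[V_q]\to Q[V_{q'}]$, which follows from structural equivalence applied to every $w\in V_q\setminus\{u\}$; composing this with the original isomorphism $\phi$ yields that $\phi'$ is an isomorphism $q'\to g$ and simultaneously gives $q'\cong q$, hence cs-isomorphism.

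The main obstacle is not difficulty but care at the two delicate points: in Case 1, correctly disposing of the pair $\langle u,u'\rangle$, which is where the non-adjacency of structurally equivalent (false-twin) vertices is essential and where one could otherwise fabricate or destroy an edge under the swap; and in Case 2, ensuring the hypothesis $u'\notin V_q$ is actually used so that $V_{q'}$ has the correct cardinality and the relabeling is a bona fide vertex substitution rather than a collision. Everything else is routine unwinding of Definitions~\ref{def:CIS} and of structural equivalence.
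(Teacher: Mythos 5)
Your proof is correct. The paper gives no actual proof of this lemma---it is asserted to be ``easily verified'' and supported only by the visual illustration in Figure~\ref{fig:example_self_iso}---so your argument is exactly the intended routine verification, carried out with the right care at the two genuinely delicate spots: the non-adjacency of structurally equivalent vertices (forced by taking $w=u'$ in the definition and the absence of self-loops) is what makes the swapped pair $\langle u,u'\rangle$ harmless in Case~1, and the hypothesis $u'\notin V_q$ is used precisely where needed so that the relabeling in Case~2 is a bijection onto $(V_q\setminus\{u\})\cup\{u'\}$ rather than a collision.
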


\smallskip
\noindent\textbf{Auxiliary data structure}. 
{\chengC To facilitate the verification of Condition 2, i.e., whether a common subgraph that is cs-isomorphic to a current one has been found before}, we introduce a new data structure, namely exclusion set (denoted by $D$).
{\chengC $D$} is recursively maintained for each branch, and thus each branch is denoted by $(S,C,D)$. Specifically, $D$ is a set of vertex pairs that have been considered for expanding the partial solution and must not be included in any common subgraphs within the branch. Formally, the exclusion set is maintained as follows (\laks{illustrated in Figure~\ref{fig:example_branching} -- see the ``$D$'' terms now!}). 
\begin{itemize}[leftmargin=*]
    \item \textbf{Initialization}. The exclusion set is initialized to {\kaixin be} empty at the initial branch, i.e., $(\emptyset, V_Q\times V_G, \emptyset)$.
    \item \textbf{Recursive update}. Consider {\chengC the} branching at a branch $(S,C,D)$. For the first group where the $i^{th}$ sub-branch $(S_i,C_i,D_i)$ is formed by including $\langle u,v_i \rangle$ into $S$, we update the exclusion set to $D_i=D\cup \{\langle u,v_1 \rangle,\langle u,v_2 \rangle,\cdots ,\langle u,v_{i-1} \rangle\}$. For the second group where one sub-branch $(S',C',D')$ is formed, we set $D'=D$.
\end{itemize}
Consider a branch $(S,C,D)$ and a vertex pair $\langle u',v' \rangle$ in the exclusion set $D$, as shown in Figure~\ref{fig:exclusion_set}. There exists an {\YuiR ancestor}\eat{ascendant branch} of $(S,C,D)$, denoted by $(S_{anc},C_{anc},D_{anc})$, where $u'$ is selected as the branching vertex. Clearly, $\langle u',v' \rangle$ is not in $D_{anc}$ and will be \laks{added} to $D_{anc}$ after $B'_{anc}$ {\chengC is formed}, \laks{i.e., more precisely $D'_{anc} = D_{anc}\cup\{(u',v')\}$}. Therefore, all common subgraphs within the sub-branch $B'_{anc}$, which must contain $\langle u',v' \rangle$, have been found before $(S,C,D)$. This will help us  verify Condition 2.
\begin{figure}[]
		\includegraphics[width=0.3\textwidth]{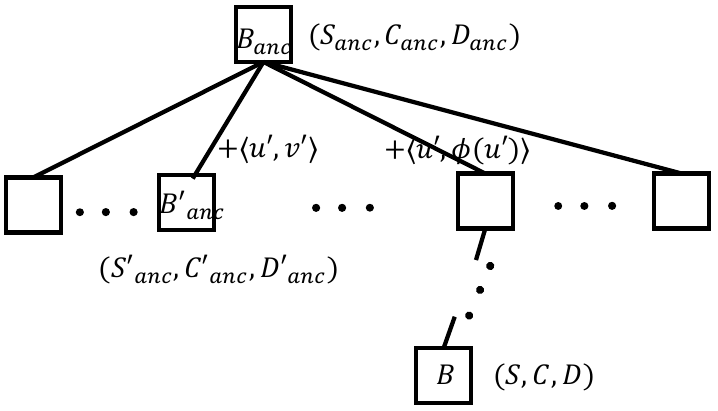}
  \vspace{-0.15in}
	\caption{Illustrating the exclusion set $D$ {\YuiR ($\langle u',v'\rangle$ is a vertex pair in $D$; $B_{anc}$ is an ancestor of $B$, where $u'$ is selected as the branching vertex)}}
 \vspace{-0.2in}
	\label{fig:exclusion_set}
\end{figure}

Based on  vertex equivalence and exclusion set, we are now ready to develop the reductions. Consider the branching process at a branch $(S=\langle q,g,\phi\rangle,C,D)$ where $X\times Y$ in $\mathcal{P}(C)$ and vertex $u$ in $X$ are selected as the branching subset and branching vertex, respectively. 

\smallskip
\noindent\textbf{Reduction at the first group}. Consider a sub-branch formed at the first group by including one vertex pair $\langle u,v\rangle$ where $v\in Y$. We note that {\chengC \emph{each}} common subgraph $S_{sub}$ to be found in this sub-branch must include $\langle u,v \rangle$. 
We observe that if \emph{there exists a vertex pair $\langle u_{equ},v \rangle$ in $D$ such that $u_{equ}$ is structurally equivalent to $u$, i.e., $u_{equ}\in\Psi(u)$}, Conditions 1 \& 2 hold for $S_{sub}$ and thus the branch can be pruned. Below, we elaborate on the details. 

We first show that Condition 1 holds:  Clearly, $S$ contains a vertex pair $\langle u_{equ},\phi(u_{equ}) \rangle$ since otherwise $D$ will not include $\langle u_{equ},v \rangle$ according to the maintenance of $D$.
Therefore, we can construct the following common subgraph {\chengC $S_{iso}$, which is} cs-isomorphic to $S_{sub}$ based on Case 1 of Lemma~\ref{lemma:cs} {\chengC (essentially, we exchange the mapped vertices of $u_{equ}$ and $u$)}.
\begin{equation}    
\label{eq:iso1}
S_{iso}\!=\!S_{sub}\backslash\{{\chengC \langle u,v\rangle,\langle u_{equ},\phi(u_{equ})\rangle}\} \!\cup\! \{\langle u_{equ},v\rangle,\langle u,\phi(u_{equ}) \rangle\}
\end{equation}
{\chengC Clearly, $S_{iso}$ is cs-isomorphic to $S_{sub}$ given that $u$ and $u_{equ}$ are structurally equivalent.}

We next show that $S_{iso}$ has been found before and thus Condition 2 holds: In specific, consider an ancestor \eat{ascendant branch} of $(S,C,D)$, denoted by $(S_{anc},C_{anc},D_{anc})$, where $u_{equ}$ is selected as the branching vertex, as visually illustrated in Figure~\ref{fig:exclusion_set}. {\chengC Since} $S_{iso}$ contains $\langle u_{equ},v\rangle$, {\chengC we check} whether it has been found in the sub-branch of $(S_{anc},C_{anc},D_{anc})$, which is formed by including $\langle u_{equ},v\rangle$. The answer is interestingly positive. The rationale behind is that (1) $S_{iso}\backslash\{\langle u,\phi(u_{equ})\rangle\}$ is a subset of $S_{anc}\cup C_{anc}$ (this can be easily verified by the facts that $S_{sub} \subseteq S_{anc}\cup C_{anc}$ and $\langle u_{equ},v \rangle\in C_{anc}$) and (2) $\langle u,\phi(u_{equ}) \rangle$ constructed by ours is in $C_{anc}$ (this holds due to the vertex equivalence between $u$ and $u_{equ}$
). Therefore, $S_{iso}$ is a common subgraph in $(S_{anc},C_{anc},D_{anc})$, i.e., $S_{anc}\subseteq S_{iso} \subseteq S_{anc}\cup C_{anc}$, and has been found before $(S,C,D)$ since it contains $\langle u_{equ},v\rangle$. 

In summary, we have the following lemma and reduction rule.
\begin{lemma}
    \label{lemma:reduction_for_VE1}
    Let $(S,C,D)$ be a branch. Common subgraph $S_{iso}$ {\chengC defined in} Equation~(\ref{eq:iso1}) has been found before the formation of $(S,C,D)$.
\end{lemma}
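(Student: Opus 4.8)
The plan is to reduce the claim to a purely structural statement: that $S_{iso}$ is one of the common subgraphs lying inside a sibling sub-branch of an ancestor of $(S,C,D)$ that the backtracking has \emph{already} fully explored. Recall that a common subgraph $S^*$ lies in a branch $(\hat S,\hat C,\hat D)$ exactly when $\hat S\subseteq S^*\subseteq \hat S\cup \hat C$, and that the exploration is a depth-first traversal of the branch tree. So if I can exhibit an ancestor branch $B_{anc}=(S_{anc},C_{anc},D_{anc})$ and an earlier-processed sub-branch $B'_{anc}$ of it, with $S_{iso}$ lying in $B'_{anc}$, then $S_{iso}$ is produced as the partial solution of some descendant of $B'_{anc}$ and is therefore found before $(S,C,D)$ is ever formed. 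The whole proof is thus the verification of two inclusions plus a scheduling argument.

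First I would locate $B_{anc}$. Since $\langle u_{equ},v\rangle\in D$, the maintenance rule for the exclusion set (Section~\ref{subsec:VE-reduction}, illustrated in Figure~\ref{fig:exclusion_set}) forces the existence of an ancestor $B_{anc}$ at which $u_{equ}$ is the branching vertex and at which $\langle u_{equ},v\rangle$ is added to $D$. Let $B'_{anc}=(S_{anc}\cup\{\langle u_{equ},v\rangle\},\,C_{anc}\backslash u_{equ}\backslash v,\,\cdot)$ be the sub-branch of $B_{anc}$ that includes $\langle u_{equ},v\rangle$. The scheduling fact I need is that $B'_{anc}$ is processed before the sub-branch of $B_{anc}$ through which $(S,C,D)$ descends. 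This follows because $\langle u_{equ},\phi(u_{equ})\rangle\in S$ (so the path to $(S,C,D)$ passes through the sub-branch that includes $\langle u_{equ},\phi(u_{equ})\rangle$), while the update rule $D_i=D\cup\{\langle u_{equ},v_1\rangle,\dots,\langle u_{equ},v_{i-1}\rangle\}$ together with $\langle u_{equ},v\rangle\in D$ shows that $v$ precedes $\phi(u_{equ})$ in the branching order, i.e. $B'_{anc}$ comes strictly earlier.

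It then remains to show $S_{anc}\cup\{\langle u_{equ},v\rangle\}\subseteq S_{iso}\subseteq S_{anc}\cup C_{anc}$. For the lower inclusion I would use $S_{anc}\subseteq S\subseteq S_{sub}$ (the ancestor's partial solution is contained in the current one, which is contained in $S_{sub}$) and observe that the two pairs removed in forming $S_{iso}$, namely $\langle u,v\rangle$ and $\langle u_{equ},\phi(u_{equ})\rangle$, cannot belong to $S_{anc}$ since both $u$ and $u_{equ}$ are still candidates at $B_{anc}$; hence $S_{anc}\subseteq S_{iso}$, and $\langle u_{equ},v\rangle\in S_{iso}$ by construction. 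For the upper inclusion I would split $S_{iso}$ into $S_{iso}\backslash\{\langle u,\phi(u_{equ})\rangle\}$ and the single pair $\langle u,\phi(u_{equ})\rangle$; the former is contained in $S_{sub}\cup\{\langle u_{equ},v\rangle\}\subseteq (S_{anc}\cup C_{anc})$, using $S_{sub}\subseteq S\cup C\subseteq S_{anc}\cup C_{anc}$ and $\langle u_{equ},v\rangle\in C_{anc}$.

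The main obstacle is the final membership $\langle u,\phi(u_{equ})\rangle\in C_{anc}$, which is exactly the step that exploits structural equivalence. The clean way to settle it is an invariant: whenever two vertices are structurally equivalent in $Q$ ($u\sim u_{equ}$) and both are still candidates at a branch, they lie in the same block $X_{anc}$ of the partition $\mathcal{P}(C_{anc})$. This holds because, by Equation~(\ref{eq:update_candidate_set}), the block a candidate falls into is determined solely by its adjacency/non-adjacency to the $Q$-vertices fixed in the partial solution, and equivalent vertices have identical neighbourhoods. Granting this invariant, $u_{equ}\in X_{anc}$ (it is the branching vertex) gives $u\in X_{anc}$, while $\langle u_{equ},\phi(u_{equ})\rangle\in X_{anc}\times Y_{anc}$ gives $\phi(u_{equ})\in Y_{anc}$; hence $\langle u,\phi(u_{equ})\rangle\in X_{anc}\times Y_{anc}\subseteq C_{anc}$. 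Combining the two inclusions shows $S_{iso}$ lies in $B'_{anc}$, and the scheduling fact then yields that $S_{iso}$ is found before $(S,C,D)$, completing the proof.
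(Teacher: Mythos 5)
Your strategy coincides with the paper's own proof: locate the ancestor branch $B_{anc}$ where $u_{equ}$ is the branching vertex, take its earlier-explored sub-branch $B'_{anc}$ that includes $\langle u_{equ},v\rangle$, sandwich $S_{iso}$ inside $B'_{anc}$, and settle the one nontrivial membership $\langle u,\phi(u_{equ})\rangle\in C_{anc}$ by the invariant that structurally equivalent vertices that are still candidates lie in the same block of $\mathcal{P}(C_{anc})$ (this is exactly the paper's ``Fact 2''). Your lower inclusion and your scheduling argument (that $B'_{anc}$ precedes the sub-branch through which $(S,C,D)$ descends) are also the paper's, if anything made more explicit.

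However, one step does not close as written: the upper inclusion. By your own criterion, ``$S_{iso}$ lies in $B'_{anc}$'' requires $S_{iso}\subseteq S'_{anc}\cup C'_{anc}$, where $S'_{anc}=S_{anc}\cup\{\langle u_{equ},v\rangle\}$ and $C'_{anc}=C_{anc}\backslash u_{equ}\backslash v$, but what you actually establish is the weaker containment $S_{iso}\subseteq S_{anc}\cup C_{anc}$. These sets differ precisely by the pairs involving $u_{equ}$ or $v$ (other than $\langle u_{equ},v\rangle$ itself), which are deleted when $B'_{anc}$ is formed; and completeness of the enumeration rooted at $B'_{anc}$ is guaranteed only for common subgraphs between $S'_{anc}$ and $S'_{anc}\cup C'_{anc}$, so the weaker containment by itself does not certify that $S_{iso}$ is ever output there. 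The gap is bridged by the observation the paper makes explicitly in verifying its Fact 1: since $S_{sub}$ is a common subgraph, $u_{equ}$ occurs in it only in $\langle u_{equ},\phi(u_{equ})\rangle$ and $v$ only in $\langle u,v\rangle$, and both pairs are removed in forming $S_{iso}$; moreover $\langle u,\phi(u_{equ})\rangle$ involves neither $u_{equ}$ nor $v$ (matched vertices of $S$ cannot be candidates). Hence every pair of $S_{iso}$ other than $\langle u_{equ},v\rangle$ avoids $u_{equ}$ and $v$, giving $S_{iso}\backslash\{\langle u_{equ},v\rangle\}\subseteq (S_{anc}\cup C_{anc})\backslash u_{equ}\backslash v\subseteq S'_{anc}\cup C'_{anc}$, which together with $\langle u_{equ},v\rangle\in S'_{anc}$ yields the required inclusion. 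With that one observation added, your proof is complete and matches the paper's.
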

\begin{proof}
    {\revision
    We note that the recursive branching process forms a recursion tree where each tree node corresponds to a branch. Consider the path from the initial branch $(\emptyset, V_Q\times V_G,\emptyset)$ to $(S,C,D)$, there exists an ascendant branch of $(S,C,D)$, denoted by $B_{asc}=(S_{asc},C_{asc},D_{asc})$, where $u_{equ}$ is selected as the branching vertex, since $\langle u_{equ},\phi(u_{equ}) \rangle$ is in $S$.
    We can see that there exists one sub-branch $B'_{asc}=(S'_{asc},C'_{asc},D'_{asc})$ of $B_{asc}$ formed by including $\langle u_{equ},v \rangle$, and all common subgraphs within $B'_{asc}$ have been found before the formation of $(S,C,D)$, since $\langle u_{equ},v \rangle$ is in $D$. We then show that common subgraph $S_{iso}$ can be found within $B_{asc}'$, i.e., $S'_{asc}\subseteq S_{iso}\subseteq S'_{asc}\cup C'_{asc}$. 
    \underline{First}, we have $S'_{asc}\subseteq S_{iso}$ since (1) $S'_{asc}=S_{asc}\cup\{\langle u_{equ},v \rangle\}$, (2) $S_{sub}$ is a common subgraph in $B_{asc}$ and thus $S_{asc}\subseteq S_{sub}$, (3) $S_{asc}$ does not include $\langle u_{equ},\phi(u_{equ})\rangle$ or $\langle u,v\rangle$ since they are in $C_{asc}$ and will be included to the partial solution at $B_{asc}'$ and $(S\cup \{\langle u,v \rangle\}, C\backslash u\backslash v)$, and thus (4) by combining all the above, we have  $S'_{asc}=S_{asc}\cup\{\langle u_{equ},v \rangle\}\subseteq S_{sub}\cup\{\langle u_{equ},v \rangle\} \backslash \{\langle u_{equ},\phi(u_{equ})\rangle,\langle u,v\rangle\}\subseteq S_{iso}$. 
    \underline{Second}, we have $S_{iso}\subseteq S'_{asc}\cup C'_{asc}$ based on the following two facts. 

    \begin{itemize}
        \item \textbf{Fact 1.} $S_{sub}\backslash\{\langle u_{equ},\phi(u_{equ})\rangle,\langle u,v\rangle\}\subseteq S'_{asc}\cup C'_{asc}$.
        \item \textbf{Fact 2.} $\langle u_{equ},v \rangle\in S'_{asc}$ and $\langle u,\phi(u_{equ}) \rangle\in C'_{asc}$.
    \end{itemize}
    
    The correctness of the above facts can be verified accordingly, for which we put the details in the 
    \ifx \CR\undefined
Appendix. 
\else
technical report~\cite{TR}. 
\fi
    }
\end{proof}

\noindent\fbox{%
    \parbox{0.47\textwidth}{%
       \textbf{Vertex-Equivalence-based reduction at the first group}. Let $B=(S,C,D)$ be a branch. For a sub-branch of $B$ formed by including a candidate pair $\langle u,v \rangle$ in the first group, it can be pruned if there exists a vertex pair $\langle u',v \rangle$ in $D$ such that $u'\in \Psi(u)$. 
    }%
}

\begin{example}
        Consider again the branching process at branch $B_6=(S_6,C_6,D_6)$ in Figure~\ref{fig:example_branching}. \laks{Suppose $u_2$ is selected as the branching vertex}. Then, we can see that $u_1$ is in $\Psi(u_2)$ and $D_{6}= \{u_1\} \times \{v_1,v_2,v_3,v_4,v_5\}$. {\YuiRR Recall that $C_6=\{u_2,u_3\}\times \{v_4,v_5\}\cup\{u_4,u_5,u_6,u_7\}\times\{v_1,v_2,v_3,v_7\}$. Thus, $B_6$ has two sub-branches which are formed by including $\langle u_2,v_4 \rangle$ or $\langle u_2,v_5 \rangle$, and they can be pruned based on the above reduction. }
\end{example}

\smallskip
\noindent\textbf{Reduction at the second group}. {\YuiR Recall that vertex $u$ is selected as the branching vertex.} Consider the sub-branch formed  {\YuiR by excluding $u$} in the second group. We note that {\chengC each} common subgraph $S_{sub}=\langle q_{sub},g_{sub},\phi_{sub} \rangle$ to be found in this sub-branch must exclude vertex $u$. We observe that if \emph{$S_{sub}$ contains a vertex $u_{equ}$, {\chengC which} is in $C\backslash u$ and is structurally equivalent to $u$}, Condition 1\&2 holds for $S_{sub}$. 

In specific, Condition 1 holds {\chengC since we can construct the} following common subgraph {\chengC which is} cs-isomorphic to $S_{sub}$ {\chengC based on} Case 2 of Lemma~\ref{lemma:cs} {\chengC (essentially, we replace $u_{equ}$ with $u$)}.
\begin{equation}
    \label{eq:iso2}
    S_{iso}= S_{sub}\backslash\{\langle u_{equ}, \phi_{sub}(u_{equ})\rangle\}\cup\{ \langle u,\phi_{sub}(u_{equ}) \rangle\}
\end{equation}

Besides, we note that $S_{iso}$ contains $\langle u,\phi_{sub}(u_{equ}) \rangle$ and common subgraphs found in the first group must include $u$. We thus {\chengC check} whether $S_{iso}$ has been found in one sub-branch formed in the first group. The answer is also positive. The rationale behind is that (1) $\langle u,\phi_{sub}(u_{equ}) \rangle$ constructed by ours exists in $C$ (this holds due to the vertex equivalence between $u$ and $u_{equ}$
) and (2) thus there exists a sub-branch formed by including $\langle u,\phi_{sub}(u_{equ}) \rangle$ in the first group, where $S_{iso}$ has been found since it includes $\langle u,\phi_{sub}(u_{equ}) \rangle$. 

In summary, we have the following lemma and reduction rule.

\begin{lemma}
\label{lemma:reduction_for_VE2}
Let $(S,C,D)$ be a branch where $u$ is selected as the branching vertex. Common subgraph $S_{iso}$ {\chengC defined in} Equation~(\ref{eq:iso2}) has been found before the formation of $(S,C\backslash u,D)$ at the second group.   
\end{lemma}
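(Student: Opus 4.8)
The plan is to mirror the argument used for Lemma~\ref{lemma:reduction_for_VE1}, but now exploiting the fact that in Algorithm~\ref{alg:mcsplit} the first-group sub-branches (Lines~8--11) are all explored \emph{before} the single second-group sub-branch $(S,C\backslash u,D)$ (Line~12). Concretely, I would exhibit one first-group sub-branch that contains $S_{iso}$ as a common subgraph; since that sub-branch is processed earlier and enumerates all common subgraphs containing a specific pair, $S_{iso}$ is necessarily discovered during that earlier exploration.

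First I would locate the relevant first-group sub-branch. Write $S=\langle q,g,\phi\rangle$ and let $X\times Y\in\mathcal{P}(C)$ be the branching subset containing $u$. Since $S_{sub}$ is a common subgraph in $(S,C\backslash u,D)$ we have $S\subseteq S_{sub}\subseteq S\cup(C\backslash u)$, and as $u_{equ}\notin V_q$ the pair $\langle u_{equ},\phi_{sub}(u_{equ})\rangle$ lies in $C\backslash u\subseteq C$. Because $u\sim u_{equ}$, the vertices $u$ and $u_{equ}$ have identical adjacency to every vertex of $V_q$, so the reduction rule assigns them to the same block; hence $u_{equ}\in X$, which forces $\phi_{sub}(u_{equ})\in Y$ and therefore $\langle u,\phi_{sub}(u_{equ})\rangle\in X\times Y\subseteq C$. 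This is exactly the point where vertex equivalence is used. Consequently there is a first-group sub-branch $B^{*}=(S\cup\{\langle u,\phi_{sub}(u_{equ})\rangle\},C^{*},D^{*})$ formed by including $\langle u,\phi_{sub}(u_{equ})\rangle$, and $B^{*}$ is explored before $(S,C\backslash u,D)$.

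Next I would verify $S\cup\{\langle u,\phi_{sub}(u_{equ})\rangle\}\subseteq S_{iso}\subseteq S\cup\{\langle u,\phi_{sub}(u_{equ})\rangle\}\cup C^{*}$. The lower inclusion is immediate: $S\subseteq S_{sub}$ and $u_{equ}\notin V_q$, so deleting $\langle u_{equ},\phi_{sub}(u_{equ})\rangle$ from $S_{sub}$ removes nothing belonging to $S$, while $\langle u,\phi_{sub}(u_{equ})\rangle$ is added explicitly. For the upper inclusion I would take an arbitrary pair $\langle u'',v''\rangle\in S_{sub}\setminus\{\langle u_{equ},\phi_{sub}(u_{equ})\rangle\}$; since $S_{sub}$ is a common subgraph, $u''$ is adjacent to $u_{equ}$ iff $v''$ is adjacent to $\phi_{sub}(u_{equ})$, and $u\sim u_{equ}$ gives that $u''$ is adjacent to $u$ iff it is adjacent to $u_{equ}$. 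Chaining these two equivalences produces precisely the adjacency condition of the reduction rule applied to the newly included pair $\langle u,\phi_{sub}(u_{equ})\rangle$, so $\langle u'',v''\rangle$ survives the refinement and lies in $C^{*}$ (or already in $S$), using $u''\neq u$ and $v''\neq\phi_{sub}(u_{equ})$ by injectivity of $\phi_{sub}$.

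Finally, since every pair of $S_{iso}$ other than $\langle u,\phi_{sub}(u_{equ})\rangle$ has just been accounted for, $S_{iso}$ is a common subgraph in $B^{*}$; as $S_{iso}$ contains $\langle u,\phi_{sub}(u_{equ})\rangle$ and $B^{*}$ enumerates all common subgraphs containing that pair, $S_{iso}$ is found while exploring $B^{*}$, hence before $(S,C\backslash u,D)$ is formed. I expect the delicate step to be the upper inclusion: it is tempting to derive it from $S_{sub}\subseteq S\cup(C\backslash u)$ alone, but one genuinely must invoke $u\sim u_{equ}$ to guarantee that moving the $Q$-side endpoint from $u_{equ}$ to $u$ keeps each surviving candidate pair compatible with the reduction rule in $B^{*}$, which is the crux where vertex equivalence is indispensable.
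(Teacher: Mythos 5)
Your proposal is correct and follows essentially the same route as the paper's own proof: establish that $\langle u_{equ},\phi_{sub}(u_{equ})\rangle\in C$, use structural equivalence to place $u_{equ}$ in $X$ and hence $\phi_{sub}(u_{equ})$ in $Y$, locate the first-group sub-branch formed by including $\langle u,\phi_{sub}(u_{equ})\rangle$, and show via set inclusions that $S_{iso}$ lies within it and is therefore found earlier. The only difference is one of detail: where the paper argues with the unrefined candidate set $C\backslash u\backslash \phi_{sub}(u_{equ})$, you explicitly verify that each surviving pair of $S_{sub}$ passes the adjacency-based refinement of Equation~(\ref{eq:update_candidate_set}) relative to the new pair, which is a slightly more careful treatment of the same step rather than a different argument.
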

\begin{proof}
{\revision
    \underline{First}, we note that $\langle u_{equ},\phi_{iso}(u_{equ}) \rangle$ is in $C\backslash u$ and also in $C$ since otherwise $S_{sub}$ cannot include $\langle u_{equ},\phi_{iso}(u_{equ}) \rangle$. This is because (1) $S\subseteq S_{sub}\subseteq S\cup C\backslash u$ since $S_{sub}$ is a common subgraph in the sub-branch $(S,C\backslash u,D)$ and (2) $S$ does not include $\langle u_{equ},\phi_{iso}(u_{equ}) \rangle$ since $u_{equ}$ appears in $C\backslash u$.
    \underline{Second}, we note that $\phi_{iso}(u_{equ})$ is in $Y$. Recall that $X\times Y$ is the branching set at $(S,C,D)$. This is because (1) $u_{equ}$ is in the same subset $X$ as $u$ since $u_{equ}$ and $u$ are structurally equivalent and thus have the same set of neighbours and non-neighbours in $q$, and (2) $\langle u_{equ},\phi_{iso}(u_{equ}) \rangle$ is in $C$ as discussed before.
    \underline{Third}, we can derive that there exists a sub-branch $(S\cup \{\langle u,\phi_{iso}(u_{equ}) \rangle\},C\backslash u\backslash \phi_{iso}(u_{equ}),D')$, which is formed at branch $(S,C,D)$  by including $\langle u,\phi_{iso}(u_{equ}) \rangle$ before the formation of $(S,C\backslash u,D)$, since $\phi_{iso}(u_{equ})\in Y$.
    \underline{Forth}, we show that $S_{iso}$ is in $(S\cup \{\langle u,\phi_{iso}(u_{equ}) \rangle\},C\backslash u\backslash \phi_{iso}(u_{equ}),D')$, formally, $S\cup \{\langle u,\phi_{iso}(u_{equ}) \rangle\} \subseteq S_{iso}\subseteq S\cup \{\langle u,\phi_{iso}(u_{equ}) \rangle\}\cup (C\backslash u\backslash \phi_{iso}(u_{equ}))$.
    We have $S\subseteq S_{sub}\subseteq S\cup (C\backslash u)$ since $S_{sub}$ is a common subgraph in $(S,C\backslash u,D)$. Let $S'=S\cup \{\langle u,\phi_{iso}(u_{equ}) \rangle\}$, it can be proved as below.
    \begin{eqnarray}
        && S\subseteq S_{sub}\subseteq S\cup (C\backslash u)\\
        && \Rightarrow S\subseteq S_{sub}\backslash\{\langle u_{equ}, \phi_{iso}(u_{equ})\rangle\}\subseteq S\cup (C\backslash u\backslash \phi_{iso}(u_{equ})) \label{eq:lemma_eq_1}\\
        && \Rightarrow S' \subseteq S_{iso}\subseteq S'\cup (C\backslash u\backslash \phi_{iso}(u_{equ})) \label{eq:lemma_eq_2}
    \end{eqnarray}
    Note that Equation~(\ref{eq:lemma_eq_1}) holds since $\langle u_{equ},\phi_{iso}(u_{equ}) \rangle$ is in $S$; Equation~(\ref{eq:lemma_eq_2}) is derived by including the vertex pair $\langle u, \phi_{iso}(u_{equ})\rangle$.}
\end{proof}

\noindent\fbox{%
    \parbox{0.47\textwidth}{%
       \textbf{Vertex-Equivalence-based reduction at the second group}. Let $B=(S,C,D)$ be a branch and $(S,C\backslash u,D)$ be  the sub-branch formed in the second group by excluding all candidate pairs that consist of $u$. For a vertex $u'$ appearing in $C\backslash u$, if $u'$ is structurally equivalent to $u$, i.e., $u'\in \Psi(u)$, all candidate pairs that consist of $u'$ can be pruned from  $C\backslash u$. 
    }%
}

\begin{example}
    Consider the branching process at $B_0$ where $u_1$ is the branching vertex in Figure~\ref{fig:example_branching}. For sub-branch $B_8$ (which is the sub-branch at the second group), we can see that $\Psi(u_1)=\{u_1,u_2\}$ and thus the candidate set of $B_8$ can be reduced to $ (V_Q\backslash\{u_1,u_2\})\times V_G$, {\YuiR i.e., to $\{u_3,u_4,u_5,u_6,u_7\}\times \{v_1,v_2,\cdots,v_7\}$}.
\end{example}

\subsection{Maximality-based Reduction}
\label{subsec:maximality-reduction}

We introduce the redundancies induced by \emph{non-maximality}. Clearly, a maximum common subgraph must be a maximal common subgraph. Therefore, exploring those branches that hold non-maximal common subgraphs only will incur redundant computations. 
Consider a current branch $B=(S,C,D)$. {\chengC Note that there might exist multiple common subgraphs with the largest number of vertices in the branch.}
We observe that \emph{there exists one largest common subgraph in $B$ that must contain one specific candidate vertex pair $\langle u,v \rangle$}. 
{\YuiR Given this, we can remove this candidate vertex pair $\langle u,v\rangle$ from $C$ to $S$, thereby producing one immediate sub-branch i.e., $(S\cup\{\langle u,v \rangle\},C\backslash u\backslash v)$. Clearly, solving the resulting sub-branch is enough to find the largest common subgraph (since it holds all those common subgraphs that contain $\langle u,v\rangle$ in $B$). As a result,  we can safely prune all other sub-branches for which the partial set and the candidate set do not include the candidate pair $\langle u,v\rangle$.}
%
%
Below, we elaborate on the details.

To be specific, we observe that there exists one largest common subgraph, denoted by $S_{opt}$, in $B$ such that $S_{opt}$ must contain a candidate vertex pair $\langle u,v \rangle$ if for any subset $X\times Y$ in $\mathcal{P}(C)$, $u$ and $v$ are simultaneously adjacent or non-adjacent to all other vertices in $X$ and $Y$, respectively, i.e.,
\begin{eqnarray}
    \label{eq:condition}
    \forall X\times Y\in \mathcal{P}(C): {\revision\big(}N(u,X)=X\backslash\{u\} {\revision \text{ and }} N(v,Y)=Y\backslash\{v\}{\revision\big)} \text{ or }\nonumber\\  
        {\revision \big(}N(u,X)=\emptyset {\revision\text{ and }} N(v,Y)=\emptyset{\revision\big)},
\end{eqnarray}
Formally, we have the following lemma.

\begin{lemma}
\label{lemma:maximality}
    Let $B=(S,C,D)$ be a branch {\chengC and $\langle u,v \rangle$ be a candidate vertex pair that satisfies the condition in Equation~(\ref{eq:condition}).} There exists one largest common subgraph $S_{opt}$ in the branch $B$ such that $S_{opt}$ contains 
    {\chengC $\langle u,v \rangle$.}
\end{lemma}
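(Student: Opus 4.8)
The plan is to start from an arbitrary largest common subgraph $S^*$ in the branch $B$ (so $S\subseteq S^*\subseteq S\cup C$) and show that, without changing its size, it can be rewritten into a common subgraph in $B$ that contains $\langle u,v\rangle$; such a rewritten solution is then the desired $S_{opt}$. The engine of the whole argument is one \emph{compatibility fact} that I would establish first: the candidate pair $\langle u,v\rangle$ is adjacency/non-adjacency--consistent with every pair it could coexist with. Concretely, for any pair $\langle u',v'\rangle$ with $u'\neq u$ and $v'\neq v$ lying in $C$, note that $\langle u',v'\rangle$ falls inside a unique block $X_i\times Y_i\in\mathcal{P}(C)$, and then invoke the hypothesis (Eq.~(\ref{eq:condition})): in the ``both-adjacent'' alternative $u'\in N(u,X_i)$ and $v'\in N(v,Y_i)$, while in the ``both-non-adjacent'' alternative $u'\notin N(u,X_i)$ and $v'\notin N(v,Y_i)$; either way $(u,u')\in E_Q\iff (v,v')\in E_G$. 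For a pair $\langle u',v'\rangle\in S$ the same equivalence holds because $C$ is the candidate set already refined by the reduction rule, which guarantees that every surviving candidate pair, in particular $\langle u,v\rangle$, can consistently extend $S$. Hence $\langle u,v\rangle$ is compatible with every pair that does not share the vertex $u$ or the vertex $v$.

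Next I would run a case analysis, assuming $S^*$ does not already contain $\langle u,v\rangle$ (otherwise we are done). Since $\langle u,v\rangle\in C$ and moving a pair into the solution deletes from $C$ every candidate sharing its endpoints, $u$ and $v$ are \emph{not} vertices of $S$; thus any occurrence of $u$ or $v$ in $S^*$ lies in $T:=S^*\setminus S\subseteq C$, and such an occurrence sits in the block $X_k\times Y_k$ containing $\langle u,v\rangle$. If neither $u$ nor $v$ appears in $S^*$, the compatibility fact shows that $S^*\cup\{\langle u,v\rangle\}$ is still a common subgraph in $B$ and is strictly larger, contradicting maximality of $S^*$, so this case is vacuous. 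If exactly one endpoint appears, say $\langle u,v_1\rangle\in S^*$ with $v_1\neq v$ and $v$ absent, I replace $\langle u,v_1\rangle$ by $\langle u,v\rangle$: the result is a valid matching (only $v_1$ leaves, only the absent $v$ enters), stays inside $S\cup C$, is valid by the compatibility fact, and has the same size; the mirror case is symmetric.

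The remaining, and I expect hardest, case is when both endpoints appear, as $\langle u,v_1\rangle$ and $\langle u_1,v\rangle$ with $v_1\neq v$ and $u_1\neq u$. Both pairs lie in block $k$ (using disjointness of the $X_i$ and of the $Y_i$), so $u_1\in X_k$, $v_1\in Y_k$, and therefore $\langle u_1,v_1\rangle\in X_k\times Y_k\subseteq C$. I would replace these two pairs by $\langle u,v\rangle$ and $\langle u_1,v_1\rangle$, which preserves both the size and the one-to-one matching property. The delicate step is verifying that the newly introduced pair $\langle u_1,v_1\rangle$ is compatible with every remaining pair $\langle u',v'\rangle$; this does not follow from the block condition alone, and I would instead chain three equivalences: validity of $\langle u,v_1\rangle$ gives $(u,u')\in E_Q\iff(v_1,v')\in E_G$, validity of $\langle u_1,v\rangle$ gives $(u_1,u')\in E_Q\iff(v,v')\in E_G$, and the compatibility fact for $\langle u,v\rangle$ gives $(u,u')\in E_Q\iff(v,v')\in E_G$; composing them yields $(u_1,u')\in E_Q\iff(v_1,v')\in E_G$, as required. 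Compatibility of $\langle u,v\rangle$ with $\langle u_1,v_1\rangle$ (and with the rest) is immediate from the block-$k$ instance of the compatibility fact.

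In every non-vacuous case I obtain a common subgraph in $B$ of the same size as $S^*$ that contains $\langle u,v\rangle$. Since $S^*$ was a largest common subgraph in $B$, this rewritten solution is itself a largest one, which is the claimed $S_{opt}$, completing the proof. The main obstacle throughout is the size-preserving validity of the two-endpoint swap, and the key realization that makes it go through is that the required new equivalence is exactly the composite of the two old per-pair equivalences with the single equivalence supplied by Eq.~(\ref{eq:condition}).
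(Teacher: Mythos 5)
Your proof is correct and follows essentially the same approach as the paper's: a case analysis on whether $u$ and/or $v$ occur in a largest common subgraph $S^*$, with the same single-swap and double-swap constructions (your $\langle u_1,v_1\rangle$ is the paper's $\langle \phi^{*-1}(v),\phi^*(u)\rangle$) and the same chaining of the two $S^*$-isomorphism equivalences with the Equation~(\ref{eq:condition})-derived compatibility of $\langle u,v\rangle$. The only cosmetic difference is that you justify the compatibility of $\langle u,v\rangle$ with $\langle u_1,v_1\rangle$ via the block condition, whereas the paper derives it from the isomorphism within $S^*$; both are valid.
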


\begin{proof}
 {\revision
 This can be proved by construction. Let $S^*=(q^*,g^*,\phi^*)$ be one largest common subgraph to be found in $B$. Note that if $S^*$ contains the candidate vertex pair $\langle u,v \rangle$, we can finish the proof by constructing $S_{opt}$ as $S^*$. Otherwise, if $\langle u,v \rangle$ is not in $S^*$, we prove the correctness by constructing one largest common subgraph $S_{opt}$ to be found in $B$ that contains candidate vertex pair $\langle u,v \rangle$, i.e., $S\subseteq S_{opt} \subseteq S\cup C$,  $|S_{opt}|=|S^*|$ and $\langle u,v \rangle\in S_{opt}$. In general, there are four different cases, and the details can be found in the 
 \ifx \CR\undefined
Appendix. 
\else
technical report~\cite{TR}. 
\fi
 }
\if 0
    This can be proved by construction. Let $S^*=(q^*,g^*,\phi^*)$ be one largest common subgraph to be found in $B$. Note that if $S^*$ contains the candidate vertex pair $\langle u,v \rangle$, we can finish the proof by constructing $S_{opt}$ as $S^*$. Otherwise, if $\langle u,v \rangle$ is not in $S^*$, we prove the correctness by constructing one largest common subgraph $S_{opt}$ to be found in $B$ that contains candidate vertex pair $\langle u,v \rangle$, i.e., $S\subseteq S_{opt} \subseteq S\cup C$,  $|S_{opt}|=|S^*|$ and $\langle u,v \rangle\in S_{opt}$.
    In general, there are four different cases.

    \smallskip
    \noindent\underline{\textbf{Case 1:}} $u\notin V_{q^*}$ and $v\in V_{g^*}$. In this case, there exists a vertex pair $\langle\phi^{*-1}(v),v \rangle$ in $S^*$ where $\phi^{*-1}$ is the inverse of $\phi^*$. We construct $S_{opt}$ by replacing the vertex pair $\langle\phi^{*-1}(v),v \rangle$ with $\langle u,v \rangle$, i.e.,
    \begin{equation}
        S_{opt}=S^*\backslash\{ \langle\phi^{*-1}(v),v \rangle\} \cup \{\langle u,v \rangle\}.
    \end{equation}
    Clearly, we have $S\subseteq S_{opt}\subseteq S\cup C$ (i.e., $S_{opt}$ is in $B$) since $S^*$ is in $B$ and $\langle u,v\rangle$ is in the candidate set $C$. Besides, we have $|S_{opt}|=|S^*|$ and $\langle u,v \rangle\in S_{opt}$ based on the above construction. Finally, we deduce that $S_{opt}$ is a common subgraph by showing that any two vertex pairs in $S_{opt}$ satisfy Equation~(\ref{eq:isomorphic}), i.e., $g_{opt}$ is isomorphic to $q_{opt}$ under the bijection $\phi_{opt}$. \underline{First}, $S^*\backslash\{\langle \phi^{*-1}(v),v\rangle\}$, as a subset of $S^*$, is a common subgraph and thus has any two vertex pairs inside satisfying Equation~(\ref{eq:isomorphic}) (note that any subset of a common subgraph is still a common subgraph); \underline{Second}, for each pair $\langle u',v' \rangle$ in $S$, $u$ is adjacent to $u'$ if and only if $v$ is adjacent to $v'$ (since $\langle u,v \rangle$ is a candidate pair which can form a common subgraph with $S$); \underline{Third}, for each pair $\langle u',v' \rangle$ in $S_{opt}\backslash S\backslash\{\langle \phi^{*-1}(v),v\rangle\}$, it is clear that $\langle u',v' \rangle$ is in one subset $ X\times Y$ of $\mathcal{P}(C)$ and thus $u$ is adjacent to $u'$ if and only if $v$ is adjacent to $v'$ based on Equation~(\ref{eq:condition}). Therefore, any two vertex pairs in $S_{opt}$ will satisfy the Equation~(\ref{eq:isomorphic}).

    \smallskip
    \noindent\underline{\textbf{Case 2:}} $u\in V_{q^*}$ and $v\notin V_{g^*}$. There exists a vertex pair $\langle u,\phi^*(u) \rangle$ in $S^*$. We construct $S_{opt}$ by replacing $\langle u,\phi^*(u) \rangle$ with $\langle u,v \rangle$, i.e., $S_{opt}=S^*\backslash \{\langle u,\phi^*(u) \rangle\}\cup\{\langle u,v \rangle\}$. Similar to Case 1, we can prove that $S_{opt}$ includes $\langle u,v \rangle$ and is one largest common subgraph to be found in $B$. 

    \smallskip
    \noindent\underline{\textbf{Case 3:}} $u\in V_{q^*}$ and $v\in V_{g^*}$. There exists two distinct vertex pairs $\langle u,\phi^*(u) \rangle$ and $\langle \phi^{*-1}(v),v \rangle$ in $S^*$. We construct $S_{opt}$ by replacing these two vertex pairs with $\langle \phi^{*-1}(v),\phi(u) \rangle$ and $\langle u,v \rangle$, formally,
    \begin{equation}
        S_{opt}\!\!=\!\!S^*\backslash\{\langle u,\phi^*(u) \rangle,\!\langle\phi^{*-1}(v),v \rangle\}\!\cup\!\{\langle \phi^{*-1}(v),\phi^*(u) \rangle,\!\langle u,v \rangle\}.
    \end{equation}
    Clearly, we have $S\subseteq S_{opt}\subseteq S\cup C$ (i.e., $S_{opt}$ is in $B$), $|S_{opt}|=|S^*|$ and $\langle u,v \rangle\in S_{opt}$ based on the above construction. We then deduce that $S_{opt}$ is a common subgraph  by showing that any two vertex pairs in $S_{opt}$ satisfy Equation~(\ref{eq:isomorphic}).
    \underline{First}, $S^*\backslash\{\langle u,\phi^*(u) \rangle,\langle \phi^{*-1}(v),v\rangle\}$, as a subset of $S^*$, is a common subgraph and thus has any two vertex pairs inside satisfying Equation~(\ref{eq:isomorphic});
    \underline{Second}, consider a vertex pair $\langle u',v' \rangle$ in $S^*\backslash\{\langle u,\phi^*(u) \rangle,\langle \phi^{*-1}(v),v\rangle\}$. Similar to Case 1, we can prove that $u$ is adjacent to $u'$ if and only if $v$ is adjacent to $v'$. Besides, we show that $\phi^{*-1}(v)$ is adjacent to $u'$ if and only if $\phi(u)$ is adjacent to $v'$ since (1) $(\phi^{*-1}(v),u')\in E_Q\Leftrightarrow (v,v')\in E_G$ and $(u,u')\in E_Q\Leftrightarrow (\phi^*(u),v')\in E_G$ (since the common subgraph $S^*$ contains $\{\langle u,\phi^*(u) \rangle,\langle \phi^{*-1}(v),v\rangle\}$), (2) $ (v,v')\in E_G \Leftrightarrow (u,u')\in E_Q$ (as we shown above), and thus (3) they can be combined as $(\phi^{*-1}(v),u')\in E_Q\Leftrightarrow (v,v')\in E_G \Leftrightarrow (u,u')\in E_Q \Leftrightarrow (\phi^*(u),v')\in E_G$.
    %
    \underline{Third}, we have $(u,\phi^{*-1}(v))\in E_Q\Leftrightarrow (v,\phi^*(u))\in E_G$ since the common subgraph $S^*$ contains $\{\langle u,\phi^*(u) \rangle,\langle \phi^{*-1}(v),v\rangle\}$ and thus $(u,\phi^{*-1}(v))\in E_Q\Leftrightarrow (\phi^*(u),v)\in E_G$ (note that $(\phi^*(u),v)$ refers to the same edge as $(v,\phi^*(u))$ since the graphs $Q$ and $G$ are undirected). Therefore, any two vertex pairs in $R_{opt}$ will satisfy Equation~(\ref{eq:isomorphic}).

    \smallskip
    \noindent\underline{\textbf{Case 4:}} $u\notin V_{q^*}$ and $v\notin V_{g^*}$. We note that this case will not occur since otherwise the contradiction is derived by showing that $S^*\cup \{\langle u,v\rangle\}$ is a larger common subgraph (note that the proof is similar to Case 1 and thus be omitted).
    \fi
\end{proof}

Consider a branch $B=(S,C,D)$ where $ X^*\times Y^*$ in $\mathcal{P}(C)$ and $u^*$ in $X^*$ are selected as the branching subset and the branching vertex, {\chengC respectively}. 
Assume that there exists a vertex $v$ in $Y^*$ such that $\langle u^*,v \rangle$ satisfies the condition in Equation~(\ref{eq:condition}).
Based on the above lemma, there exists one largest common subgraph in the branch $B$ that contains candidate vertex pair $\langle u^*,v \rangle$. Therefore, we only need to form one sub-branch $(S\cup\{\langle u^*,v \rangle\},C\backslash u^*\backslash v,D\cup  \{u^*\}\times (Y^*\backslash\{v\}) )$ since other formed sub-branches will exclude the candidate vertex $\langle u^*,v \rangle$ from the found common subgraphs. We note that the exclusion set of the formed sub-branch can be updated by $D\cup  \{u^*\}\times (Y^*\backslash\{v\}) $ to enhance the pruning power of the proposed reduction at the first group. In summary, we obtain the following reduction.


\medskip
\noindent\fbox{%
    \parbox{0.47\textwidth}{%
       \textbf{Maximality-based reduction}. Let $B=(S,C,D)$ be a branch where $ X\times Y $ in $\mathcal{P}(C)$ and $u$ in $X$ are selected as the branching subset and the branching vertex. If there exists a candidate vertex pair $\langle u,v \rangle$ in the candidate set such that $\langle u,v \rangle$ satisfies Equation~(\ref{eq:condition}), only one sub-branch $(S\cup\{\langle u,v \rangle\},C\backslash u\backslash v,D\cup \{u\}\times (Y\backslash\{v\}))$ needs to be formed at $B$.
    }%
}

\begin{example}
Consider the branching at branch $B_6=(S_6,C_6,D_6)$ in Figure~\ref{fig:example_branching} where \laks{suppose} $u_2$ is selected as the branching vertex. Recall that $C_6=X_1\times Y_1 \cup X_2\times Y_2 =\{u_2,u_3\}\times \{v_4,v_5\} \cup \{u_4,u_5,u_6,u_7\}\times \{v_1,v_2,v_3,v_7\}$.
We note that $\langle u_2,v_5 \rangle$ satisfies Equation~(\ref{eq:condition}) since (1) $N(u_2,X_1)=X_1\backslash\{u_2\}$ and $N(v_5,Y_1)=Y_1\backslash\{v_5\}$ and (2) $N(u_2,X_2)=\emptyset$ and $N(v_5,Y_2)=\emptyset$. Therefore, we only need to explore one sub-branch $(S_6\cup\{\langle u_2,v_5\rangle\},C_6\backslash u_2\backslash v_5,D_6\cup\{\langle u_2,v_4 \rangle\})$, and other two sub-branches formed at $B_6$ can be pruned.
\end{example}

\subsection{Vertex-Equivalence-based {\chengB Upper Bound}}
\label{subsec:upper-bound}
Consider a current branch $(S,C,D)$ and the largest common subgraph $S^*$ seen so far. Clearly, we can terminate the branch $(S,C,D)$, if the upper bound on the size of common subgraphs to be found in the branch $(S,C,D)$ (or simply, the upper bound of $(S,C,D)$) is no larger than the size of $S^*$. The tighter the upper bound, the more branches we can prune. 

\smallskip
\noindent\textbf{Existing upper bound.} Consider a common subgraph $S_{sub}$ to be found in the branch $(S,C,D)$. For a subset $X\times Y$ in $\mathcal{P}(C)$, we can derive
\begin{equation}
    |S_{sub}|\cap X\times Y \leq ub_{X,Y}:= \min\{|X|,|Y|\}
\end{equation}
since otherwise a common subgraph will contain two distinct vertex pairs $\langle u,v \rangle$ and $\langle u',v' \rangle$ such that $u=u'$ or $v=v'$ (which violates the definition of the bijection). Here, $ub_{X,Y}$ is the upper bound of the number of candidate pairs that are within $X\times Y$ and are in a common subgraph to be found in the branch $(S,C,D)$. Furthermore, since all subsets in $\mathcal{P}(C)$ are disjoint, 
the {\chengC following} existing upper bound of branch $(S,C,D)$, denoted by $ub_{S,C}$~\cite{mccreesh2017partitioning}, 
{\chengC can be derived}.
\begin{equation}
    |S_{sub}| \leq ub_{S,C} := |S|+\sum_{ X\times Y \in \mathcal{P}(C)} ub_{X,Y}
\end{equation}

\smallskip
\noindent\textbf{Motivation.} We observe that the existing upper bound $ub_{X,Y}$ is not tight since some candidate vertex pairs in $X\times Y$ can be pruned from the candidate set $C$ {\chengC based on} the proposed vertex-equivalence-based reductions. In specific, for a candidate vertex pair $\langle u,v \rangle$, if there exists a vertex pair $\langle u',v \rangle$ in $D$ such that $u'\in \Psi(u)$, any common subgraph to be found within $(S,C,D)$ cannot include $\langle u,v \rangle$ and thus $\langle u,v \rangle$ can be pruned from the candidate set $C$. Note that this can be easily verified based on the proposed reduction at the first group. Below, we introduce our upper bound derived with the aid of the structural equivalence on vertices.

\smallskip
\noindent\textbf{New upper bound}. Consider a subset $X\times Y $ in $\mathcal{P}(C)$. Let $u$ be an arbitrary vertex in $X$. We partition $X$ and $Y$ as follows.
\begin{eqnarray}
 X_L=X\cap \Psi(u), X_R=X\backslash X_L\\
 Y_L=\{v\mid \langle u',v \rangle\in D, u'\in \Psi(u)\}, Y_R=Y\backslash Y_L,   
\end{eqnarray}
where $X_L$ consists of those vertices in $X$ that are structurally equivalent to $u$ and $Y_L$ consists of those vertices $v$ in $Y$ which appear  in a vertex pair $\langle u',v \rangle$ in $D$ where $u'\in \Psi(u)$. We then can partition $X\times Y$ as $X_L\times Y_L$, $X_L\times Y_R$, $ X_R\times Y_L$ and $X_R\times Y_R$. Clearly, all vertex pairs in $X_L\times Y_L$ can be pruned as discussed before. 
We note that (1) $S_{sub}$ contains at most $\min\{|X_R|,|Y|\}$ vertex pairs from $X_R\times Y_L$ and $X_R\times Y_R$ since otherwise there exists one vertex in $X_R\cup Y$ that appears in at least two distinct vertex pairs in $S_{sub}$ and thus $S_{sub}$ cannot be a common subgraph; and similarly (2) $S_{sub}$ contains at most $\min\{|X_L|,|Y_R|,\max\{|Y|-|X_R|,0\}\}$ vertex pairs from $X_L\times Y_R$ (note that the additional term $\max\{|Y|-|X_R|,0\}$ is used to ensure that the sum of $\min\{|X_R|,|Y|\}$ and $\min\{|X_L|,|Y_R|,\max\{|Y|-|X_R|,0\}\}$ is no larger than the existing upper bound $ub_{S,C}$). Therefore, $S_{sub}$ contains at most $ub_{X,Y,D}$ vertex pairs from $X\times Y$, where
\begin{equation}
    ub_{X,Y,D}:=\min\{|X_R|,|Y|\}+\min\{|X_L|,|Y_R|,\max\{|Y|-|X_R|,0\}\}.
\end{equation}
Then, we can derive our upper bound of a branch $(S,C,D)$, denoted by $ub_{S,C,D}$, i.e.,
\begin{equation}
    |S_{sub}|\leq ub_{S,C,D}:=|S|+\sum_{X\times Y\in \mathcal{P}(C)} ub_{X,Y,D}.
\end{equation}

In summary, we obtain our new upper bound $ub_{S,C,D}$ as {\chengC above}. It is {\chengC not difficult} to verify that our upper bound is tighter than the existing one, i.e., $ub_{S,C,D}\leq ub_{S,C}$: see Example~\ref{example:upper_bound} for an example where $ub_{S,C,D} < ub_{S,C}$.  
\begin{lemma}[Upper bound]
    \label{lemma:upper_bound}
    Let $(S,C,D)$ be a branch. All common subgraphs to be found in $(S,C,D)$ have the size at most $ub_{S,C,D}$.
\end{lemma}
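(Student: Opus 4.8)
The plan is to exploit the disjointness of the parts in $\mathcal{P}(C)$ to reduce the global bound to a collection of independent per-part bounds, and then to prove each of those by a counting argument driven by the bijection constraint. Fix a common subgraph $S_{sub}$ to be found in $(S,C,D)$, so that $S\subseteq S_{sub}\subseteq S\cup C$. Since $S\subseteq S_{sub}$ and $C=\bigcup_{X\times Y\in\mathcal{P}(C)} X\times Y$ with the parts pairwise disjoint, every pair of $S_{sub}\setminus S$ lies in exactly one part, so $|S_{sub}| = |S| + \sum_{X\times Y\in\mathcal{P}(C)} m_{X,Y}$, where $m_{X,Y}$ denotes the number of pairs of $S_{sub}$ that fall in $X\times Y$. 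Comparing with the definition of $ub_{S,C,D}$, it therefore suffices to show $m_{X,Y}\le ub_{X,Y,D}$ for each part separately, and then sum.

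First I would fix one part $X\times Y$ together with the reference vertex $u\in X$ used to form $X_L,X_R,Y_L,Y_R$. The crucial structural input is that $S_{sub}$ contains no pair from $X_L\times Y_L$: for any $\tilde u\in X_L$ we have $\Psi(\tilde u)=\Psi(u)$, and any $v\in Y_L$ comes with some $\langle u',v\rangle\in D$ where $u'\in\Psi(u)=\Psi(\tilde u)$, so the vertex-equivalence-based reduction at the first group (Lemma~\ref{lemma:reduction_for_VE1}) prunes $\langle\tilde u,v\rangle$; hence no common subgraph to be found in the branch can contain it. Consequently every pair of $S_{sub}$ in $X\times Y$ has its first coordinate either in $X_R$ (contributing $m_R$ pairs, lying in $X_R\times Y$) or in $X_L$ (contributing $m_L$ pairs, which must then lie in $X_L\times Y_R$), and $m_{X,Y}=m_R+m_L$.

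Next I would extract the constraints forced by $S_{sub}$ being a bijection: all first coordinates are distinct and all second coordinates are distinct. Applied within the part, this yields the three inequalities $m_R\le\min\{|X_R|,|Y|\}$, $m_L\le\min\{|X_L|,|Y_R|\}$, and $m_R+m_L\le|Y|$ (the last because the $m_R+m_L$ second coordinates are distinct elements of $Y$). The remaining work is a short case analysis showing these imply $m_R+m_L\le ub_{X,Y,D}$. If $|X_R|>|Y|$, then $\min\{|X_R|,|Y|\}=|Y|$ and $\max\{|Y|-|X_R|,0\}=0$, so $ub_{X,Y,D}=|Y|$ and the bound is exactly the third inequality. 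If $|X_R|\le|Y|$, then $ub_{X,Y,D}=|X_R|+\min\{|X_L|,|Y_R|,|Y|-|X_R|\}$: when the inner minimum equals $|Y|-|X_R|$ we again get $ub_{X,Y,D}=|Y|\ge m_R+m_L$, and otherwise the inner minimum is $\min\{|X_L|,|Y_R|\}$ and the bound follows by adding the first two inequalities.

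I expect the main obstacle to be this last combination step rather than the setup. The two independent bounds $\min\{|X_R|,|Y|\}$ and $\min\{|X_L|,|Y_R|\}$ already control $m_R$ and $m_L$ separately, but their sum can exceed $|Y|$ and so over-count; the correction term $\max\{|Y|-|X_R|,0\}$ is exactly what invokes the global constraint $m_R+m_L\le|Y|$ in that regime. The care needed is to verify that in every case the (possibly reduced) quantity $ub_{X,Y,D}$ still dominates $m_R+m_L$, which is what the case split on $|X_R|$ versus $|Y|$ accomplishes. Summing $m_{X,Y}\le ub_{X,Y,D}$ over all parts of $\mathcal{P}(C)$ and adding $|S|$ then gives $|S_{sub}|\le ub_{S,C,D}$, completing the proof.
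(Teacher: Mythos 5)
Your proof is correct, and its skeleton matches the paper's own (inline) derivation in Section~\ref{subsec:upper-bound}: the paper likewise decomposes over the disjoint parts of $\mathcal{P}(C)$, uses the first-group vertex-equivalence reduction to argue that no common subgraph to be found in the branch uses a pair from $X_L\times Y_L$, and then counts via the bijection constraints. The one place where you genuinely diverge --- and improve on the paper --- is the combination step. The paper asserts, as a standalone per-block claim, that $S_{sub}$ contains at most $\min\{|X_L|,|Y_R|,\max\{|Y|-|X_R|,0\}\}$ pairs from $X_L\times Y_R$; taken literally this is false (e.g., if $|X_R|=3$, $|Y|=|X_L|=|Y_R|=4$ and $S_{sub}$ uses no vertex of $X_R$, it can place $4$ pairs in $X_L\times Y_R$, exceeding $\max\{|Y|-|X_R|,0\}=1$), and the paper only gestures at the real justification with its parenthetical remark that the term $\max\{|Y|-|X_R|,0\}$ ``ensures the sum is no larger than $ub_{S,C}$.'' You instead prove three inequalities that are each individually valid ($m_R\le\min\{|X_R|,|Y|\}$, $m_L\le\min\{|X_L|,|Y_R|\}$, and the global constraint $m_R+m_L\le|Y|$) and show by case analysis on $|X_R|$ versus $|Y|$ that they jointly imply $m_R+m_L\le ub_{X,Y,D}$; this is the sound formalization of what the paper's derivation intends, and it is the right way to write the proof. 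One small point worth making explicit in your write-up: the exclusion of $X_L\times Y_L$ is not a purely combinatorial fact about subgraphs $S_{sub}$ with $S\subseteq S_{sub}\subseteq S\cup C$ --- it relies on the algorithmic reading of ``to be found,'' i.e., that the reduction of Lemma~\ref{lemma:reduction_for_VE1} is applied at every descendant branch (which works because $D$ only grows down the recursion tree, so the witnessing pair $\langle u',v\rangle\in D$ persists until $\tilde u$ is ever chosen as a branching vertex). You implicitly use this, as does the paper, but naming it removes the only real ambiguity in the argument.
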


\begin{example}
\label{example:upper_bound}
    Consider again the branching process at branch $B_6=(S_6,C_6,D_6)$ in Figure~\ref{fig:example_branching}. Recall that $C_6=X_1\times Y_1\cup X_2\times Y_2=\{u_2,u_3\}\times \{v_4,v_5\}\cup \{u_4,u_5,u_6,u_7\}\times \{v_1,v_2,v_3,v_7\}$ and $D_6=\{u_1\}\times\{v_1,v_2,...,v_5\}$. For $X_1\times Y_1$, based on $u_2$, 
    we have $X_{1L}=\{u_2\}$, $X_{1R}=\{u_3\}$, $Y_{1L}=\{v_4,v_5\}$ and $Y_{1R}=\emptyset$. Thus, we have $ub_{X_1,Y_1,D_6}=\min\{1,4\}+\min\{1,0,\max\{1,0\}\}=1$. For $X_2\times Y_2$, based on $u_4$, we have $X_{2L}=\{u_4\}$, $X_{2R}=\{u_5,u_6,u_7\}$, $Y_{2L}=\emptyset$ and $Y_{2R}=\{v_1,v_2,v_3,v_7\}$. Thus, we have $ub_{X_2,Y_2,D_6}=\min\{3,4\}+\min\{1,4,\max\{1,0\}\}=4$. Therefore, we have $ub_{S_6,C_6,D_6}=1+1+4=6$, which is smaller than the existing bound $ub_{S,C}=7$.
\end{example}

{\revision
\noindent\textbf{Remark.} Our new upper bound $ub_{S,C,D}$ varies {\chengE with} different choices of $u$ due to the partition of $X$ and $Y$. We can potentially obtain a tighter upper bound by exploring all possible choices of $u$. However, it {\chengE would} introduce a large amount of time costs, thus degrading the performance of \texttt{RRSplit}. Therefore, as a trade-off,  we randomly select $u$ when computing the upper bound.
}

\subsection{Summary and Analysis}
\label{subsec:summary}
\noindent\textbf{Summary.}
We summarize our algorithm, namely \texttt{RRSplit}, in Algorithm~\ref{alg:rrsplit}, which incorporates the newly proposed vertex-equivalence-based reductions, the maximality-based reduction and the vertex-equivalence-based upper bound. Specifically, \texttt{RRSplit} differs with \texttt{McSplit} in the following aspects. (1) It maintains one additional auxiliary data structure, namely exclusion set $D$, for each formed {\chengB branch}, which is initialized as 
the empty set and recursively updated as discussed. (2)  It {\chengB prunes} a branch $(S,C,D)$ if the newly proposed vertex-equivalence-based upper bound $ub_{S,C,D}$ is no larger than the \laks{largest common subgraph size}   seen so far, i.e., $|S^*|$ (Line 7). We remark that $ub_{S,C,D}$ is tighter than the existing one $ub_{S,C}$, i.e., $ub_{S,C,D}\leq ub_{S,C}$ and thus more branches can be pruned. (3) It creates only one sub-branch and prunes all others if the maximality-based reduction is triggered (Lines 9-11). (4) Based on the vertex-equivalence-based reduction, it prunes those sub-branches at the first group that hold all common subgraphs inside cs-isomorphic to the one found before (Lines 15-16), and refines the formed sub-branch at the second group by removing from the candidate set all those candidate vertex pairs consisting of a vertex in $\Psi(u)$ (Line 19).
We remark that our implementation of \texttt{RRSplit} in the experiments adopts the same heuristic policies for selecting branching subset $X\times Y$, branching vertex $u$ (Line 8) and vertex $v$ (Line 14) as \texttt{McSplit} {\chengB does}.
Besides, we can easily prove that \texttt{RRSplit} finds the maximum common subgraph based on our discussion above. Finally, we analyze the {space complexity and time complexity} of \texttt{RRSplit} as below.

\begin{algorithm}{}
\small
\caption{Our proposed algorithm: \texttt{RRSplit}}
\label{alg:rrsplit}
\KwIn{Two graphs $Q=(V_Q,E_Q)$ and $G=(V_G,E_G)$}
\KwOut{The maximum common subgraph}
$S^*\leftarrow \emptyset$; \tcp{Global variable}
\texttt{RRSplit-Rec}$(\emptyset,V_Q\times V_G,\emptyset)$\; \textbf{Return} $S^*$;\\
\SetKwBlock{Enum}{Procedure \texttt{RRSplit-Rec}$(S,C,D)$}{}
\Enum{
    \lIf{$|S|>|S^*|$}{$S^*\leftarrow S$}
    \tcc{Termination (Lemma~\ref{lemma:upper_bound})}
    \lIf{$C=\emptyset$}{\textbf{return}}
    \lIf{ $ub_{S,C,D}\leq |S^*|$}
        {\textbf{return}} 
    \tcc{Branching}
    Select a branching vertex $u$ and a branching subset $X\times Y$ from $\mathcal{P}(C)$  based on a policy\;
    \tcc{Maximality-based reduction}
    \If{there exists a vertex $v$ in $Y$ such that $\langle u,v \rangle$ satisfies Equation~(\ref{eq:condition})}{
        \texttt{RRSplit-Rec}($S\cup\{\langle u,v \rangle\},C\backslash u\backslash v,D\cup \{u\}\times (Y\backslash\{v\})$)\;
        \textbf{return}\;
    }
    \tcc{Branching at the first group}
    $Y_{temp}\leftarrow Y$\;
    \For{$i=1,2,...,|Y|$}{
        Select and remove a vertex $v$ from $Y_{temp}$ based on a policy\;
        \If{there exists a vertex pair $\langle u',v \rangle$ in $D$ such that $u'\in \Psi(u)$}{\textbf{continue;}}
        Refine candidate set $C\backslash u\backslash v$ as $C_i$ based on Equation~(\ref{eq:update_candidate_set})\;
        \texttt{RRSplit-Rec}($S\cup\{\langle u,v\rangle\},C_i,D\cup\{u\}\times (Y\backslash Y_{temp})$);
    }
    \tcc{Branching at the second group}
    \texttt{RRSplit-Rec}($S,C\backslash \Psi(u),D$)\;
}
\end{algorithm}

\smallskip
\noindent\textbf{Space complexity}. We note that \texttt{RRSplit} recursively maintains three global data structures, namely $S$, $C$ and $D$, for each branch, which dominate the space complexity of \texttt{RRSplit}. Let $S^*$ be the largest common subgraph between  graphs $Q$ and $G$. \underline{First}, partial solution $S$ is a set of vertex pairs and its size is bounded by $O(|S^*|)$. \underline{Second}, candidate set $C$ is also a set of vertex pairs and can be partitioned as several subsets, i.e., $C= X_1\times Y_1 \cup  X_2\times Y_2\cup\cdots\cup X_c\times Y_c$ where $c$ is a positive integer, based on Equation~(\ref{eq:update_candidate_set}). We note that subsets in $X_1,X_2,...,X_c$ (resp. $Y_1,Y_2,...,Y_c$) are mutually disjoint and $X_1\cup X_2\cup .... \cup X_c=X$ (resp. $Y_1\cup Y_2\cup .... \cup Y_c=Y$), as discussed in the proof of Lemma~\ref{lemma:reduction_for_VE1}. Therefore, $C$ can be stored {\chengC as} $c$ subsets, each of which $\langle X_i,Y_i\rangle$ ($1\leq i\leq c$) consists of two sets $X_i$ and $Y_i$. Thus, the size of $C$ is bounded by $O(|V_Q|+|V_G|)$. \underline{Third}, $D$ is a set of vertex pairs and consists of at most $|S^*|\cdot |V_G|$ different vertex pairs since for a vertex pair $\langle u,v \rangle$ in $D$, (1) $u$ must {\chengB appear} in $S$ based on our maintenance of $D$ and thus has at most $|S^*|$ different values and (2) $v$ has at most $|V_G|$ different values clearly. In summary, the space complexity of \texttt{RRSplit} is $O(|V_Q|+|S^*|\times|V_G|)$. 

\smallskip
\noindent\textbf{Time complexity of the proposed reductions}. \underline{First}, the reduction at the first group takes $O(|V_Q|+|V_G|)$ {\chengB time} (Lines 15-16). In specific, $D$ is organized as several disjoint subsets, i.e., $D=\{u_1\} \times A_1 \cup  \{u_2\} \times A_2\cup \cdots \cup \{u_d\} \times A_d $ where $d$ is a positive integer. 
Thus, it can be conducted in two steps: (1) for each vertex $u_i$ {\chengC appearing} in $D$, it takes $O(1)$ to check whether $u_i\in \Psi(u)$ and (2) if $u_i\in \Psi(u)$, it takes $O(|A_i|)$ to check whether $\langle u_i,v \rangle\in  \{u_i\} \times A_i$.
We note that for any two distinct vertices $u_i$ and $u_j$ appearing in $D$ such that $u_i\in \Psi(u_j)$, it is no hard to verify that $A_i\cap A_j=\emptyset$ due to the reduction at the first group (for which we put the details of the proof in the 
\ifx \CR\undefined
Appendix\else technical report~\cite{TR}\fi). 
As a result, we have $\sum_{u_i\in \Psi(u)} (|A_i|)\leq |V_G|$.  
\underline{Second}, the reduction at the second group runs in $O(|X|)$ for updating $C\backslash\Psi(u)$ at Line 19, which is bounded by $O(|V_Q|)$. In specific, it can be done by removing from $X$ all vertices in $\Psi(u)$ (note that, given all structurally equivalent classes, determining whether a vertex belongs to $\Psi(u)$ can be done in $O(1)$).
\underline{Third}, the maximality reduction runs in $O(\sum_{\langle X',Y' \rangle\in C}|X'|+|Y'|\cdot |Y|)$, which is bounded by $O(|V_Q|+|V_G|^2)$. In specific, for each vertex in $|Y|$, it needs to check the condition in Equation~(\ref{eq:condition}).
{\revision \underline{Fourth}}, the new upper bound can be obtained in $O(|V_Q|+|V_G|^2)$. In specific, the time cost is dominated by the computation of $ub_{X',Y',D}$ for each subset $\langle X',Y'\rangle$ in $C$. $ub_{X',Y',D}$ can be obtained in $O(|X'|+\sum_{u_i\in\Psi(u')} |A_i|+|Y'|)$, where $u'$ is a random vertex selected from $X'$ and $\{u_i\}\times A_i$ is a subset in $D$, which is bounded by $O(|X'|+|V_G|)$. Therefore, the new upper bound can be obtained in $O(\sum_{X'\times Y'\in C} (|X'|+|V_G|))$, which is bounded by $O(|V_Q|+|V_G|^2)$.

\smallskip
\noindent\textbf{Worst-case time complexity of \texttt{RRSplit}.} We note that the worst-case time complexity of \texttt{RRSplit} is dominated by the number of recursive calls of \texttt{RRSplit-Rec} (i.e., the number of formed branches) since \texttt{RRSplit-Rec} runs in polynomials of $|V_Q|$ and $|V_G|$. Formally, we have the following theorem.
\begin{theorem}
    Assume that $|V_Q|\leq |V_G|$. The worst-case time complexity of our proposed \texttt{RRSplit}  is $O^*((|V_G|+1)^{|V_Q|})$, where $O^*(\cdot)$ suppresses the polynomials.
\end{theorem}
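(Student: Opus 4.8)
The plan is to reduce the running-time bound to a purely combinatorial count of the number of branches (recursive calls of \texttt{RRSplit-Rec}). Since each call performs only polynomially-much work---dominated, as established above, by the $O(|V_Q|+|V_G|^2)$ cost of computing $ub_{S,C,D}$ and refining candidate sets via Equation~(\ref{eq:update_candidate_set})---it suffices to show that the recursion tree has $O^*((|V_G|+1)^{|V_Q|})$ nodes, and then multiply by this polynomial factor.

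First I would prove the crucial structural invariant: \emph{every child branch created from $(S,C,D)$ removes the current branching vertex $u$ from its candidate set.} Indeed, the branching vertex $u$ selected at Line~8 always lies in some $X$ with $X\times Y\in\mathcal{P}(C)$, hence $u\in V_Q$. In the first group (Lines~17--18) and when the maximality reduction fires (Lines~10--11) the pair $\langle u,v\rangle$ is moved from $C$ into $S$, so $u$ leaves the candidate set; in the second group (Line~19) the set $C\backslash\Psi(u)$ explicitly discards $u$ (together with its structurally-equivalent vertices). Because a branching vertex is always chosen from the $Q$-side of the candidate set, once $u$ has been branched on it can never be selected again in any descendant. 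Consequently the branching vertices along any root-to-leaf path are pairwise distinct vertices of $V_Q$, so the recursion tree has height at most $|V_Q|$; here the assumption $|V_Q|\le|V_G|$ guarantees that this exponent is the smaller of the two vertex-set sizes.

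Next I would bound the branching factor of each node. If the maximality reduction triggers, the branch has exactly one child (Lines~9--11). Otherwise the first group contributes at most $|Y|$ children---one per $v\in Y$, with some skipped by the vertex-equivalence reduction at Line~16---and the second group contributes exactly one child (Line~19), for a total of at most $|Y|+1$. As $Y\subseteq V_G$ we have $|Y|+1\le|V_G|+1$, so every node has at most $|V_G|+1$ children.

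Combining the two bounds, the recursion tree has height at most $|V_Q|$ and branching factor at most $|V_G|+1$, hence at most $(|V_G|+1)^{|V_Q|}$ leaves. Using the elementary fact that a rooted tree with $L$ leaves and height $h$ has at most $L\,(h+1)$ nodes (assign each node to the leftmost leaf in its subtree; each leaf then receives at most $h+1$ of its ancestors), the total number of branches is at most $(|V_Q|+1)(|V_G|+1)^{|V_Q|}=O^*((|V_G|+1)^{|V_Q|})$. Multiplying by the per-branch polynomial work yields the claimed complexity. I expect the only delicate point to be the distinctness invariant for branching vertices: once one verifies, uniformly across all three branching rules, that every child drops the branching vertex from its candidate set, the ``height times branching-factor'' counting is routine.
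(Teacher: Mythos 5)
Your proof is correct, and it takes a genuinely different route from the paper's. The paper never reasons about the height or fan-out of the recursion tree: it groups branches by their partial solution, observes that among all children of a branch only the second-group child keeps the same $S$ (hence at most $|V_Q|$ branches can share any given partial solution), counts the possible partial solutions of size $k$ as $k!\binom{|V_Q|}{k}\binom{|V_G|}{k}$, and then bounds $|V_Q|\sum_{k=0}^{|V_Q|}k!\binom{|V_Q|}{k}\binom{|V_G|}{k}\le |V_Q|\,(|V_G|+1)^{|V_Q|}$ by re-indexing the sum and applying the binomial theorem; the assumption $|V_Q|\le|V_G|$ enters precisely in that algebra, where $(|V_Q|-k)!\binom{|V_G|}{|V_Q|-k}$ is rewritten as $|V_G|!/(|V_G|-|V_Q|+k)!$. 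You instead bound the tree itself: branching vertices along any root-to-leaf path are pairwise distinct vertices of $V_Q$ (so height at most $|V_Q|$), every node has at most $|Y|+1\le|V_G|+1$ children, and the leaves-times-$(h+1)$ conversion turns the $(|V_G|+1)^{|V_Q|}$ leaf bound into a node bound. Your argument is more elementary (no combinatorial identities), makes transparent why $V_Q$ supplies the exponent and $V_G$ the base, and actually establishes the bound without the assumption $|V_Q|\le|V_G|$ (the assumption only ensures the exponent is the smaller side), whereas the paper's state-counting argument formally uses it; conversely, the paper's approach yields the finer per-size information (how many branches can have $|S|=k$) and is insensitive to the shape of the recursion. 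Two points you should make explicit, though both are immediate from Algorithm~\ref{alg:rrsplit}: (i) candidate sets only shrink from a branch to each of its children (Lines 10, 17--19 together with Equation~(\ref{eq:update_candidate_set})), which is what licenses ``once $u$ has been branched on it can never be selected again in any descendant''; and (ii) every non-leaf node of the recursion tree does select a branching vertex, so the distinctness invariant covers all internal nodes on a path. Neither is a substantive gap.
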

\begin{proof}
    It is easy to verify that the worst-case time complexity of \texttt{RRSplit} is bounded by the number of branches. Consider a branch $B=(S,C,D)$. For all sub-branches formed at $B$ by selecting a branching vertex $u^*$, we observe that only the sub-branch in the second group has the same partial solution $S$ with $B$. Based on this, we can easily deduce that there are at most $|V_Q|$ branches which share the same partial solution. Besides,  we observe that each vertex in $V_p\cup V_q$ only appears in one pair of $S$, i.e., for any two distinct pairs $\langle u,v \rangle$ and $\langle u',v' \rangle$ in $S$, we have $u\neq u'$ and $v\neq v'$. Based on this, let $|S|=k$ where $0\leq k\leq |V_Q|$, and we can deduce that there are at most $k!\binom{|V_Q|}{k}\binom{|V_G|}{k}$ different partial solutions with the size of $k$ by applying the multiplication principle (note that $V_p$ has $\binom{|V_Q|}{k}$ different choices, $V_q$ has $\binom{|V_G|}{k}$ different choices, and the bijection $\phi$ between $V_p$ and $V_q$ has $k!$ different choices). Therefore, the number of branches is at most
    \begin{equation}
       T=|V_Q| \sum_{k=0}^{|V_Q|} k! \binom{|V_Q|}{k}\binom{|V_G|}{k}.
    \end{equation}
    We then show that $T$ is bounded by $O^*((|V_G|+1)^{|V_Q|})$ as below.
    \begin{eqnarray}
       T&=&|V_Q| \sum_{k=0}^{|V_Q|} (|V_Q|-k)! \binom{|V_Q|}{k}\binom{|V_G|}{|V_Q|-k}\\
       &=&|V_Q| \sum_{k=0}^{|V_Q|} \frac{(|V_{G}|)!}{(|V_G|-|V_Q|+k)!}\binom{|V_Q|}{k}\\
       &\leq& |V_Q|\sum_{k=0}^{|V_Q|} (|V_G|)^{|V_Q|-k} \binom{|V_Q|}{k}=|V_Q|(|V_G|+1)^{|V_Q|},
    \end{eqnarray}
    where $(|V_{G}|)!/(|V_G|-|V_Q|+k)!$ is  much smaller than $(|V_G|)^{|V_Q|-k}$ clearly and $(|V_G|+1)^{|V_Q|}$ in the last equation is derived by the binomial theorem.
\end{proof}

\smallskip
\noindent\textbf{Remark.} \laks{Note that the assumption that $|V_Q|\leq |V_G|$ is not a restrictive assumption: it is realistic in practice.} We remark that to our best knowledge, the achieved worst-case time complexity $O^*((|V_G|+1)^{|V_Q|})$ of \texttt{RRSplit} \laks{matches} 
{\chengB the best-known worst-case time complexity for the problem}
~\cite{suters2005new}. However, the algorithm proposed in~\cite{suters2005new} is of theoretical {\chengB interest} only and is not {\chengB practically} efficient. Besides, we note that \texttt{McSplit} and its variants~\cite{zhoustrengthened,liu2020learning,liu2023hybrid,mccreesh2017partitioning} do not have any theoretical guarantees on the worst-case time complexity. 

\section{Experiments}
\label{sec:exp}
\noindent\textbf{Datasets.} {\YuiR Following existing studies~\cite{liu2023hybrid,zhoustrengthened,liu2020learning,mccreesh2017partitioning,solnon2015complexity,hoffmann2017between},} we use four benchmark graph collections, namely biochemicalReactions (\textsf{BI}), images-CVIU11 (\textsf{CV}), images-PR15 (\textsf{PR}) and LV (\textsf{LV}), in the experiments. All datasets are collected from http://liris.cnrs.fr/csolnon/SIP.html and come from real-world applications in various domains, {\Yui as shown in Table~\ref{tab:my_label}}. Specifically, \textsf{BI} contains 136 unlabeled bipartite graphs, each of which corresponds to a biochemical reaction network. \textsf{CV} contains 44 pattern graphs and 146 target graphs, which are generated from segmented images. \textsf{PR} contains 24 pattern graphs and 1 target graph, which are also from segmented images. \textsf{LV} contains 112 graphs generated from biological networks. 
{\YuiR All graphs have up to thousands of vertices. We note that (1) solving our problem on two graphs with beyond 10K vertices is challenging based on the worst-case time complexity of $O^*((|V_G|+1)^{|V_Q|})$, (2) the largest graph used in previous studies~\cite{liu2023hybrid,zhoustrengthened,liu2020learning,mccreesh2017partitioning} has 6,771 vertices, which is also covered (in LV) by our experiments, and (3) finding the largest common subgraph between two graphs with thousands of vertices has found many real applications~\cite{ehrlich2011maximum}.}
{\Yui Following existing studies~\cite{liu2023hybrid,zhoustrengthened,liu2020learning,mccreesh2017partitioning,solnon2015complexity,hoffmann2017between}}, for \textsf{BI} and \textsf{LV}, we generate and test the problem instances (i.e., $Q$ and $G$) by pairing any two distinct graphs; and for \textsf{CV} and \textsf{PR} {\revision which consist of two types of graphs, namely pattern graphs and target graphs}, we test all those problem instances with one graph $Q$ from pattern graphs and the other $G$ from target graphs.

\begin{table*}[]
    \centering
    \caption{\Yui Datasets used in the experiments (``\# of solved instances'' refers to the number of instances solved by algorithms within 1,800 seconds and ``Achieved speedups'' refers to the percentage of the solved instances that \texttt{RRSplit} runs at least 5$\times$/10$\times$/100$\times$ faster than \texttt{McSplitDAL})}
    \vspace{-0.15in}
    \begin{tabular}{|c|c|c|c|c|c|c|c|c|c|c|}
        \hline
        \multirow{2}{*}{Dataset} & \multirow{2}{*}{Domain} & \multirow{2}{*}{\# of graphs} & \multirow{2}{*}{\# of instances} & \multirow{2}{*}{\# of vertices} & \multicolumn{2}{c|}{\# of solved instances} & \multicolumn{3}{c|}{Achieved speedups} \\
        \cline{6-10}
        & & & & & \texttt{RRSplit} & \texttt{McSplitDAL} & 5$\times$ & 10$\times$ & 100$\times$\\
        \hline
        \textsf{BI} & Biochemical & 136 & 9,180 & 9$\sim$ 386 & 7,730 & 4,696 & 91.3\% & 84.4\% & 69.7\% \\
        \textsf{CV} & Segmented images & 190 & 6,424 & 22$\sim$ 5,972 & 1,351 & 1,291& 76.5\% & 48.6\% & 0.2\% \\
        \textsf{PR} & Segmented images & 25& 24& 4$\sim$ 4,838  & 24 & 24 & 91.7\% & 91.7\% & 58.3\% \\
        \textsf{LV} & Synthetic & 112 & 6,216& 10$\sim$ 6,671 & 1,059 & 883 & 68.0\% & 54.7\% & 38.3\%\\
        \hline
    \end{tabular}
    
    \label{tab:my_label}
\end{table*}

\begin{table*}[]
    \centering
    \caption{\YuiR Comparison of running time on all datasets (statistics of achieved speedups in Figure~\ref{fig:all_datasets_T})}
    \vspace{-0.15in}
    \begin{tabular}{|c|c|c|c|c|c|c|}
        \hline
        \multirow{2}{*}{Dataset} & \multicolumn{3}{c|}{\texttt{RRSplit} runs faster} & \multicolumn{3}{c|}{\texttt{McSplitDAL} runs faster} \\
        \cline{2-7}
        & \% of instances & Avg. speedup & Max. speedup & \% of instances & Avg. speedup & Max. speedup\\
        \hline
        BI& 99.43\% & 3.3$\times 10^4$ & $10^6$ & 0.5\% & 24.81 & 872.37 \\
        CV& 92.15\% & 10.92 & 161 & 7.84\% & 4.96 & 38.97 \\
        PR& 95.83\% & 139.39 & 234 & 4.17\% & 1.23 & 1.23 \\
        LV& 93.48\% & 1.2$\times 10^4$ & $10^6$ & 6.51\% & 24.23 & 652.13 \\
        \hline
    \end{tabular}
    
    \label{tab:results}
\end{table*}

\smallskip
\noindent\textbf{Algorithms.} We compare the newly proposed algorithm \texttt{RRSplit} with \texttt{McSplitDAL}~\cite{liu2023hybrid}. To be specific, \texttt{McSplitDAL} is one variant of \texttt{McSplit} as introduced in Section~\ref{sec:sota}, which follows the framework of \texttt{McSplit} (i.e., Algorithm~\ref{alg:mcsplit}) and introduces some learning-based techniques for optimizing the policies of selecting vertices at line 6, line 8 and line 10 of Algorithm~\ref{alg:mcsplit}. To our best knowledge, \texttt{McSplitDAL} is the state-of-the-art algorithm and runs significantly faster than previous solutions, including \texttt{McSplitLL}~\cite{zhoustrengthened} and \texttt{McSplitRL}~\cite{liu2023hybrid}. Besides these, in order to study the effectiveness of different reductions employed in our algorithm \texttt{RRSplit}, we evaluate three variants of  \texttt{RRSplit} --  
{\YuiR \texttt{RRSplit-VE}, \texttt{RRSplit-MB}, and \texttt{RRSplit-UB}, respectively obtained by turning off vertex-equivalence based reductions, maximality based reductions,  and  vertex-equivalence based upper bound}. 

\smallskip
\noindent\textbf{Implementation and metrics.} All algorithms are implemented in C++ and compiled with -O3 optimization. All experiments run on a Linux machine with a 2.10GHz Intel CPU and 128GB memory. Note that, for the implementation of \texttt{McSplitDAL}, we directly use the source code from the authors of~\cite{liu2023hybrid}. We record and compare the total running times of the algorithms on different problem instances (note that the measured running time excludes the I/O time of reading graphs from the disk). We set the running time limit (INF) as 1,800 seconds by default. Our data and code are available at https://github.com/KaiqiangYu/SIGMOD25-MCSS. 

\subsection{Comparison among algorithms}

\begin{figure}[]
		\subfigure[\textsf{BI}]{
			\includegraphics[width=4.0cm]{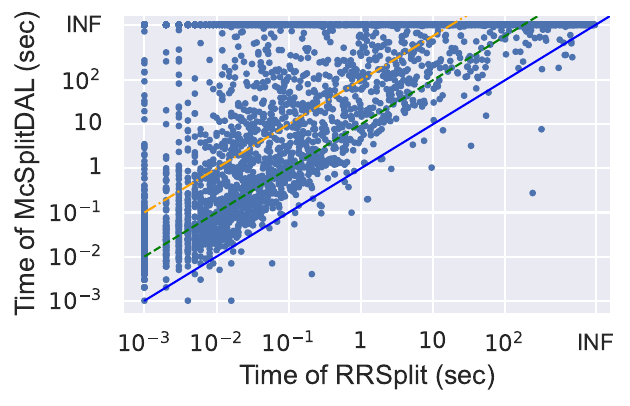}
		}
		\subfigure[\textsf{CV}]{
			\includegraphics[width=4.0cm]{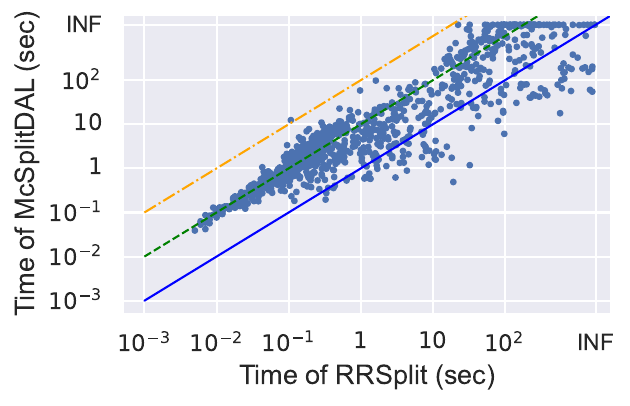}
		}
		\subfigure[\textsf{PR}]{
			\includegraphics[width=4.0cm]{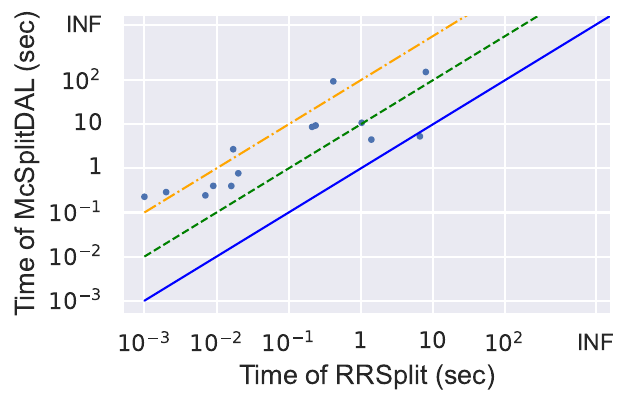}
		}	
		\subfigure[\textsf{LV}]{
			\includegraphics[width=4.0cm]{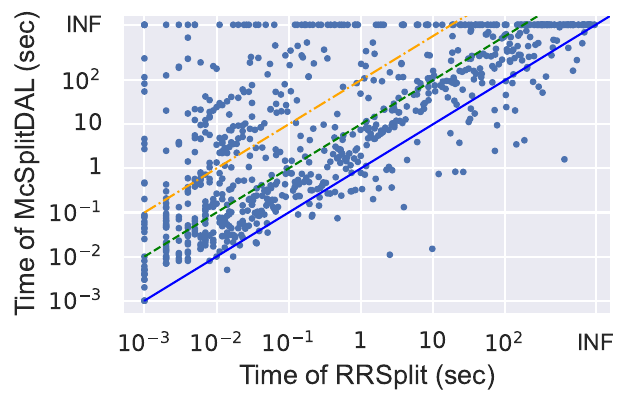}
		}
        \vspace{-0.15in}
	\caption{Running time on all datasets. {\Yui For those problem instances locating at the right side of dash line `- .' with orange color (resp. `- -' with green color),  \texttt{RRSplit} achieves at least 100$\times$ (resp. 10$\times$) speedup compared with \texttt{McSplitDAL}.}}
	\label{fig:all_datasets_T}
\end{figure}

\noindent\textbf{All datasets (running time)}. We compare our algorithm \texttt{RRSplit} with the baseline \texttt{McSplitDAL} on all graph collections. {\YuiR Following some existing works~\cite{mccreesh2016clique}}, we report the running times of the algorithms on various problem instances in Figure~\ref{fig:all_datasets_T}. 
Specifically, each dot in the scatter figures represents a problem instance, with the $x$-axis (resp. $y$-axis) corresponding to the running time of \texttt{RRSplit} (resp. \texttt{McSplitDAL}) {\chengC on the instance}. Hence, for those problem instances with small values on $x$-axis and large values on $y$-axis (which thus locate on the top left region of the figures), \texttt{RRSplit} performs better than \texttt{McSplitDAL}.
We mark the running time as INF if the problem instance cannot be solved within the default time limit.
{\YuiR Besides, we also provide some statistics in Table~\ref{tab:my_label} and Table~\ref{tab:results}.}
We observe that (1) \texttt{RRSplit} outperforms \texttt{McSplitDAL} by achieving around one to {\Yui four} orders of magnitude speedup {\YuiR (in average)} on the majority {\YuiR (above 92\%)} of the tested problem instances and (2) \texttt{McSplitDAL} cannot handle all problem instances within the time limit. 
We do note that \texttt{McSplitDAL} runs slightly faster on a few {\YuiR (below 8\%)} problem instances in \textsf{CV} and \textsf{LV}. {\YuiR Some possible reasons are as follows. 
First, our \texttt{RRSplit} introduces some extra time costs for conducting the proposed reductions as well as computing the upper bound. Second, the heuristic polices adopted in \texttt{RRSplit} and \texttt{McSplitDAL} for branching may have different behaviors. 
In specific, on these problem instances, the heuristic policies may help \texttt{McSplitDAL} to find a large common subgraph quickly so as to prune more unpromising branches {\revision via the upper-bound based reduction} (note that they are based on reinforcement learning {\cheng and the behaviors of the learned policy is} based on the explored branches during the running time). } 


\begin{figure}[]
		\subfigure[\textsf{BI}]{
			\includegraphics[width=4.0cm]{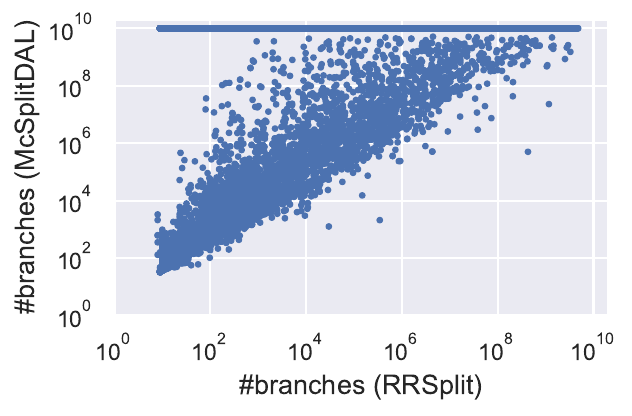}
		}
		\subfigure[\textsf{CV}]{
			\includegraphics[width=4.0cm]{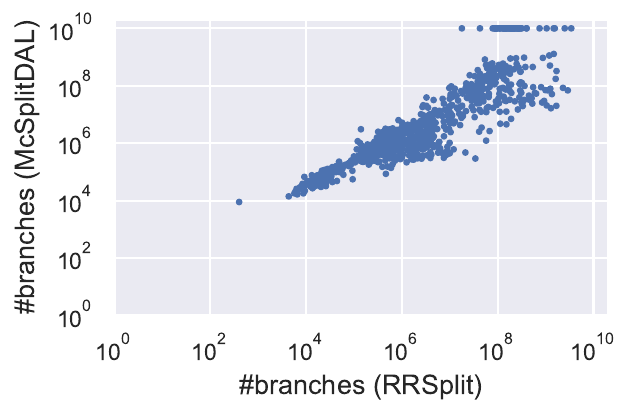}
		}
		\subfigure[\textsf{PR}]{
			\includegraphics[width=4.0cm]{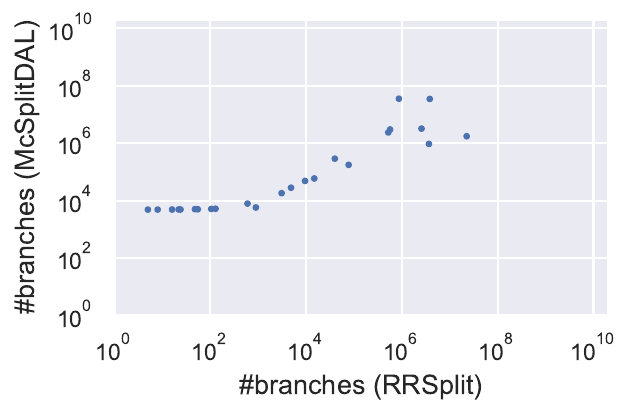}
		}	
		\subfigure[\textsf{LV}]{
			\includegraphics[width=4.0cm]{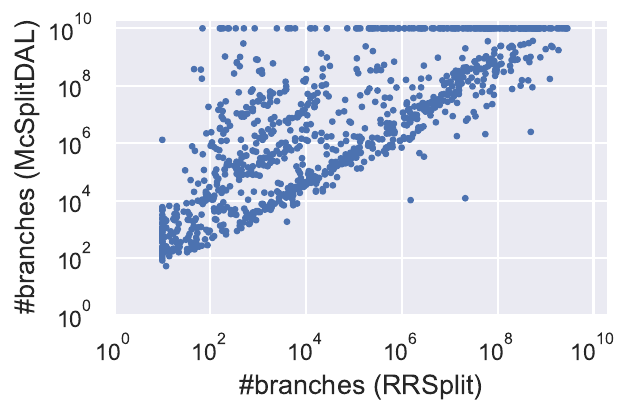}
		}
        \vspace{-0.15in}
	\caption{Number of formed branches on all datasets}
	\label{fig:all_datasets_BT}
\end{figure}

\smallskip
\noindent\textbf{All datasets (number of formed branches)}. We report the number of branches formed by the algorithms on different problem instances in Figure~\ref{fig:all_datasets_BT}. Similarly, each dot in the scatter figures represents a problem instance, with the $x$-axis (resp. $y$-axis) corresponding to the number of branches formed by \texttt{RRSplit} (resp. \texttt{McSplitDAL}) {\chengC on the instance}. We have the following observations. First, the number of branches formed by \texttt{RRSplit} is significantly {\chengC smaller} than that formed by \texttt{McSplitDAL}, e.g., the former is around 10\% - 0.01\% of the latter on the most of problem instances. This shows the effectiveness of our proposed maximality-based reductions and vertex-equivalence-based reductions.
Second, the distribution of the number of formed branches in Figure~\ref{fig:all_datasets_BT} is consistent with that of the running time in Figure~\ref{fig:all_datasets_T}. This indicates the achieved speedups on the running time \laks{can be traced} to our newly-designed reductions.

\begin{figure}[]
		\subfigure[\textsf{BI}]{
			\includegraphics[width=4.0cm]{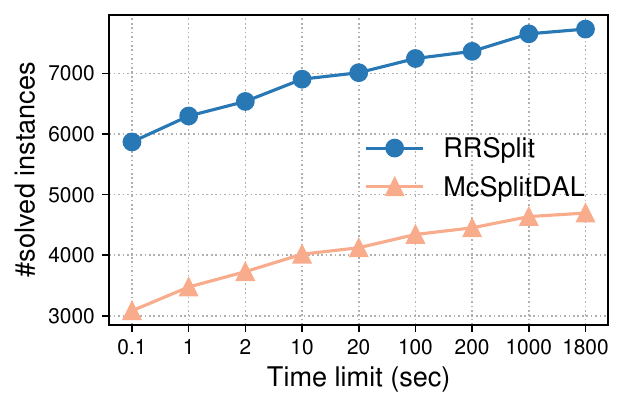}
		}
		\subfigure[\textsf{CV}]{
			\includegraphics[width=4.0cm]{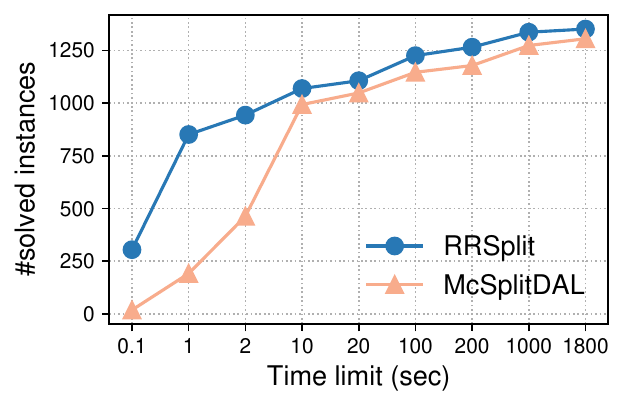}
		}
		\subfigure[\textsf{PR}]{
			\includegraphics[width=4.0cm]{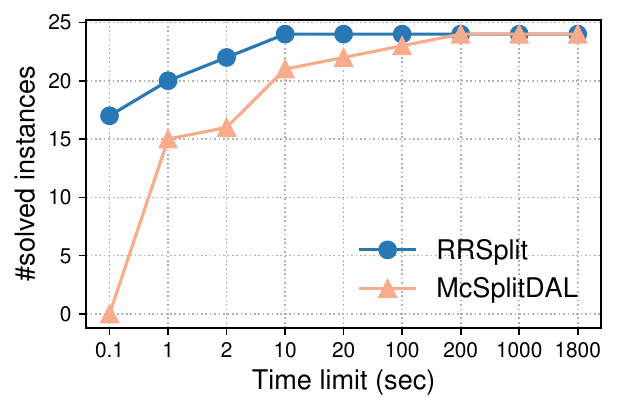}
		}	
		\subfigure[\textsf{LV}]{
			\includegraphics[width=4.0cm]{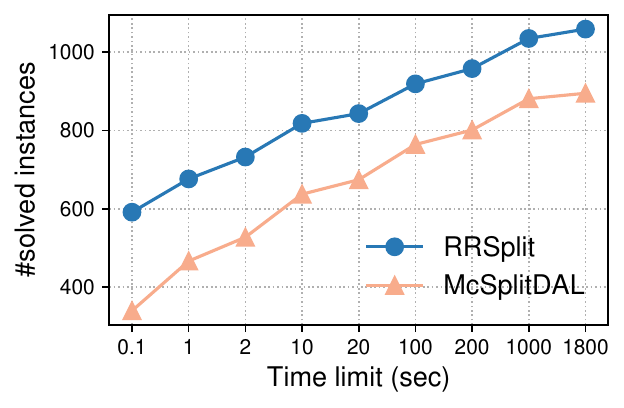}
		}
        \vspace{-0.2in}
	\caption{Comparison by varying time limits}
	\label{fig:all_vary_T}
\end{figure}

\begin{figure}[]
		\subfigure[\textsf{BI}]{
			\includegraphics[width=4.0cm]{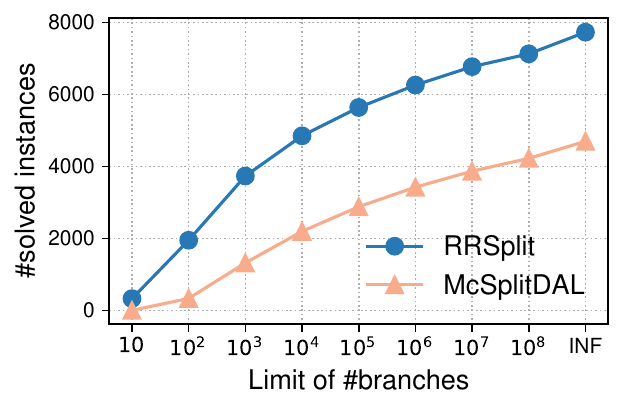}
		}
		\subfigure[\textsf{CV}]{
			\includegraphics[width=4.0cm]{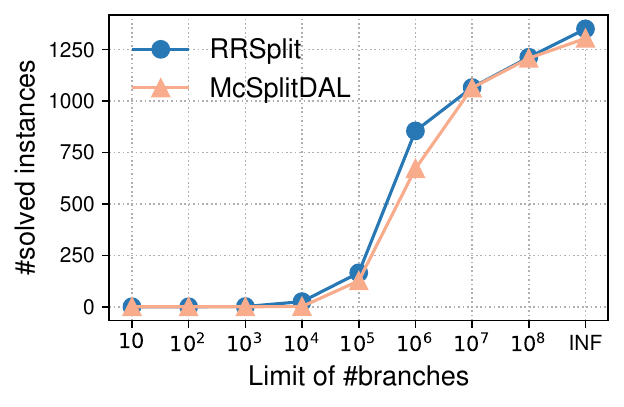}
		}
		\subfigure[\textsf{PR}]{
			\includegraphics[width=4.0cm]{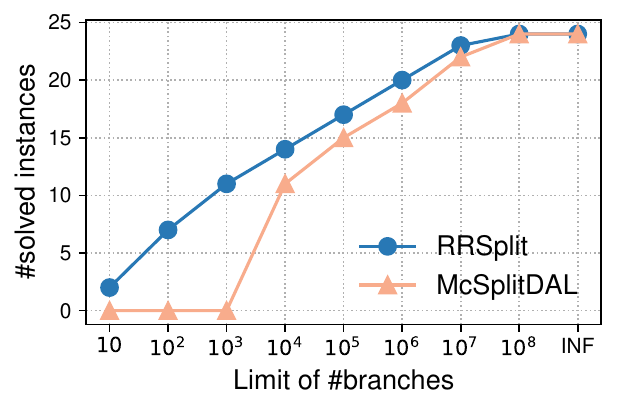}
		}	
		\subfigure[\textsf{LV}]{
			\includegraphics[width=4.0cm]{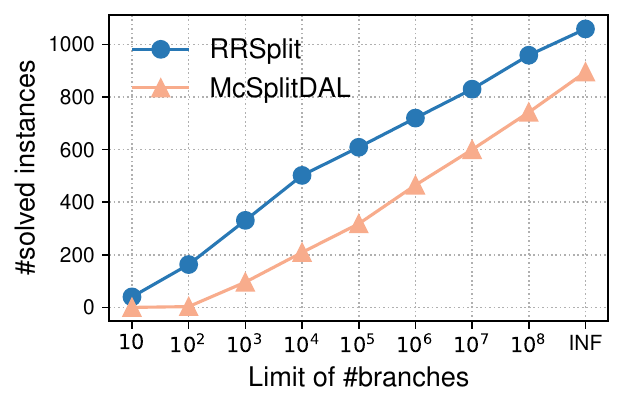}
		}
        \vspace{-0.2in}
	\caption{Comparison by varying the limit of number of formed branches}
	\label{fig:all_vary_B}
\end{figure}

\smallskip
\noindent\textbf{Varying time limits}. We report the number of solved problem instances in Figure~\ref{fig:all_vary_T} as the time limit is varied. Clearly, all algorithms solve more problem instances as the time limit increases. We observe that \texttt{RRSplit} solves more problem instances than \texttt{McSplitDAL} within the same time limit. In particular, \texttt{RRSplit} with a time limit of 1 second even solves more problem instances than \texttt{McSplitDAL} with a time limit of 10 seconds in all graph collections {\cheng except for} \textsf{CV}; and on \texttt{PR}, \texttt{RRSplit} solves all problem instances within the time limit of 10 seconds. This further demonstrates the superiority of our algorithm \texttt{RRSplit} over the baseline \texttt{McSplitDAL}. 

\smallskip
\noindent\textbf{Varying the limits of number of formed branches}. We report the number of solved problem instances in Figure~\ref{fig:all_vary_B} as the limit on  number of formed branches is varied. We note that the more branches are allowed to be formed, the more instances will be solved. We observe that (1) \texttt{RRSplit} solves more problem instances than \texttt{McSplitDAL} within the same limit of the number of formed branches and (2) the results in Figure~\ref{fig:all_vary_B} show  similar tendencies as those in Figure~\ref{fig:all_vary_T} in general. This further {\cheng explains} the practical superiority of the newly proposed reductions.

\begin{figure}[]
		\subfigure[\textsf{Running time (BI)}]{
			\includegraphics[width=4.0cm]{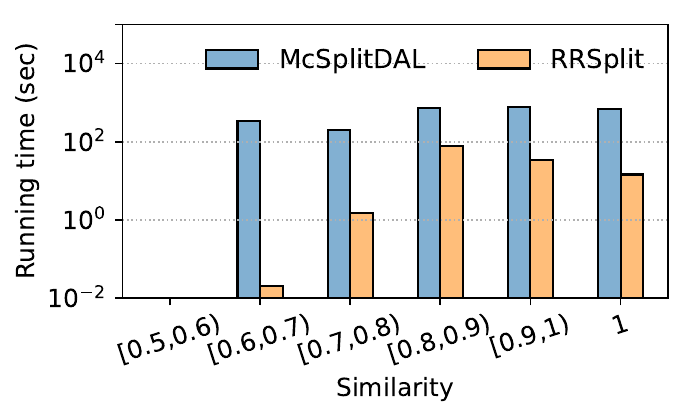}
		}	
		\subfigure[\textsf{Running time (LV)}]{
			\includegraphics[width=4.0cm]{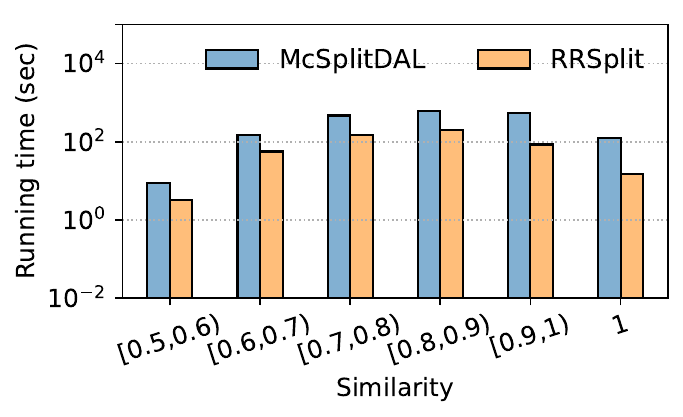}
		}
        \subfigure[\textsf{\# of branches (BI)}]{
			\includegraphics[width=4.0cm]{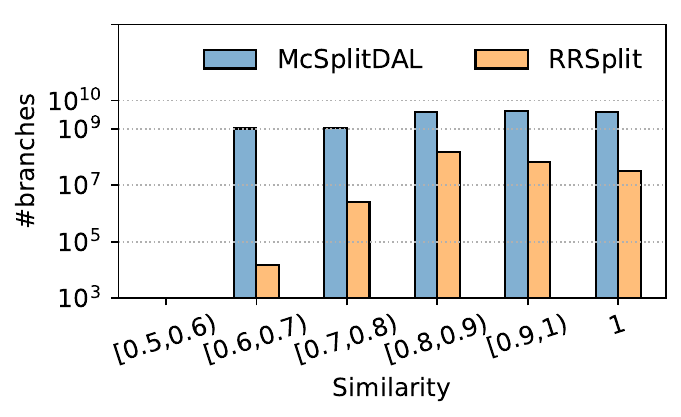}
		}	
		\subfigure[\textsf{\# of branches (LV)}]{
			\includegraphics[width=4.0cm]{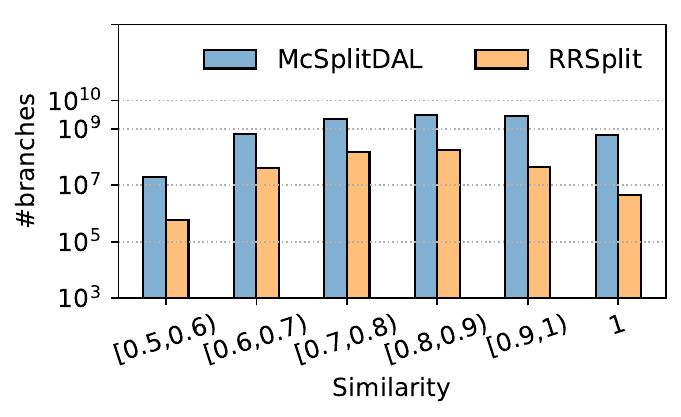}
		}
        \vspace{-0.2in}
	\caption{Comparison by varying similarities}
	\label{fig:all_vary_S}
\end{figure}

\smallskip
\noindent\textbf{Varying the similarities of  input graphs}. We define the similarity of  input graphs $Q$ and $G$, $Sim(Q,G)$, as follows.
\begin{equation}
\label{eq:sim}
    Sim(Q,G)=\frac{|S^*|}{\min\{|V_Q|,|V_G|\}},
\end{equation}
where $S^*$ is the maximum common subgraph between $Q$ and $G$. Clearly, $Sim(Q,G)$ varies from 0 to 1, and the larger the value of $Sim(Q,G)$, the higher the similarity between $Q$ and $G$. We test different problem instances as the similarity varies from 0.5 to 1 on \textsf{BI} and \textsf{LV}, and report the average running time in Figures~\ref{fig:all_vary_S}(a)-(b) and the average number of formed branches in Figures~\ref{fig:all_vary_S}(c)-(d). {\Yui The results on \textsf{CV} and \textsf{PR} show similar trends, complete details of which appear in the 
\ifx \CR\undefined
Appendix. 
\else
technical report~\cite{TR}. 
\fi
}  We can see that \texttt{RRSplit} consistently outperforms \texttt{McSplitDAL} {\chengC in} various settings, e.g., \texttt{RRSplit} runs several orders of magnitude faster and forms fewer branches than \texttt{McSplitDAL}. This demonstrates that our designed reductions are effective for pruning the redundant branches on problem instances with various similarities. Besides, we observe that both \texttt{RRSplit} and \texttt{McSplitDAL} have the running time and the number of formed branches first increase and then decrease as the similarity grows. {\revision The possible reasons are as follows. (1) The maximum common subgraphs become larger as the similarity increases according to Equation~(\ref{eq:sim}) and typically more common subgraphs will be explored for finding a large maximum common subgraph. Therefore, the running time firstly increases; (2) the upper-bound based reduction {\chengE performs} better as the similarity grows. {\chengE For example, in the setting of} $Sim(Q,G)=1$, the algorithm can be terminated directly once a common subgraph with $\min\{|V_Q|,|V_G|\}$ vertices is found. Therefore, the running time then decreases.}


\begin{figure}[]
		\subfigure[\textsf{\Yui Varying time limits (BI)}]{
			\includegraphics[width=4.0cm]{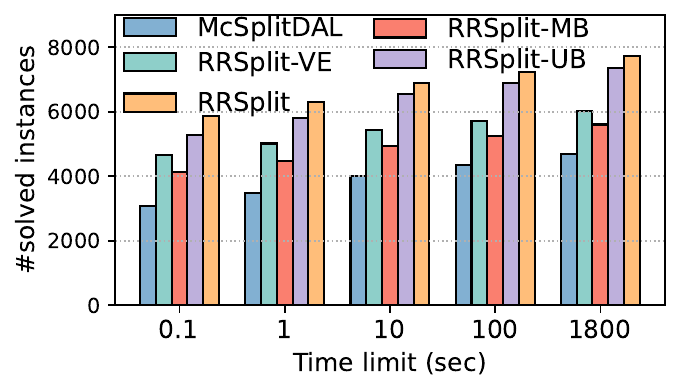}
		}	
		\subfigure[\textsf{\Yui Varying time limits (LV)}]{
			\includegraphics[width=4.0cm]{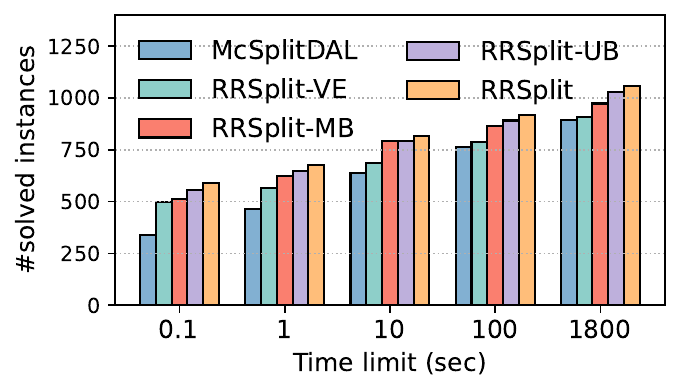}
		}
        \subfigure[\textsf{\Yui Varying limit of \#branches (BI)}]{
			\includegraphics[width=4.0cm]{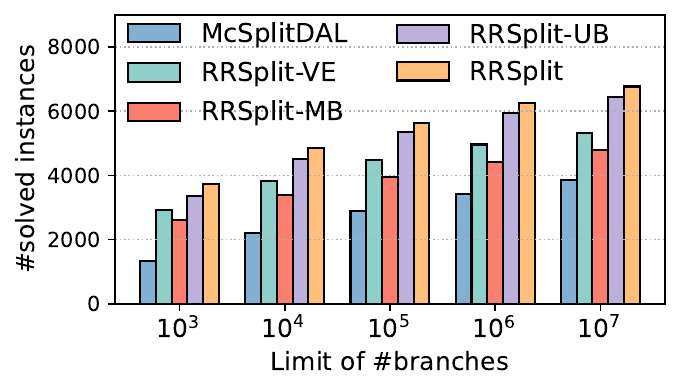}
		}	
		\subfigure[\textsf{\Yui Varying limit of \#branches (LV)}]{
			\includegraphics[width=4.0cm]{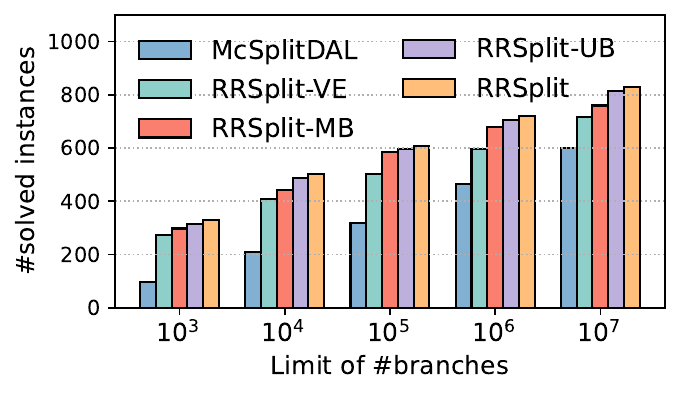}
		}
        \vspace{-0.15in}
	\caption{Comparison among various reductions}
	\label{fig:all_vary_R}
\end{figure}

{\revision
\smallskip
\noindent\textbf{Scalability test.} We test the scalability of our \texttt{RRSplit} on two large datasets, i.e., \textsf{Twitter} and \textsf{DBLP}, which are collected from different domains (http://konect.cc/). Here, \textsf{Twitter} is a social network with 465,017 vertices and 833,540 edges, and \textsf{DBLP} is a collaboration network with 317,080 vertices and 1,049,866 edges. Following existing studies~\cite{arai2023gup,jin2023circinus,sun2023efficient}, we generate the problem instances (i.e., $Q$ and $G$) as follows. Let \textsf{Twitter} or \textsf{DBLP} be the graph $G$. We first extract a set of graphs $Q$ from $G$. Specifically, we conduct a random walk on $G$ and extract a subgraph induced by the visited vertices. By varying the size of the extracted graph $Q$ (among $\{20,30,40,50,60\}$), we extract 5 sets and each of them contains 100 different graphs $Q$.
Then, we generate different problems by pairing the graph $G$ (i.e., \textsf{Twitter} or \textsf{DBLP}) with different graphs $Q$ in the set. In summary, for each dataset, we have 500 different problem instances. 

We compare our \texttt{RRSplit} with \texttt{McSplitDAL} by varying the size of $Q$, and report the average running time in Figure~\ref{fig:scalability_test}. We observe that our \texttt{RRSplit} outperforms \texttt{McSplitDAL} significantly.
Besides, \texttt{McSplitDAL} cannot handle almost all the problem instances within the time and/or space limit (INF/OOM). This is because the implementation of \texttt{McSplitDAL} highly relies on the adjacent matrix of $Q$ and $G$, which introduces huge space and time costs when $G$ is very large. Finally, we observe that \texttt{RRSplit} has the running time increase as the size of $Q$ grows. This is also consistent with the theoretical analysis.
}

\begin{figure}[]
		\subfigure[\textsf{Running time (Twitter)}]{
			\includegraphics[width=4.0cm]{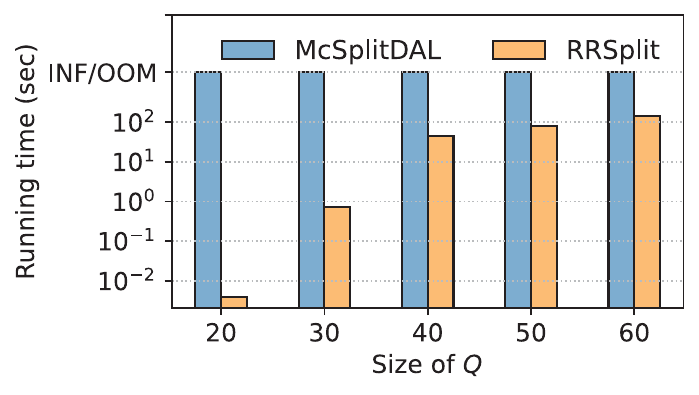}
		}	
		\subfigure[\textsf{Running time (DBLP)}]{
			\includegraphics[width=4.0cm]{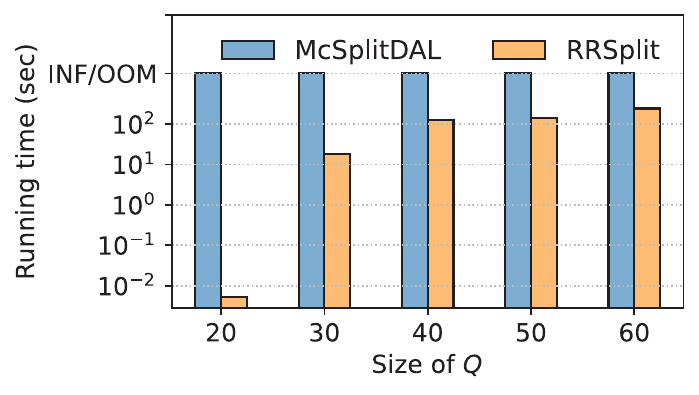}
		}
        \vspace{-0.2in}
	\caption{\revision Scalability test on large datasets}
	\label{fig:scalability_test}
\end{figure}

\subsection{Ablation studies}

We study the effects of various reductions on reducing the redundant computations. In specific, we compare \texttt{RRSplit} with three variants, namely \texttt{RRSplit-VE}: the full version without vertex-equivalence based reductions, \texttt{RRSplit-MB}: the full version without maximality based reductions and \texttt{RRSplit-UB}: the full version without the vertex-equivalence based upper bound, on \textsf{BI} and \textsf{LV}. We report the number of solved problem instances in Figure~\ref{fig:all_vary_R}(a,b) for varying the time limit and in Figure~\ref{fig:all_vary_R}(c,d) for varying the limit of number of formed branches. {\Yui The results on \textsf{CV} and \textsf{PR} show similar clues, which we put in the 
\ifx \CR\undefined
Appendix. 
\else
technical report~\cite{TR}. 
\fi
First, we can see that all four algorithms perform better than the baseline \texttt{McSplitDAL}, among which \texttt{RRSplit} performs the best. This demonstrates the effectiveness of vertex-equivalence-based reductions, maximality-based reductions and vertex-equivalence-based upper bound. Second, \texttt{RRSplit-VE} and \texttt{RRSplit-MB} {\chengB achieve} comparable performance and {\chengB both} contribute to the improvements. Specifically, we note that \texttt{RRSplit-VE} runs slightly faster than \texttt{RRSplit-MB} on \textsf{BI} while \texttt{RRSplit-MB} runs slightly faster than \texttt{RRSplit-VE} on \textsf{LV}. }  {\revision This is possibly because graphs in \textsf{BI} are relatively small biochemical networks where two vertices are more likely to be structural equivalent and thus the vertex-equivalence based reductions outperform other reductions, while graphs in \texttt{LV} are synthetic networks. }
\section{Related Work}
\label{sec:related}

\noindent\textbf{Maximum common subgraph search}. In the literature, there are quite a few studies on finding the maximum common subgraph, which solve the problem either exactly~\cite{levi1973note,mcgregor1982backtrack,abu2014maximum,krissinel2004common,suters2005new,mccreesh2016clique,vismara2008finding,zhoustrengthened,liu2020learning,liu2023hybrid,mccreesh2017partitioning} or approximately~\cite{choi2012efficient,rutgers2010approximate,xiao2009generative,zanfir2018deep,bai2021glsearch}. \underline{First}, among all those exact algorithms, they mainly focus on improving the \emph{practical} performance and most of them are backtracking (also known as branch-and-bound) algorithms~\cite{levi1973note,mcgregor1982backtrack}. Specifically, authors in~\cite{levi1973note,mcgregor1982backtrack} propose the first backtracking framework. The idea is to transform the problem of finding the maximum common subgraph between two given graphs to the problem of finding the maximum clique in the \emph{association graph}. Then, authors in~\cite{mccreesh2016clique,vismara2008finding} follow the previous framework and further improve it by employing the constraint programming techniques. However, these algorithms are all based on a large and dense association graph built from two given graphs, which thus suffer from the efficiency issue. To solve the issue, McCreesh et al.~\cite{mccreesh2017partitioning} propose a new backtracking framework, namely \texttt{McSplit}, which is not based on the maximum clique search problem. Recent works~\cite{zhoustrengthened,liu2020learning,liu2023hybrid} follow \texttt{McSplit} and improve the practical performance by optimizing the policies of branching via learning techniques. Among them, \texttt{McSplitDAL}~\cite{liu2023hybrid} runs faster than {\cheng others.}
We note that some exact algorithms are designed 
{\chengB to achieve improvements of theoretical time complexity}
~\cite{abu2014maximum,levi1973note,krissinel2004common,suters2005new}. They have gradually improved the worst-case time complexity from $O^*(1.19^{|V_Q||V_G|})$~\cite{levi1973note} to $O^*(|V_Q|^{(|V_G|+1)})$~\cite{krissinel2004common}, and {\cheng to} $O^*((|V_Q|+1)^{|V_G|})$~\cite{suters2005new},
{\chengB which is our best-known {\YuiR worst-case} time complexity for the problem.}
However, these algorithms are of theoretical interests only and not efficient in practice. We remark that (1) our \texttt{RRSplit} not only runs faster than all previous algorithms in practice but also achieves the state-of-the-art worst case time complexity (i.e., $O^*((|V_Q|+1)^{|V_G|})$) in theory and (2) the heuristic polices proposed in~\cite{zhoustrengthened,liu2020learning,liu2023hybrid} are orthogonal to \texttt{RRSplit}. \underline{Second}, since the problem of finding the largest common subgraph is NP-hard, some researchers turn to solve it approximately in polynomial time. Some approximation algorithms include meta-heuristics~\cite{choi2012efficient,rutgers2010approximate}, spectra methods~\cite{xiao2009generative}, and learning-based methods~\cite{zanfir2018deep,bai2021glsearch}. We remark that these techniques cannot be applied to our exact algorithm directly.

\smallskip
\noindent\textbf{Subgraph matching}. Given a target graph and a query graph, subgraph matching aims to find from a target graph all those subgraphs isomorphic to a query graph. We note that maximum common subgraph search is a generalization of subgraph matching. Specifically, given two graphs $Q$ and $G$, maximum common subgraph search {\chengC would} reduce to subgraph matching if 
{\chengB we require}
that the found common subgraph has the size at least $|V(Q)|$ or $|V(G)|$. In recent decades, subgraph matching has been widely studied~\cite{bhattarai2019ceci,ullmann1976algorithm,sun2020rapidmatch,sun2020subgraph,shang2008taming,kim2023fast,han2013turboiso,han2019efficient,cordella2004sub,bi2016efficient,arai2023gup,jin2023circinus,sun2023efficient}. The majority of proposed solutions perform a backtracking search. Among these algorithms, the \emph{candidate filtering} technique, which is designed for removing unnecessary vertices from the target graph, has been shown to be important for improving the practical efficiency~\cite{bhattarai2019ceci,bi2016efficient,han2019efficient,han2013turboiso,kim2023fast}. The technique relies on an auxiliary data structure (e.g., a tree or a directed acyclic graph), which is obtained from the query graph (based on the implicit constraint that each vertex in the query graph must be mapped to a vertex in the found subgraph). We note that it is hard to apply candidate filtering {\cheng to find} the maximum common subgraph (since the mentioned constraint may not hold).
We remark that finding subgraphs exactly isomorphic to a query graph is too restrictive in some real applications due to the data quality issues and/or potential requirements of the fuzzy search (e.g., no result {\chengC would} be returned if there does not exist any subgraph isomorphic to a query graph). Motivated by this, we focus on finding the maximum common subgraph between two graphs in this paper.
\section{Conclusion}
\label{sec:conclusion}

In this paper, we propose a new backtracking algorithm \texttt{RRSplit} for finding the largest common subgraph.
\texttt{RRSplit} is based on our newly-designed reduction rules for reducing the redundant computations and achieves the state-of-the-art worst-case time complexity.
Extensive experiments are conducted on the widely-used graph collections, and the results demonstrate the superiority of our method. In the future, we will adapt our proposed algorithm to solve {\chengE the maximum common subgraph search problem on} other types of graphs, including vertex-labeled and edge-labeled graphs.

\section{acknowledgement}
The research of Kaiqiang Yu, Kaixin Wang and Cheng Long is supported by the Ministry of Education, Singapore, under its Academic Research Fund (Tier 2 Award MOE-T2EP20221-0013 and Tier 1 Awards (RG20/24 and RG77/21)). Any opinions, findings and conclusions or recommendations expressed in this material are those of the author(s) and do not reflect the views of the Ministry of Education, Singapore.
Lakshmanan’s research was supported in part by a grant from
the Natural Sciences and Engineering Research Council of Canada (Grant Number RGPIN-2020-05408).
Reynold Cheng was supported by the Hong Kong Jockey Club Charities Trust (Project 260920140), the University of Hong Kong (Project 2409100399), the HKU Outstanding Research Student Supervisor Award 2022-23, and the HKU Faculty Exchange Award 2024 (Faculty of Engineering).

\balance
\clearpage
\bibliographystyle{ACM-Reference-Format}
\bibliography{SIGMOD_MaxCS}


\begin{thebibliography}{54}


\ifx \showCODEN    \undefined \def \showCODEN     #1{\unskip}     \fi
\ifx \showISBNx    \undefined \def \showISBNx     #1{\unskip}     \fi
\ifx \showISBNxiii \undefined \def \showISBNxiii  #1{\unskip}     \fi
\ifx \showISSN     \undefined \def \showISSN      #1{\unskip}     \fi
\ifx \showLCCN     \undefined \def \showLCCN      #1{\unskip}     \fi
\ifx \shownote     \undefined \def \shownote      #1{#1}          \fi
\ifx \showarticletitle \undefined \def \showarticletitle #1{#1}   \fi
\ifx \showURL      \undefined \def \showURL       {\relax}        \fi
\providecommand\bibfield[2]{#2}
\providecommand\bibinfo[2]{#2}
\providecommand\natexlab[1]{#1}
\providecommand\showeprint[2][]{arXiv:#2}

\bibitem[Abu-Khzam(2014)]%
        {abu2014maximum}
\bibfield{author}{\bibinfo{person}{Faisal~N Abu-Khzam}.} \bibinfo{year}{2014}\natexlab{}.
\newblock \showarticletitle{Maximum common induced subgraph parameterized by vertex cover}.
\newblock \bibinfo{journal}{\emph{Inform. Process. Lett.}} \bibinfo{volume}{114}, \bibinfo{number}{3} (\bibinfo{year}{2014}), \bibinfo{pages}{99--103}.
\newblock


\bibitem[Antelo-Collado et~al\mbox{.}(2020)]%
        {antelo2020maximum}
\bibfield{author}{\bibinfo{person}{Aurelio Antelo-Collado}, \bibinfo{person}{Ram{\'o}n Carrasco-Velar}, \bibinfo{person}{Nicol{\'a}s Garc{\'\i}a-Pedrajas}, {and} \bibinfo{person}{Gonzalo Cerruela-Garc{\'\i}a}.} \bibinfo{year}{2020}\natexlab{}.
\newblock \showarticletitle{Maximum common property: a new approach for molecular similarity}.
\newblock \bibinfo{journal}{\emph{Journal of cheminformatics}}  \bibinfo{volume}{12} (\bibinfo{year}{2020}), \bibinfo{pages}{1--22}.
\newblock


\bibitem[Arai et~al\mbox{.}(2023)]%
        {arai2023gup}
\bibfield{author}{\bibinfo{person}{Junya Arai}, \bibinfo{person}{Yasuhiro Fujiwara}, {and} \bibinfo{person}{Makoto Onizuka}.} \bibinfo{year}{2023}\natexlab{}.
\newblock \showarticletitle{GuP: Fast Subgraph Matching by Guard-based Pruning}.
\newblock \bibinfo{journal}{\emph{Proceedings of the ACM on Management of Data}} \bibinfo{volume}{1}, \bibinfo{number}{2} (\bibinfo{year}{2023}), \bibinfo{pages}{1--26}.
\newblock


\bibitem[Bai et~al\mbox{.}(2021)]%
        {bai2021glsearch}
\bibfield{author}{\bibinfo{person}{Yunsheng Bai}, \bibinfo{person}{Derek Xu}, \bibinfo{person}{Yizhou Sun}, {and} \bibinfo{person}{Wei Wang}.} \bibinfo{year}{2021}\natexlab{}.
\newblock \showarticletitle{Glsearch: Maximum common subgraph detection via learning to search}. In \bibinfo{booktitle}{\emph{International Conference on Machine Learning}}. PMLR, \bibinfo{pages}{588--598}.
\newblock


\bibitem[Balasundaram et~al\mbox{.}(2011)]%
        {balasundaram2011clique}
\bibfield{author}{\bibinfo{person}{Balabhaskar Balasundaram}, \bibinfo{person}{Sergiy Butenko}, {and} \bibinfo{person}{Illya~V Hicks}.} \bibinfo{year}{2011}\natexlab{}.
\newblock \showarticletitle{Clique relaxations in social network analysis: The maximum k-plex problem}.
\newblock \bibinfo{journal}{\emph{Operations Research}} \bibinfo{volume}{59}, \bibinfo{number}{1} (\bibinfo{year}{2011}), \bibinfo{pages}{133--142}.
\newblock


\bibitem[Bhattarai et~al\mbox{.}(2019)]%
        {bhattarai2019ceci}
\bibfield{author}{\bibinfo{person}{Bibek Bhattarai}, \bibinfo{person}{Hang Liu}, {and} \bibinfo{person}{H~Howie Huang}.} \bibinfo{year}{2019}\natexlab{}.
\newblock \showarticletitle{Ceci: Compact embedding cluster index for scalable subgraph matching}. In \bibinfo{booktitle}{\emph{Proceedings of the 2019 International Conference on Management of Data}}. \bibinfo{pages}{1447--1462}.
\newblock


\bibitem[Bi et~al\mbox{.}(2016)]%
        {bi2016efficient}
\bibfield{author}{\bibinfo{person}{Fei Bi}, \bibinfo{person}{Lijun Chang}, \bibinfo{person}{Xuemin Lin}, \bibinfo{person}{Lu Qin}, {and} \bibinfo{person}{Wenjie Zhang}.} \bibinfo{year}{2016}\natexlab{}.
\newblock \showarticletitle{Efficient subgraph matching by postponing cartesian products}. In \bibinfo{booktitle}{\emph{Proceedings of the 2016 International Conference on Management of Data}}. \bibinfo{pages}{1199--1214}.
\newblock


\bibitem[Bonnici et~al\mbox{.}(2013)]%
        {bonnici2013subgraph}
\bibfield{author}{\bibinfo{person}{Vincenzo Bonnici}, \bibinfo{person}{Rosalba Giugno}, \bibinfo{person}{Alfredo Pulvirenti}, \bibinfo{person}{Dennis Shasha}, {and} \bibinfo{person}{Alfredo Ferro}.} \bibinfo{year}{2013}\natexlab{}.
\newblock \showarticletitle{A subgraph isomorphism algorithm and its application to biochemical data}.
\newblock \bibinfo{journal}{\emph{BMC bioinformatics}}  \bibinfo{volume}{14} (\bibinfo{year}{2013}), \bibinfo{pages}{1--13}.
\newblock


\bibitem[Bunke(1997)]%
        {bunke1997relation}
\bibfield{author}{\bibinfo{person}{H. Bunke}.} \bibinfo{year}{1997}\natexlab{}.
\newblock \showarticletitle{On a relation between graph edit distance and maximum common subgraph}.
\newblock \bibinfo{journal}{\emph{Pattern recognition letters}} \bibinfo{volume}{18}, \bibinfo{number}{8} (\bibinfo{year}{1997}), \bibinfo{pages}{689--694}.
\newblock


\bibitem[Chang et~al\mbox{.}(2020)]%
        {chang2020speeding}
\bibfield{author}{\bibinfo{person}{Lijun Chang}, \bibinfo{person}{Xing Feng}, \bibinfo{person}{Xuemin Lin}, \bibinfo{person}{Lu Qin}, \bibinfo{person}{Wenjie Zhang}, {and} \bibinfo{person}{Dian Ouyang}.} \bibinfo{year}{2020}\natexlab{}.
\newblock \showarticletitle{Speeding up GED verification for graph similarity search}. In \bibinfo{booktitle}{\emph{2020 IEEE 36th International Conference on Data Engineering (ICDE)}}. IEEE, \bibinfo{pages}{793--804}.
\newblock


\bibitem[Chen et~al\mbox{.}(2019)]%
        {chen2019efficient}
\bibfield{author}{\bibinfo{person}{Xiaoyang Chen}, \bibinfo{person}{Hongwei Huo}, \bibinfo{person}{Jun Huan}, {and} \bibinfo{person}{Jeffrey~Scott Vitter}.} \bibinfo{year}{2019}\natexlab{}.
\newblock \showarticletitle{An efficient algorithm for graph edit distance computation}.
\newblock \bibinfo{journal}{\emph{Knowledge-Based Systems}}  \bibinfo{volume}{163} (\bibinfo{year}{2019}), \bibinfo{pages}{762--775}.
\newblock


\bibitem[Chiang et~al\mbox{.}(2007)]%
        {chiang2007coverage}
\bibfield{author}{\bibinfo{person}{Tony Chiang}, \bibinfo{person}{Denise Scholtens}, \bibinfo{person}{Deepayan Sarkar}, \bibinfo{person}{Robert Gentleman}, {and} \bibinfo{person}{Wolfgang Huber}.} \bibinfo{year}{2007}\natexlab{}.
\newblock \showarticletitle{Coverage and error models of protein-protein interaction data by directed graph analysis}.
\newblock \bibinfo{journal}{\emph{Genome biology}}  \bibinfo{volume}{8} (\bibinfo{year}{2007}), \bibinfo{pages}{1--14}.
\newblock


\bibitem[Choi et~al\mbox{.}(2012)]%
        {choi2012efficient}
\bibfield{author}{\bibinfo{person}{Jaeun Choi}, \bibinfo{person}{Yourim Yoon}, {and} \bibinfo{person}{Byung-Ro Moon}.} \bibinfo{year}{2012}\natexlab{}.
\newblock \showarticletitle{An efficient genetic algorithm for subgraph isomorphism}. In \bibinfo{booktitle}{\emph{Proceedings of the 14th annual conference on Genetic and evolutionary computation}}. \bibinfo{pages}{361--368}.
\newblock


\bibitem[Cordella et~al\mbox{.}(2004)]%
        {cordella2004sub}
\bibfield{author}{\bibinfo{person}{Luigi~P Cordella}, \bibinfo{person}{Pasquale Foggia}, \bibinfo{person}{Carlo Sansone}, {and} \bibinfo{person}{Mario Vento}.} \bibinfo{year}{2004}\natexlab{}.
\newblock \showarticletitle{A (sub) graph isomorphism algorithm for matching large graphs}.
\newblock \bibinfo{journal}{\emph{IEEE transactions on pattern analysis and machine intelligence}} \bibinfo{volume}{26}, \bibinfo{number}{10} (\bibinfo{year}{2004}), \bibinfo{pages}{1367--1372}.
\newblock


\bibitem[Ehrlich and Rarey(2011)]%
        {ehrlich2011maximum}
\bibfield{author}{\bibinfo{person}{Hans-Christian Ehrlich} {and} \bibinfo{person}{Matthias Rarey}.} \bibinfo{year}{2011}\natexlab{}.
\newblock \showarticletitle{Maximum common subgraph isomorphism algorithms and their applications in molecular science: a review}.
\newblock \bibinfo{journal}{\emph{Wiley Interdisciplinary Reviews: Computational Molecular Science}} \bibinfo{volume}{1}, \bibinfo{number}{1} (\bibinfo{year}{2011}), \bibinfo{pages}{68--79}.
\newblock


\bibitem[Gouda and Hassaan(2016)]%
        {gouda2016csi_ged}
\bibfield{author}{\bibinfo{person}{Karam Gouda} {and} \bibinfo{person}{Mosab Hassaan}.} \bibinfo{year}{2016}\natexlab{}.
\newblock \showarticletitle{CSI\_GED: An efficient approach for graph edit similarity computation}. In \bibinfo{booktitle}{\emph{2016 IEEE 32nd International Conference on Data Engineering (ICDE)}}. IEEE, \bibinfo{pages}{265--276}.
\newblock


\bibitem[Han et~al\mbox{.}(2019)]%
        {han2019efficient}
\bibfield{author}{\bibinfo{person}{Myoungji Han}, \bibinfo{person}{Hyunjoon Kim}, \bibinfo{person}{Geonmo Gu}, \bibinfo{person}{Kunsoo Park}, {and} \bibinfo{person}{Wook-Shin Han}.} \bibinfo{year}{2019}\natexlab{}.
\newblock \showarticletitle{Efficient subgraph matching: Harmonizing dynamic programming, adaptive matching order, and failing set together}. In \bibinfo{booktitle}{\emph{Proceedings of the 2019 International Conference on Management of Data}}. \bibinfo{pages}{1429--1446}.
\newblock


\bibitem[Han et~al\mbox{.}(2013)]%
        {han2013turboiso}
\bibfield{author}{\bibinfo{person}{Wook-Shin Han}, \bibinfo{person}{Jinsoo Lee}, {and} \bibinfo{person}{Jeong-Hoon Lee}.} \bibinfo{year}{2013}\natexlab{}.
\newblock \showarticletitle{Turboiso: towards ultrafast and robust subgraph isomorphism search in large graph databases}. In \bibinfo{booktitle}{\emph{Proceedings of the 2013 ACM SIGMOD International Conference on Management of Data}}. \bibinfo{pages}{337--348}.
\newblock


\bibitem[Hati et~al\mbox{.}(2016)]%
        {hati2016image}
\bibfield{author}{\bibinfo{person}{Avik Hati}, \bibinfo{person}{Subhasis Chaudhuri}, {and} \bibinfo{person}{Rajbabu Velmurugan}.} \bibinfo{year}{2016}\natexlab{}.
\newblock \showarticletitle{Image co-segmentation using maximum common subgraph matching and region co-growing}. In \bibinfo{booktitle}{\emph{Computer Vision--ECCV 2016: 14th European Conference, Amsterdam, The Netherlands, October 11-14, 2016, Proceedings, Part VI 14}}. Springer, \bibinfo{pages}{736--752}.
\newblock


\bibitem[Hoffmann et~al\mbox{.}(2017)]%
        {hoffmann2017between}
\bibfield{author}{\bibinfo{person}{Ruth Hoffmann}, \bibinfo{person}{Ciaran McCreesh}, {and} \bibinfo{person}{Craig Reilly}.} \bibinfo{year}{2017}\natexlab{}.
\newblock \showarticletitle{Between subgraph isomorphism and maximum common subgraph}. In \bibinfo{booktitle}{\emph{Proceedings of the AAAI Conference on Artificial Intelligence}}, Vol.~\bibinfo{volume}{31}.
\newblock


\bibitem[Jin et~al\mbox{.}(2023)]%
        {jin2023circinus}
\bibfield{author}{\bibinfo{person}{Tatiana Jin}, \bibinfo{person}{Boyang Li}, \bibinfo{person}{Yichao Li}, \bibinfo{person}{Qihui Zhou}, \bibinfo{person}{Qianli Ma}, \bibinfo{person}{Yunjian Zhao}, \bibinfo{person}{Hongzhi Chen}, {and} \bibinfo{person}{James Cheng}.} \bibinfo{year}{2023}\natexlab{}.
\newblock \showarticletitle{Circinus: Fast redundancy-reduced subgraph matching}.
\newblock \bibinfo{journal}{\emph{Proceedings of the ACM on Management of Data}} \bibinfo{volume}{1}, \bibinfo{number}{1} (\bibinfo{year}{2023}), \bibinfo{pages}{1--26}.
\newblock


\bibitem[Kann(1992)]%
        {kann1992approximability}
\bibfield{author}{\bibinfo{person}{Viggo Kann}.} \bibinfo{year}{1992}\natexlab{}.
\newblock \showarticletitle{On the approximability of the maximum common subgraph problem}. In \bibinfo{booktitle}{\emph{STACS 92: 9th Annual Symposium on Theoretical Aspects of Computer Science Cachan, France, February 13--15, 1992 Proceedings 9}}. Springer, \bibinfo{pages}{375--388}.
\newblock


\bibitem[Kim et~al\mbox{.}(2021)]%
        {kim2021versatile}
\bibfield{author}{\bibinfo{person}{Hyunjoon Kim}, \bibinfo{person}{Yunyoung Choi}, \bibinfo{person}{Kunsoo Park}, \bibinfo{person}{Xuemin Lin}, \bibinfo{person}{Seok-Hee Hong}, {and} \bibinfo{person}{Wook-Shin Han}.} \bibinfo{year}{2021}\natexlab{}.
\newblock \showarticletitle{Versatile equivalences: Speeding up subgraph query processing and subgraph matching}. In \bibinfo{booktitle}{\emph{Proceedings of the 2021 International Conference on Management of Data}}. \bibinfo{pages}{925--937}.
\newblock


\bibitem[Kim et~al\mbox{.}(2023)]%
        {kim2023fast}
\bibfield{author}{\bibinfo{person}{Hyunjoon Kim}, \bibinfo{person}{Yunyoung Choi}, \bibinfo{person}{Kunsoo Park}, \bibinfo{person}{Xuemin Lin}, \bibinfo{person}{Seok-Hee Hong}, {and} \bibinfo{person}{Wook-Shin Han}.} \bibinfo{year}{2023}\natexlab{}.
\newblock \showarticletitle{Fast subgraph query processing and subgraph matching via static and dynamic equivalences}.
\newblock \bibinfo{journal}{\emph{The VLDB journal}} \bibinfo{volume}{32}, \bibinfo{number}{2} (\bibinfo{year}{2023}), \bibinfo{pages}{343--368}.
\newblock


\bibitem[Kim(2023)]%
        {kim2023efficient}
\bibfield{author}{\bibinfo{person}{Jongik Kim}.} \bibinfo{year}{2023}\natexlab{}.
\newblock \showarticletitle{Efficient graph edit distance computation using isomorphic vertices}.
\newblock \bibinfo{journal}{\emph{Pattern Recognition Letters}}  \bibinfo{volume}{168} (\bibinfo{year}{2023}), \bibinfo{pages}{71--78}.
\newblock


\bibitem[Krissinel and Henrick(2004)]%
        {krissinel2004common}
\bibfield{author}{\bibinfo{person}{Evgeny~B Krissinel} {and} \bibinfo{person}{Kim Henrick}.} \bibinfo{year}{2004}\natexlab{}.
\newblock \showarticletitle{Common subgraph isomorphism detection by backtracking search}.
\newblock \bibinfo{journal}{\emph{Software: Practice and Experience}} \bibinfo{volume}{34}, \bibinfo{number}{6} (\bibinfo{year}{2004}), \bibinfo{pages}{591--607}.
\newblock


\bibitem[Larsen and Baumbach(2017)]%
        {larsen2017cytomcs}
\bibfield{author}{\bibinfo{person}{Simon~J Larsen} {and} \bibinfo{person}{Jan Baumbach}.} \bibinfo{year}{2017}\natexlab{}.
\newblock \showarticletitle{CytoMCS: a multiple maximum common subgraph detection tool for Cytoscape}.
\newblock \bibinfo{journal}{\emph{Journal of integrative bioinformatics}} \bibinfo{volume}{14}, \bibinfo{number}{2} (\bibinfo{year}{2017}), \bibinfo{pages}{20170014}.
\newblock


\bibitem[Levi(1973)]%
        {levi1973note}
\bibfield{author}{\bibinfo{person}{Giorgio Levi}.} \bibinfo{year}{1973}\natexlab{}.
\newblock \showarticletitle{A note on the derivation of maximal common subgraphs of two directed or undirected graphs}.
\newblock \bibinfo{journal}{\emph{Calcolo}} \bibinfo{volume}{9}, \bibinfo{number}{4} (\bibinfo{year}{1973}), \bibinfo{pages}{341--352}.
\newblock


\bibitem[Lewis(1983)]%
        {lewis1983michael}
\bibfield{author}{\bibinfo{person}{Harry~R Lewis}.} \bibinfo{year}{1983}\natexlab{}.
\newblock \showarticletitle{Michael R. $\Pi$Garey and David S. Johnson. Computers and intractability. A guide to the theory of NP-completeness. WH Freeman and Company, San Francisco1979, x+ 338 pp.}
\newblock \bibinfo{journal}{\emph{The Journal of Symbolic Logic}} \bibinfo{volume}{48}, \bibinfo{number}{2} (\bibinfo{year}{1983}), \bibinfo{pages}{498--500}.
\newblock


\bibitem[Liu et~al\mbox{.}(2020)]%
        {liu2020learning}
\bibfield{author}{\bibinfo{person}{Yanli Liu}, \bibinfo{person}{Chu-Min Li}, \bibinfo{person}{Hua Jiang}, {and} \bibinfo{person}{Kun He}.} \bibinfo{year}{2020}\natexlab{}.
\newblock \showarticletitle{A learning based branch and bound for maximum common subgraph related problems}. In \bibinfo{booktitle}{\emph{Proceedings of the AAAI Conference on Artificial Intelligence}}, Vol.~\bibinfo{volume}{34}. \bibinfo{pages}{2392--2399}.
\newblock


\bibitem[Liu et~al\mbox{.}(2023)]%
        {liu2023hybrid}
\bibfield{author}{\bibinfo{person}{Yanli Liu}, \bibinfo{person}{Jiming Zhao}, \bibinfo{person}{Chu-Min Li}, \bibinfo{person}{Hua Jiang}, {and} \bibinfo{person}{Kun He}.} \bibinfo{year}{2023}\natexlab{}.
\newblock \showarticletitle{Hybrid learning with new value function for the maximum common induced subgraph problem}. In \bibinfo{booktitle}{\emph{Proceedings of the AAAI Conference on Artificial Intelligence}}, Vol.~\bibinfo{volume}{37}. \bibinfo{pages}{4044--4051}.
\newblock


\bibitem[McCreesh et~al\mbox{.}(2016)]%
        {mccreesh2016clique}
\bibfield{author}{\bibinfo{person}{Ciaran McCreesh}, \bibinfo{person}{Samba~Ndojh Ndiaye}, \bibinfo{person}{Patrick Prosser}, {and} \bibinfo{person}{Christine Solnon}.} \bibinfo{year}{2016}\natexlab{}.
\newblock \showarticletitle{Clique and constraint models for maximum common (connected) subgraph problems}. In \bibinfo{booktitle}{\emph{International Conference on Principles and Practice of Constraint Programming}}. Springer, \bibinfo{pages}{350--368}.
\newblock


\bibitem[McCreesh et~al\mbox{.}(2017)]%
        {mccreesh2017partitioning}
\bibfield{author}{\bibinfo{person}{Ciaran McCreesh}, \bibinfo{person}{Patrick Prosser}, {and} \bibinfo{person}{James Trimble}.} \bibinfo{year}{2017}\natexlab{}.
\newblock \showarticletitle{A partitioning algorithm for maximum common subgraph problems}. In \bibinfo{booktitle}{\emph{Proceedings of the 26th International Joint Conference on Artificial Intelligence}}. \bibinfo{pages}{712--719}.
\newblock


\bibitem[McGregor(1982)]%
        {mcgregor1982backtrack}
\bibfield{author}{\bibinfo{person}{James~J McGregor}.} \bibinfo{year}{1982}\natexlab{}.
\newblock \showarticletitle{Backtrack search algorithms and the maximal common subgraph problem}.
\newblock \bibinfo{journal}{\emph{Software: Practice and Experience}} \bibinfo{volume}{12}, \bibinfo{number}{1} (\bibinfo{year}{1982}), \bibinfo{pages}{23--34}.
\newblock


\bibitem[Nguyen et~al\mbox{.}(2019)]%
        {nguyen2019applications}
\bibfield{author}{\bibinfo{person}{Thien Nguyen}, \bibinfo{person}{Dominic Yang}, \bibinfo{person}{Yurun Ge}, \bibinfo{person}{Hao Li}, {and} \bibinfo{person}{Andrea~L Bertozzi}.} \bibinfo{year}{2019}\natexlab{}.
\newblock \showarticletitle{Applications of structural equivalence to subgraph isomorphism on multichannel multigraphs}. In \bibinfo{booktitle}{\emph{2019 IEEE International Conference on Big Data (Big Data)}}. IEEE, \bibinfo{pages}{4913--4920}.
\newblock


\bibitem[Nirmala et~al\mbox{.}(2016)]%
        {nirmala2016vertex}
\bibfield{author}{\bibinfo{person}{Parisutham Nirmala}, \bibinfo{person}{Ramasubramony~Sulochana Lekshmi}, {and} \bibinfo{person}{Rethnasamy Nadarajan}.} \bibinfo{year}{2016}\natexlab{}.
\newblock \showarticletitle{Vertex cover-based binary tree algorithm to detect all maximum common induced subgraphs in large communication networks}.
\newblock \bibinfo{journal}{\emph{Knowledge and Information Systems}}  \bibinfo{volume}{48} (\bibinfo{year}{2016}), \bibinfo{pages}{229--252}.
\newblock


\bibitem[Park and Reeves(2011)]%
        {park2011deriving}
\bibfield{author}{\bibinfo{person}{Younghee Park} {and} \bibinfo{person}{Douglas Reeves}.} \bibinfo{year}{2011}\natexlab{}.
\newblock \showarticletitle{Deriving common malware behavior through graph clustering}. In \bibinfo{booktitle}{\emph{Proceedings of the 6th ACM Symposium on Information, Computer and Communications Security}}. \bibinfo{pages}{497--502}.
\newblock


\bibitem[Piao et~al\mbox{.}(2023)]%
        {piao2023computing}
\bibfield{author}{\bibinfo{person}{Chengzhi Piao}, \bibinfo{person}{Tingyang Xu}, \bibinfo{person}{Xiangguo Sun}, \bibinfo{person}{Yu Rong}, \bibinfo{person}{Kangfei Zhao}, {and} \bibinfo{person}{Hong Cheng}.} \bibinfo{year}{2023}\natexlab{}.
\newblock \showarticletitle{Computing graph edit distance via neural graph matching}.
\newblock \bibinfo{journal}{\emph{Proceedings of the VLDB Endowment}} \bibinfo{volume}{16}, \bibinfo{number}{8} (\bibinfo{year}{2023}), \bibinfo{pages}{1817--1829}.
\newblock


\bibitem[Rutgers et~al\mbox{.}(2010)]%
        {rutgers2010approximate}
\bibfield{author}{\bibinfo{person}{Jochem~H Rutgers}, \bibinfo{person}{Pascal~T Wolkotte}, \bibinfo{person}{Philip~KF H{\"o}lzenspies}, \bibinfo{person}{Jan Kuper}, {and} \bibinfo{person}{Gerard~JM Smit}.} \bibinfo{year}{2010}\natexlab{}.
\newblock \showarticletitle{An approximate maximum common subgraph algorithm for large digital circuits}. In \bibinfo{booktitle}{\emph{2010 13th Euromicro Conference on Digital System Design: Architectures, Methods and Tools}}. IEEE, \bibinfo{pages}{699--705}.
\newblock


\bibitem[Schmidt et~al\mbox{.}(2020)]%
        {schmidt2020disconnected}
\bibfield{author}{\bibinfo{person}{Robert Schmidt}, \bibinfo{person}{Florian Krull}, \bibinfo{person}{Anna~Lina Heinzke}, {and} \bibinfo{person}{Matthias Rarey}.} \bibinfo{year}{2020}\natexlab{}.
\newblock \showarticletitle{Disconnected maximum common substructures under constraints}.
\newblock \bibinfo{journal}{\emph{Journal of Chemical Information and Modeling}} \bibinfo{volume}{61}, \bibinfo{number}{1} (\bibinfo{year}{2020}), \bibinfo{pages}{167--178}.
\newblock


\bibitem[Shang et~al\mbox{.}(2008)]%
        {shang2008taming}
\bibfield{author}{\bibinfo{person}{Haichuan Shang}, \bibinfo{person}{Ying Zhang}, \bibinfo{person}{Xuemin Lin}, {and} \bibinfo{person}{Jeffrey~Xu Yu}.} \bibinfo{year}{2008}\natexlab{}.
\newblock \showarticletitle{Taming verification hardness: an efficient algorithm for testing subgraph isomorphism}.
\newblock \bibinfo{journal}{\emph{Proceedings of the VLDB Endowment}} \bibinfo{volume}{1}, \bibinfo{number}{1} (\bibinfo{year}{2008}), \bibinfo{pages}{364--375}.
\newblock


\bibitem[Solnon et~al\mbox{.}(2015)]%
        {solnon2015complexity}
\bibfield{author}{\bibinfo{person}{Christine Solnon}, \bibinfo{person}{Guillaume Damiand}, \bibinfo{person}{Colin De~La~Higuera}, {and} \bibinfo{person}{Jean-Christophe Janodet}.} \bibinfo{year}{2015}\natexlab{}.
\newblock \showarticletitle{On the complexity of submap isomorphism and maximum common submap problems}.
\newblock \bibinfo{journal}{\emph{Pattern Recognition}} \bibinfo{volume}{48}, \bibinfo{number}{2} (\bibinfo{year}{2015}), \bibinfo{pages}{302--316}.
\newblock


\bibitem[Sun and Luo(2020)]%
        {sun2020subgraph}
\bibfield{author}{\bibinfo{person}{Shixuan Sun} {and} \bibinfo{person}{Qiong Luo}.} \bibinfo{year}{2020}\natexlab{}.
\newblock \showarticletitle{Subgraph matching with effective matching order and indexing}.
\newblock \bibinfo{journal}{\emph{IEEE Transactions on Knowledge and Data Engineering}} \bibinfo{volume}{34}, \bibinfo{number}{1} (\bibinfo{year}{2020}), \bibinfo{pages}{491--505}.
\newblock


\bibitem[Sun et~al\mbox{.}(2020)]%
        {sun2020rapidmatch}
\bibfield{author}{\bibinfo{person}{Shixuan Sun}, \bibinfo{person}{Xibo Sun}, \bibinfo{person}{Yulin Che}, \bibinfo{person}{Qiong Luo}, {and} \bibinfo{person}{Bingsheng He}.} \bibinfo{year}{2020}\natexlab{}.
\newblock \showarticletitle{Rapidmatch: A holistic approach to subgraph query processing}.
\newblock \bibinfo{journal}{\emph{Proceedings of the VLDB Endowment}} \bibinfo{volume}{14}, \bibinfo{number}{2} (\bibinfo{year}{2020}), \bibinfo{pages}{176--188}.
\newblock


\bibitem[Sun and Luo(2023)]%
        {sun2023efficient}
\bibfield{author}{\bibinfo{person}{Xibo Sun} {and} \bibinfo{person}{Qiong Luo}.} \bibinfo{year}{2023}\natexlab{}.
\newblock \showarticletitle{Efficient GPU-Accelerated Subgraph Matching}.
\newblock \bibinfo{journal}{\emph{Proceedings of the ACM on Management of Data}} \bibinfo{volume}{1}, \bibinfo{number}{2} (\bibinfo{year}{2023}), \bibinfo{pages}{1--26}.
\newblock


\bibitem[Sun et~al\mbox{.}(2021)]%
        {sun2021effective}
\bibfield{author}{\bibinfo{person}{Yi Sun}, \bibinfo{person}{Ali~Kashif Bashir}, \bibinfo{person}{Usman Tariq}, {and} \bibinfo{person}{Fei Xiao}.} \bibinfo{year}{2021}\natexlab{}.
\newblock \showarticletitle{Effective malware detection scheme based on classified behavior graph in IIoT}.
\newblock \bibinfo{journal}{\emph{Ad Hoc Networks}}  \bibinfo{volume}{120} (\bibinfo{year}{2021}), \bibinfo{pages}{102558}.
\newblock


\bibitem[Suters et~al\mbox{.}(2005)]%
        {suters2005new}
\bibfield{author}{\bibinfo{person}{W~Henry Suters}, \bibinfo{person}{Faisal~N Abu-Khzam}, \bibinfo{person}{Yun Zhang}, \bibinfo{person}{Christopher~T Symons}, \bibinfo{person}{Nagiza~F Samatova}, {and} \bibinfo{person}{Michael~A Langston}.} \bibinfo{year}{2005}\natexlab{}.
\newblock \showarticletitle{A new approach and faster exact methods for the maximum common subgraph problem}. In \bibinfo{booktitle}{\emph{Computing and Combinatorics: COCOON 2005}}. Springer, \bibinfo{pages}{717--727}.
\newblock


\bibitem[Ullmann(1976)]%
        {ullmann1976algorithm}
\bibfield{author}{\bibinfo{person}{Julian~R Ullmann}.} \bibinfo{year}{1976}\natexlab{}.
\newblock \showarticletitle{An algorithm for subgraph isomorphism}.
\newblock \bibinfo{journal}{\emph{Journal of the ACM (JACM)}} \bibinfo{volume}{23}, \bibinfo{number}{1} (\bibinfo{year}{1976}), \bibinfo{pages}{31--42}.
\newblock


\bibitem[Vismara and Valery(2008)]%
        {vismara2008finding}
\bibfield{author}{\bibinfo{person}{Philippe Vismara} {and} \bibinfo{person}{Beno{\^\i}t Valery}.} \bibinfo{year}{2008}\natexlab{}.
\newblock \showarticletitle{Finding maximum common connected subgraphs using clique detection or constraint satisfaction algorithms}. In \bibinfo{booktitle}{\emph{International Conference on Modelling, Computation and Optimization in Information Systems and Management Sciences}}. Springer, \bibinfo{pages}{358--368}.
\newblock


\bibitem[Xiao et~al\mbox{.}(2009)]%
        {xiao2009generative}
\bibfield{author}{\bibinfo{person}{Bai Xiao}, \bibinfo{person}{Edwin~R Hancock}, {and} \bibinfo{person}{Richard~C Wilson}.} \bibinfo{year}{2009}\natexlab{}.
\newblock \showarticletitle{A generative model for graph matching and embedding}.
\newblock \bibinfo{journal}{\emph{Computer Vision and Image Understanding}} \bibinfo{volume}{113}, \bibinfo{number}{7} (\bibinfo{year}{2009}), \bibinfo{pages}{777--789}.
\newblock


\bibitem[Yan et~al\mbox{.}(2005)]%
        {yan2005substructure}
\bibfield{author}{\bibinfo{person}{Xifeng Yan}, \bibinfo{person}{Philip~S Yu}, {and} \bibinfo{person}{Jiawei Han}.} \bibinfo{year}{2005}\natexlab{}.
\newblock \showarticletitle{Substructure similarity search in graph databases}. In \bibinfo{booktitle}{\emph{Proceedings of the 2005 ACM SIGMOD international conference on Management of data}}. \bibinfo{pages}{766--777}.
\newblock


\bibitem[Yang et~al\mbox{.}(2023)]%
        {yang2023structural}
\bibfield{author}{\bibinfo{person}{Dominic Yang}, \bibinfo{person}{Yurun Ge}, \bibinfo{person}{Thien Nguyen}, \bibinfo{person}{Denali Molitor}, \bibinfo{person}{Jacob~D Moorman}, {and} \bibinfo{person}{Andrea~L Bertozzi}.} \bibinfo{year}{2023}\natexlab{}.
\newblock \showarticletitle{Structural Equivalence in Subgraph Matching}.
\newblock \bibinfo{journal}{\emph{IEEE Transactions on Network Science and Engineering}} (\bibinfo{year}{2023}).
\newblock


\bibitem[Zanfir and Sminchisescu(2018)]%
        {zanfir2018deep}
\bibfield{author}{\bibinfo{person}{Andrei Zanfir} {and} \bibinfo{person}{Cristian Sminchisescu}.} \bibinfo{year}{2018}\natexlab{}.
\newblock \showarticletitle{Deep learning of graph matching}. In \bibinfo{booktitle}{\emph{Proceedings of the IEEE conference on computer vision and pattern recognition}}. \bibinfo{pages}{2684--2693}.
\newblock


\bibitem[Zhou et~al\mbox{.}(2022)]%
        {zhoustrengthened}
\bibfield{author}{\bibinfo{person}{Jianrong Zhou}, \bibinfo{person}{Kun He}, \bibinfo{person}{Jiongzhi Zheng}, \bibinfo{person}{Chu-Min Li}, {and} \bibinfo{person}{Yanli Liu}.} \bibinfo{year}{2022}\natexlab{}.
\newblock \showarticletitle{A Strengthened Branch and Bound Algorithm for the Maximum Common (Connected) Subgraph Problem}. In \bibinfo{booktitle}{\emph{Proceedings of the Thirty-First International Joint Conference on Artificial Intelligence, {IJCAI} 2022}}. \bibinfo{pages}{1908--1914}.
\newblock


\end{thebibliography}
\clearpage
\appendix
\section{Additional Experimental Results}
{\Yui
In this section, we provide additional experimental results, including \emph{comparison by varying similarities} and \emph{comparison among various reductions}.

\smallskip
\noindent\textbf{Varying the similarities of two input graphs (additional results)}. We test different problem instances as the similarity varies on \textsf{CV} and \textsf{PR}. We remark that all tested problem instances in \textsf{CV} (resp. \textsf{PR}) have their similarities vary from 0.9 to 1 (resp. equal to 1). We report the average running time in Figure~\ref{fig:appendix_all_vary_S}(a) and (b) and the average number of formed branches in Figure~\ref{fig:appendix_all_vary_S}(c) and (d). The results show the similar clues to those on \textsf{BI} and \textsf{LV}. In specific, our \texttt{RRSplit} runs around 5$\times$-10$\times$ faster and forms fewer branches than \texttt{McSplitDAL}.

\smallskip
\noindent\textbf{Varying different reductions (additional results)}. We compare \texttt{RRSplit} with three variants, namely \texttt{RRSplit-VE}, \texttt{RRSplit-MB} and \texttt{RRSplit-UB}, on \textsf{CV} and \textsf{PR}. We report the number of solved problem instances in Figure~\ref{fig:appendix_all_vary_R} (a) and (b) for varying the time limit and in Figure~\ref{fig:appendix_all_vary_R} (c) and (d) for varying the limit of number of formed branches. The results on \textsf{CV} and \textsf{PR} show similar trends to those on \textsf{BI} and \textsf{LV}. First, we can see that all four algorithms performs better than \texttt{McSplitDAL}, among which \texttt{RRSplit} performs the best. This indicates the effectiveness of the proposed vertex-equivalence based reductions and maximality based reductions. Second, we note that \texttt{RRSplit-MB} and \texttt{RRSplit-VE} achieve the comparable performance. 
}
\begin{figure}[]
		\subfigure[\textsf{Running time (CV)}]{
			\includegraphics[width=4.0cm]{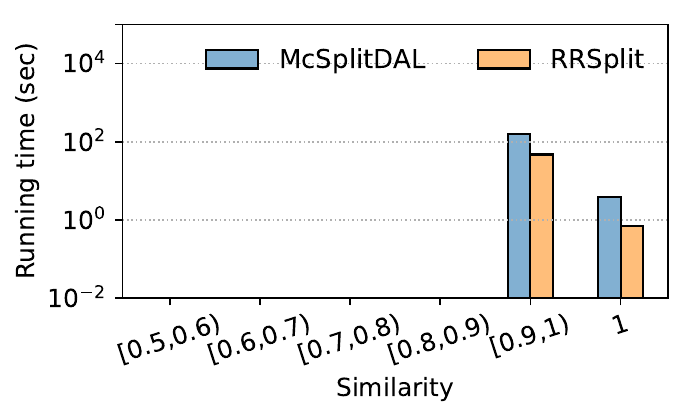}
		}	
		\subfigure[\textsf{Running time (PR)}]{
			\includegraphics[width=4.0cm]{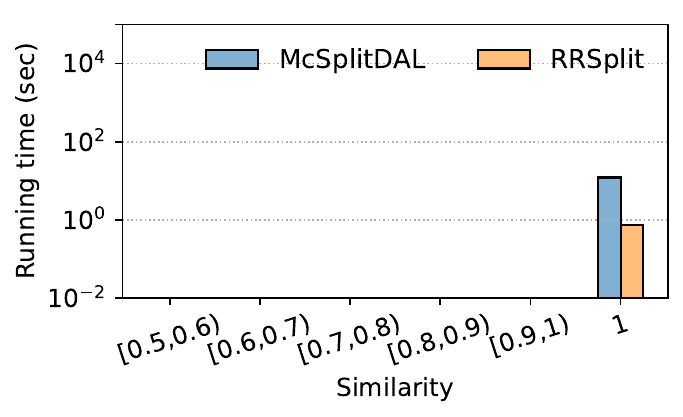}
		}
        \subfigure[\textsf{\# of branches (CV)}]{
			\includegraphics[width=4.0cm]{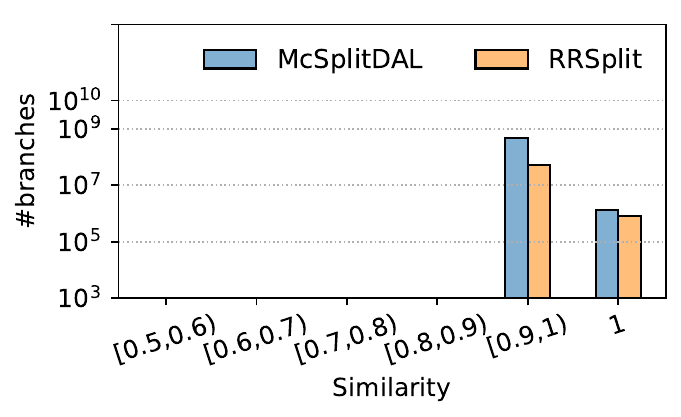}
		}	
		\subfigure[\textsf{\# of branches (PR)}]{
			\includegraphics[width=4.0cm]{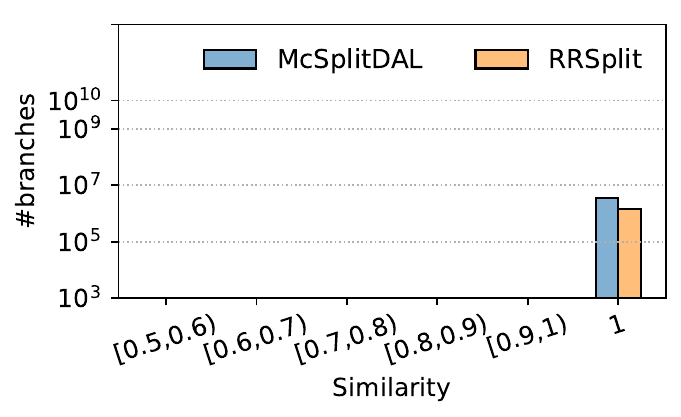}
		}
	\caption{Comparison by varying similarities (additional results)}
	\label{fig:appendix_all_vary_S}
\end{figure}

\begin{figure}[]
		\subfigure[\textsf{Varying time limits (CV)}]{
			\includegraphics[width=4.0cm]{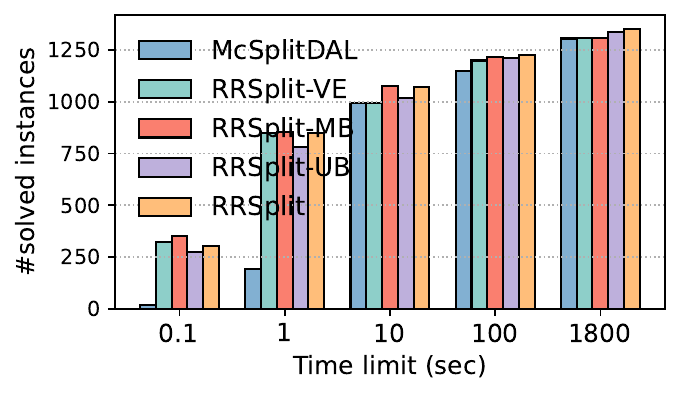}
		}	
		\subfigure[\textsf{Varying time limits (PR)}]{
			\includegraphics[width=4.0cm]{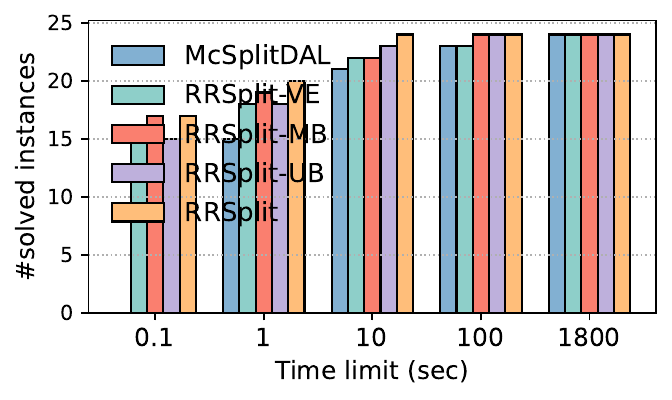}
		}
        \subfigure[\textsf{Varying limit of \#branches (CV)}]{
			\includegraphics[width=4.0cm]{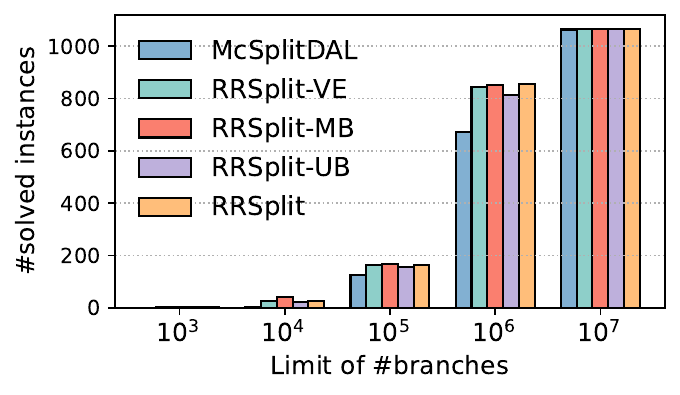}
		}	
		\subfigure[\textsf{Varying limit of \#branches (PR)}]{
			\includegraphics[width=4.0cm]{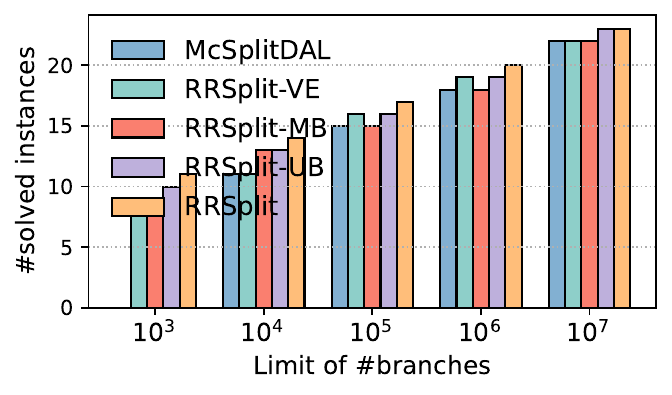}
		}
	\caption{Comparison among various reductions (additional results)}
	\label{fig:appendix_all_vary_R}
\end{figure}

\section{Additional Proofs}

\smallskip
\noindent\textbf{Lemma~\ref{lemma:reduction_for_VE1}}. \emph{
    Let $(S,C,D)$ be a branch. Common subgraph $S_{iso}$ at Equation~(\ref{eq:iso1}) has been found before the formation of $(S,C,D)$.}

\begin{proof}
    We note that the recursive branching process forms a recursion tree where each tree node corresponding to a branch. Consider the path from the initial branch $(\emptyset, V_Q\times V_G,\emptyset)$ to $(S,C,D)$, there exists an ascendant branch of $(S,C,D)$, denoted by $B_{asc}=(S_{asc},C_{asc},D_{asc})$, where $u_{equ}$ is selected as the branching vertex, since $\langle u_{equ},\phi(u_{equ}) \rangle$ is in $S$.
    We can see that there exists one sub-branch $B'_{asc}=(S'_{asc},C'_{asc},D'_{asc})$ of $B_{asc}$ formed by including $\langle u_{equ},v \rangle$, and all common subgraphs within $B'_{asc}$ has been found before the formation of $(S,C,D)$, since $\langle u_{equ},v \rangle$ is in $D$. We then show that common subgraph $S_{iso}$ can be found within $B_{asc}'$, i.e., $S'_{asc}\subseteq S_{iso}\subseteq S'_{asc}\cup C'_{asc}$. 
    \underline{First}, we have $S'_{asc}\subseteq S_{iso}$ since (1) $S'_{asc}=S_{asc}\cup\{\langle u_{equ},v \rangle\}$, (2) $S_{sub}$ is a common subgraph in $B_{asc}$ and thus $S_{asc}\subseteq S_{sub}$, (3) $S_{asc}$ does not include $\langle u_{equ},\phi(u_{equ})\rangle$ or $\langle u,v\rangle$ since they are in $C_{asc}$ and will be included to the partial solution at $B_{asc}'$ and $(S\cup \{\langle u,v \rangle\}, C\backslash u\backslash v)$, and thus (4) by combining all the above, we have  $S'_{asc}=S_{asc}\cup\{\langle u_{equ},v \rangle\}\subseteq S_{sub}\cup\{\langle u_{equ},v \rangle\} \backslash \{\langle u_{equ},\phi(u_{equ})\rangle,\langle u,v\rangle\}\subseteq S_{iso}$. 
    \underline{Second}, we have $S_{iso}\subseteq S'_{asc}\cup C'_{asc}$ based on the following two facts. 

    \begin{itemize}
        \item \textbf{Fact 1.} $S_{sub}\backslash\{\langle u_{equ},\phi(u_{equ})\rangle,\langle u,v\rangle\}\subseteq S'_{asc}\cup C'_{asc}$.
        \item \textbf{Fact 2.} $\langle u_{equ},v \rangle\in S'_{asc}$ and $\langle u,\phi(u_{equ}) \rangle\in C'_{asc}$.
    \end{itemize}
    
    Fact 1 holds since (1) $S_{sub}\subseteq S_{asc}\cup C_{asc}$ (note that $S_{sub}$ is a common subgraph in $B_{asc}$), (2) vertices $u_{equ}$ and $v$ do not appear in $S_{sub}\backslash\{\langle u_{equ},\phi(u_{equ})\rangle,\langle u,v\rangle\}$ and thus we can derive that $S_{sub}\backslash\{\langle u_{equ},\phi(u_{equ})\rangle,\langle u,v\rangle\}\subseteq (S_{asc}\cup C_{asc})\backslash u_{equ}\backslash v$ (note that $\langle u_{equ},\phi(u_{equ})\rangle$ and $\langle u,v\rangle$ are the unique vertex pairs that consist of $u_{equ}$ and $v$  in $S_{sub}$, respectively), and (3) $S'_{asc}=S_{asc}\cup \{\langle u_{equ},v\rangle\}$ and $C'_{asc}=C_{asc}\backslash u_{equ}\backslash v$ based on the branching rule.

    Fact 2 can be verified as follows. Vertex pair $\langle u_{equ},v \rangle$ is in $S'_{asc}$ since $S'_{asc}=S_{asc}\cup \{\langle u_{equ},v\rangle\}$. We note that vertices $u_{equ}$, $\phi(u_{equ})$, $u$ and $v$ appear in $C_{asc}$ since $\langle u_{equ},\phi(u_{equ})\rangle$ and $\langle u,v\rangle$ are in $C_{asc}$ discussed before. Let $C_{asc}=X_1\times Y_1 \cup  X_2\times Y_2\cup \cdots \cup  X_c\times Y_c$ where $c$ is a positive integer. 
    It is no hard to verify that, for two vertices $u$ and $u'$ (resp. $v$ and $v'$) appearing in $C_{asc}$, $u$ and $u'$ are in the same subset $X_i$ (resp. $Y_i$) of $C_{asc}$ with $1\leq i\leq c$ if and only if $u$ and $u'$ (resp. $v$ and $v'$) have the same set of neighbours and non-neighbours in $q$ (resp. $g$) according to Equation~(\ref{eq:update_candidate_set}).
    %
    Besides, we have $X_i\cap X_j=\emptyset$ and $Y_i\cap Y_j=\emptyset$ for $1\leq i\neq j \leq c$ as discussed before. 
    Based on the above, we assume that $u_{equ}$ appears in a subset $X_i,Y_i$ of $C_{asc}$ where $u_{equ}$ is in $X_i$ and $1\leq i\leq c$. We can deduce that $u$ is in $X_i$ since $u_{equ}$ and $u$ are structurally equivalent and thus they have the same set of neighbours and non-neighbours in $q$. Besides, we can deduce that vertices $\phi(u_{equ})$ and $v$ are in $Y_i$ since (1) $\langle u_{equ},\phi(u_{equ})\rangle$ and $\langle u,v\rangle$ are in $C_{asc}$, (2) $u$ and $v$ are in exactly one subset $X_i$, and thus (3) they must appear in $X_i\times Y_i$.   
\end{proof}

\smallskip
\noindent\textbf{Lemma~\ref{lemma:reduction_for_VE2}}. \emph{Let $(S,C,D)$ be a branch where $u$ is selected as the branching vertex. Common subgraph $S_{iso}$ {\chengC defined in} Equation~(\ref{eq:iso2}) has been found before the formation of $(S,C\backslash u,D)$ at the second group.  }
\begin{proof}
    \underline{First}, we note that $\langle u_{equ},\phi_{iso}(u_{equ}) \rangle$ is in $C\backslash u$ and also in $C$ since otherwise $S_{sub}$ cannot include $\langle u_{equ},\phi_{iso}(u_{equ}) \rangle$. This is because (1) $S\subseteq S_{sub}\subseteq S\cup C\backslash u$ since $S_{sub}$ is a common subgraph in the sub-branch $(S,C\backslash u,D)$ and (2) $S$ does not include $\langle u_{equ},\phi_{iso}(u_{equ}) \rangle$ since $u_{equ}$ appears in $C\backslash u$.
    \underline{Second}, we note that $\phi_{iso}(u_{equ})$ is in $Y$. Recall that $X\times Y$ is the branching set at $(S,C,D)$. This is because (1) $u_{equ}$ is in the same subset $X$ as $u$ since $u_{equ}$ and $u$ are structurally equivalent and thus have the same set of neighbours and non-neighbours in $q$, and (2) $\langle u_{equ},\phi_{iso}(u_{equ}) \rangle$ is in $C$ as discussed before.
    \underline{Third}, we can derive that there exists a sub-branch $(S\cup \{\langle u,\phi_{iso}(u_{equ}) \rangle\},C\backslash u\backslash \phi_{iso}(u_{equ}),D')$, which is formed at branch $(S,C,D)$  by including $\langle u,\phi_{iso}(u_{equ}) \rangle$ before the formation of $(S,C\backslash u,D)$, since $\phi_{iso}(u_{equ})\in Y$.
    \underline{Forth}, we show that $S_{iso}$ is in $(S\cup \{\langle u,\phi_{iso}(u_{equ}) \rangle\},C\backslash u\backslash \phi_{iso}(u_{equ}),D')$, formally, $S\cup \{\langle u,\phi_{iso}(u_{equ}) \rangle\} \subseteq S_{iso}\subseteq S\cup \{\langle u,\phi_{iso}(u_{equ}) \rangle\}\cup (C\backslash u\backslash \phi_{iso}(u_{equ}))$.
    We have $S\subseteq S_{sub}\subseteq S\cup (C\backslash u)$ since $S_{sub}$ is a common subgraph in $(S,C\backslash u,D)$. Let $S'=S\cup \{\langle u,\phi_{iso}(u_{equ}) \rangle\}$, it can be proved as below.
    \begin{eqnarray}
        && S\subseteq S_{sub}\subseteq S\cup (C\backslash u)\\
        && \Rightarrow S\subseteq S_{sub}\backslash\{\langle u_{equ}, \phi_{iso}(u_{equ})\rangle\}\subseteq S\cup (C\backslash u\backslash \phi_{iso}(u_{equ})) \label{eq:llemma_eq_1}\\
        && \Rightarrow S' \subseteq S_{iso}\subseteq S'\cup (C\backslash u\backslash \phi_{iso}(u_{equ})) \label{eq:llemma_eq_2}
    \end{eqnarray}
    Note that Equation~(\ref{eq:llemma_eq_1}) holds since $\langle u_{equ},\phi_{iso}(u_{equ}) \rangle$ is in $S$; Equation~(\ref{eq:llemma_eq_2}) is derived by including the vertex pair $\langle u, \phi_{iso}(u_{equ})\rangle$.
\end{proof}

\smallskip
\noindent\textbf{Lemma~\ref{lemma:maximality}} \emph{Let $B=(S,C,D)$ be a branch {\chengC and $\langle u,v \rangle$ be a candidate vertex pair that satisfies the condition in Equation~(\ref{eq:condition}).} There exists one largest common subgraph $S_{opt}$ in the branch $B$ such that $S_{opt}$ contains 
    {\chengC $\langle u,v \rangle$.}}
\begin{proof}
    This can be proved by construction. Let $S^*=(q^*,g^*,\phi^*)$ be one largest common subgraph to be found in $B$. Note that if $S^*$ contains the candidate vertex pair $\langle u,v \rangle$, we can finish the proof by constructing $S_{opt}$ as $S^*$. Otherwise, if $\langle u,v \rangle$ is not in $S^*$, we prove the correctness by constructing one largest common subgraph $S_{opt}$ to be found in $B$ that contains candidate vertex pair $\langle u,v \rangle$, i.e., $S\subseteq S_{opt} \subseteq S\cup C$,  $|S_{opt}|=|S^*|$ and $\langle u,v \rangle\in S_{opt}$.
    In general, there are four different cases.

    \smallskip
    \noindent\underline{\textbf{Case 1:}} $u\notin V_{q^*}$ and $v\in V_{g^*}$. In this case, there exists a vertex pair $\langle\phi^{*-1}(v),v \rangle$ in $S^*$ where $\phi^{*-1}$ is the inverse of $\phi^*$. We construct $S_{opt}$ by replacing the vertex pair $\langle\phi^{*-1}(v),v \rangle$ with $\langle u,v \rangle$, i.e.,
    \begin{equation}
        S_{opt}=S^*\backslash\{ \langle\phi^{*-1}(v),v \rangle\} \cup \{\langle u,v \rangle\}.
    \end{equation}
    Clearly, we have $S\subseteq S_{opt}\subseteq S\cup C$ (i.e., $S_{opt}$ is in $B$) since $S^*$ is in $B$ and $\langle u,v\rangle$ is in the candidate set $C$. Besides, we have $|S_{opt}|=|S^*|$ and $\langle u,v \rangle\in S_{opt}$ based on the above construction. Finally, we deduce that $S_{opt}$ is a common subgraph by showing that any two vertex pairs in $S_{opt}$ satisfy Equation~(\ref{eq:isomorphic}), i.e., $g_{opt}$ is isomorphic to $q_{opt}$ under the bijection $\phi_{opt}$. \underline{First}, $S^*\backslash\{\langle \phi^{*-1}(v),v\rangle\}$, as a subset of $S^*$, is a common subgraph and thus has any two vertex pairs inside satisfying Equation~(\ref{eq:isomorphic}) (note that any subset of a common subgraph is still a common subgraph); \underline{Second}, for each pair $\langle u',v' \rangle$ in $S$, $u$ is adjacent to $u'$ if and only if $v$ is adjacent to $v'$ (since $\langle u,v \rangle$ is a candidate pair which can form a common subgraph with $S$); \underline{Third}, for each pair $\langle u',v' \rangle$ in $S_{opt}\backslash S\backslash\{\langle \phi^{*-1}(v),v\rangle\}$, it is clear that $\langle u',v' \rangle$ is in one subset $ X\times Y$ of $\mathcal{P}(C)$ and thus $u$ is adjacent to $u'$ if and only if $v$ is adjacent to $v'$ based on Equation~(\ref{eq:condition}). Therefore, any two vertex pairs in $S_{opt}$ will satisfy the Equation~(\ref{eq:isomorphic}).

    \smallskip
    \noindent\underline{\textbf{Case 2:}} $u\in V_{q^*}$ and $v\notin V_{g^*}$. There exists a vertex pair $\langle u,\phi^*(u) \rangle$ in $S^*$. We construct $S_{opt}$ by replacing $\langle u,\phi^*(u) \rangle$ with $\langle u,v \rangle$, i.e., $S_{opt}=S^*\backslash \{\langle u,\phi^*(u) \rangle\}\cup\{\langle u,v \rangle\}$. Similar to Case 1, we can prove that $S_{opt}$ includes $\langle u,v \rangle$ and is one largest common subgraph to be found in $B$. 

    \smallskip
    \noindent\underline{\textbf{Case 3:}} $u\in V_{q^*}$ and $v\in V_{g^*}$. There exists two distinct vertex pairs $\langle u,\phi^*(u) \rangle$ and $\langle \phi^{*-1}(v),v \rangle$ in $S^*$. We construct $S_{opt}$ by replacing these two vertex pairs with $\langle \phi^{*-1}(v),\phi(u) \rangle$ and $\langle u,v \rangle$, formally,
    \begin{equation}
        S_{opt}\!\!=\!\!S^*\backslash\{\langle u,\phi^*(u) \rangle,\!\langle\phi^{*-1}(v),v \rangle\}\!\cup\!\{\langle \phi^{*-1}(v),\phi^*(u) \rangle,\!\langle u,v \rangle\}.
    \end{equation}
    Clearly, we have $S\subseteq S_{opt}\subseteq S\cup C$ (i.e., $S_{opt}$ is in $B$), $|S_{opt}|=|S^*|$ and $\langle u,v \rangle\in S_{opt}$ based on the above construction. We then deduce that $S_{opt}$ is a common subgraph  by showing that any two vertex pairs in $S_{opt}$ satisfy Equation~(\ref{eq:isomorphic}).
    \underline{First}, $S^*\backslash\{\langle u,\phi^*(u) \rangle,\langle \phi^{*-1}(v),v\rangle\}$, as a subset of $S^*$, is a common subgraph and thus has any two vertex pairs inside satisfying Equation~(\ref{eq:isomorphic});
    \underline{Second}, consider a vertex pair $\langle u',v' \rangle$ in $S^*\backslash\{\langle u,\phi^*(u) \rangle,\langle \phi^{*-1}(v),v\rangle\}$. Similar to Case 1, we can prove that $u$ is adjacent to $u'$ if and only if $v$ is adjacent to $v'$. Besides, we show that $\phi^{*-1}(v)$ is adjacent to $u'$ if and only if $\phi(u)$ is adjacent to $v'$ since (1) $(\phi^{*-1}(v),u')\in E_Q\Leftrightarrow (v,v')\in E_G$ and $(u,u')\in E_Q\Leftrightarrow (\phi^*(u),v')\in E_G$ (since the common subgraph $S^*$ contains $\{\langle u,\phi^*(u) \rangle,\langle \phi^{*-1}(v),v\rangle\}$), (2) $ (v,v')\in E_G \Leftrightarrow (u,u')\in E_Q$ (as we shown above), and thus (3) they can be combined as $(\phi^{*-1}(v),u')\in E_Q\Leftrightarrow (v,v')\in E_G \Leftrightarrow (u,u')\in E_Q \Leftrightarrow (\phi^*(u),v')\in E_G$.
    %
    \underline{Third}, we have $(u,\phi^{*-1}(v))\in E_Q\Leftrightarrow (v,\phi^*(u))\in E_G$ since the common subgraph $S^*$ contains $\{\langle u,\phi^*(u) \rangle,\langle \phi^{*-1}(v),v\rangle\}$ and thus $(u,\phi^{*-1}(v))\in E_Q\Leftrightarrow (\phi^*(u),v)\in E_G$ (note that $(\phi^*(u),v)$ refers to the same edge as $(v,\phi^*(u))$ since the graphs $Q$ and $G$ are undirected). Therefore, any two vertex pairs in $R_{opt}$ will satisfy Equation~(\ref{eq:isomorphic}).

    \smallskip
    \noindent\underline{\textbf{Case 4:}} $u\notin V_{q^*}$ and $v\notin V_{g^*}$. We note that this case will not occur since otherwise the contradiction is derived by showing that $S^*\cup \{\langle u,v\rangle\}$ is a larger common subgraph (note that the proof is similar to Case 1 and thus be omitted).
\end{proof}

\smallskip
\noindent\textbf{Fact}. \emph{Let $(S,C,D)$ be a branch where $D=\{u_1\}\times A_1 \cup \{u_2\} \times A_2 \cup ... \cup \{u_d\}\times A_d$ and $d$ is a positive integer. For $1\leq i\neq j \leq d$, if $u_i$ and $u_j$ are structurally equivalent, we have $A_i\cap A_j = \emptyset$.}
\begin{proof}
    This can be proved by contradiction. Assume that there exists a vertex $v$ in $A_i\cap A_j$. Clearly, $u_i$ and $u_j$ are in $S$. Consider the path from the initial branch $(\emptyset,V_Q\times V_G,\emptyset)$ to the branch $(S,C,D)$ in the recursion tree. Without loss of the generality, suppose that $u_i$ is selected as the branching vertex before $u_j$ in the path. Hence, consider the ascendant branch of $(S,C,D)$, denoted by $B_{asc}=(S_{asc},C_{asc},D_{asc})$, where $u_j$ is selected as the branching vertex. We can easily deduce that $\langle u_i,v \rangle$ is in $D_{asc}$, vertex $u_j$ does not appear in $D_{asc}$ and $\langle u_j,v \rangle$ is in $C_{asc}$. For a sub-branch of $B_{asc}$ which is formed by including $\langle u_j,v \rangle$, i.e., $(S_{asc}\cup\{\langle u_j,v \rangle\}, C\backslash u_j\backslash v)$, it can be pruned by the proposed reduction at the first group since there exists a vertex pair $\langle u_i,v \rangle$ in $D_{asc}$ such that $u_i\in \Psi (u_j)$. As a result, $\langle u_j,v \rangle$ will not be included to $D$ according to the maintenance of the exclusion set, which leads to the contradiction.
\end{proof}

\end{document}